\documentclass[10pt,journal,compsoc,twoside]{IEEEtran}
\usepackage{graphicx}
\usepackage{balance}
\usepackage{comment}
\usepackage{url}
\usepackage{epsfig,tabularx,amssymb,amsmath}
\usepackage[noend]{algorithmic}
\usepackage{algorithm}
\usepackage{subfigure,multirow,graphicx}
\usepackage{setspace}
\usepackage{amsthm}
\usepackage{color} 
\definecolor{shadecolor}{gray}{0.85}
\usepackage{diagbox}
\usepackage{array}
\newcolumntype{L}[1]{>{\vspace{0.5em}\begin{minipage}{#1}\raggedright\let\newline\\
\arraybackslash\hspace{0pt}}m{#1}<{\end{minipage}\vspace{0.5em}}}
\newcolumntype{R}[1]{>{\vspace{0.5em}\begin{minipage}{#1}\raggedleft\let\newline\\
\arraybackslash\hspace{0pt}}m{#1}<{\end{minipage}\vspace{0.5em}}}
\newcolumntype{C}[1]{>{\vspace{0.5em}\begin{minipage}{#1}\centering\let\newline\\
\arraybackslash\hspace{0pt}}m{#1}<{\end{minipage}\vspace{0.5em}}}

\newtheorem{lemma}{Lemma}
\floatname{algorithm}{Procedure}

\newcommand{\algorithmicbreak}{\textbf{break}}
\newcommand{\algorithmiccontinue}{\textbf{continue}}

\newcommand{\BREAK}{\STATE{\algorithmicbreak}}
\newcommand{\CONTINUE}{\STATE{\algorithmiccontinue}}
\newtheorem{theorem}{Theorem}

\usepackage{caption}
\captionsetup{format=plain, labelfont=bf, justification=raggedright, singlelinecheck=off }
\usepackage{fancyhdr}

%
\ifCLASSINFOpdf
\else
\fi

\ifCLASSOPTIONcompsoc
  \usepackage[nocompress]{cite}
\else
  \usepackage{cite}
\fi

\hyphenation{op-tical net-works semi-conduc-tor}
%

\begin{document}
%
\title{Triangle Extension: Efficient Localizability Detection in Wireless Sensor Networks}

\author{
    Hejun~Wu~\IEEEmembership{Member~IEEE,} Ao~Ding, Weiwei~Liu, Lvzhou~Li
    and~Zheng~Yang~\IEEEmembership{Member~IEEE}

\IEEEcompsocitemizethanks{
\IEEEcompsocthanksitem Hejun~Wu (Corresponding author), Ao~Ding and Lvzhou~Li are with Guangdong Key Laboratory of Big Data Analysis and Processing, Department
of Computer Science, Sun Yat-sen University, Guangzhou, China.
~~E-mail: wuhejun@mail.sysu.edu.cn, dingao@mail2.sysu.edu.cn, lilvzh@mail.sysu.edu.cn \protect

\IEEEcompsocthanksitem Weiwei~Liu is with Horizon Robotics, Beijing, China.~~E-mail: weiwei.liu@hobot.cc \protect
\IEEEcompsocthanksitem Zheng~Yang is with the School of Software,
Tsinghua National Laboratory for Information Science and Technology,
Tsinghua University, Beijing, China.
~~E-mail: yangzheng@tsinghua.edu.cn \protect\\
}
\thanks{Manuscript received  February 23, 2017; revised August 28, 2015.}}

\pagestyle{fancy}
\fancyhead[L]{Wu \MakeLowercase{\textit{et al.}}: Triangle Extension: Efficient Localizability Detection in Wireless Sensor Networks}

\IEEEtitleabstractindextext{%
\begin{abstract}

Determining whether nodes can be localized, called localizability detection, is essential for { wireless sensor networks} (WSNs).  This step is required { for} localizing nodes, achieving low-cost deployments, and identifying prerequisites in location-based applications. Centralized graph algorithms are inapplicable to a resource-limited WSN { because} of their high computation and communication costs, whereas distributed approaches may miss a large number of theoretically localizable nodes in a resource-limited WSN. In this paper, we propose an efficient and effective distributed approach { in order} to address this problem. Furthermore, we prove the correctness of our algorithm and analyze the reasons our algorithm can find more localizable nodes while requiring fewer known location nodes than existing algorithms, under the same network configurations. The time complexity of our algorithm is linear with respect to the number of nodes { in a network}. We conduct both simulations and { real-world} WSN experiments to evaluate our algorithm under various network settings. The results show that our algorithm significantly outperforms the existing algorithms { in terms of} both the latency and the accuracy of localizability detection.
\end{abstract}

\begin{IEEEkeywords}
localizability, wireless sensor networks, graph rigidity, beacon, wheel-graph, extension.
\end{IEEEkeywords}}

\maketitle

\IEEEdisplaynontitleabstractindextext

 \ifCLASSOPTIONpeerreview
 \begin{center} \bfseries EDICS Category: 3-BBND \end{center}
 \fi
%
\IEEEraisesectionheading{\section{Introduction}\label{sec:intro}}


\IEEEPARstart{I}{n}  { wireless sensor networks} (WSNs), { owing to} the high hardware and/or energy cost and { indoor} blindness of GPS components, localization algorithms are often required \cite{conf/vtc/AlmuzainiG11}, \cite{journals/jsac/LazosP06}, \cite{DVhop}, \cite{rangebased}, \cite{uwsn}. Such algorithms employ beacons, which are special nodes of known locations, to determine the unknown locations of the other nodes in a WSN \cite{DBLP:journals/tmc/ChenLLP13}, \cite{DBLP:conf/mobicom/PriyanthaCB00}, \cite{127405}.  {An essential problem in localization algorithms is to detect whether a WSN or a node in the WSN is localizable \cite{DBLP:journals/tmc/YangL12}.  For instance, most localization algorithms can only localize a relatively small percentage of nodes in a WSN, especially in sparsely deployed WSNs. A localizability detection algorithm can help such localization algorithms { avoid} non-stopping failures or { incorrect} localization answers. Localizability information also gives guidelines or { identifies} prerequisites to { location-related} applications, { e.g., tracking and event detection}. Finally, {  node localizability information} is helpful for many mechanisms of WSNs, such as topology control, sensing area adaptation, and geographic routing.}

{{ The determination of} whether a network is localizable is known as the \emph{network localizability} problem, and { the detection of} whether a node is localizable is called the \emph{node localizability} problem.} The network localizability problem is formulated as follows \cite{DBLP:journals/tmc/YangL12}.  A WSN is modeled as a connected graph: $G=(V,E)$, in which $i$ and $j$ {  are nodes and $(i, j)$ are the link between them} ($i$, $j \in V$, $(i, j)\in E$).   {In $G$, the distance between $i$ and $j$, denoted as $d_{(i, j)}$, is known or can be measured,  {for example by} using distance measuring methods \cite{Han-JS16}, \cite{Shangguan-17}, \cite{Yang-15} . The location of a node $i$, denoted as $p_i$, is known (beacon) or unknown (non-beacon). A graph $G$ with a constraint set $C$, e.g., a set that specifies the locations of the beacon nodes,  is \emph{localizable} when the following holds: For each node $i$ in $G$, { there is } a unique $p_i$, such that $d_{(i, j)} = distance(p_i, p_j)$ for all $(i,j) \in E$} and the constraint set $C$ is satisfied. Given such definitions, researchers have found a close relationship between the network localizability problem and the graph rigidity { problem}\cite{newtrilateration}, \cite{trilateration},  \cite{bruce}, \cite{matroid}, \cite{MR42:4430}, \cite{DBLP:journals/jct/JacksonJ05}.  A rigid graph has a finite number of frameworks  { with which it can be realized }\cite{MR42:4430}.

{ Although the graph rigidity testing algorithms are theoretically sound, such algorithms are not applicable to WSNs}, as they only output \emph{false} in most  { real-world} WSNs\cite{DBLP:journals/tmc/YangL12}. Considering the realistic issues, researchers explore the \emph{node localizability} problem instead of {  the network localizability problem}. The following  { is} an example of  { localizability testing, for} node 1 in Fig.\ref{realizations}. There are four nodes and nodes 2, 3, and 4 are beacons in Fig.\ref{realizations}. Note that in Fig.\ref{realizations} and other network figures throughout this paper, an edge between two nodes {  indicates that} the distance between the two nodes is known and fixed. As can be seen in Fig.\ref{realizations}, for node 1, there are two possible locations that  {  each satisfies} these constraints. Each possible location of node 1  { may }have a possible network framework. Fig.\ref{realizations} shows two frameworks  { for} the network. In conclusion, node 1 is not localizable.

\begin{figure}[ht]
\centering
\includegraphics [width=4cm]{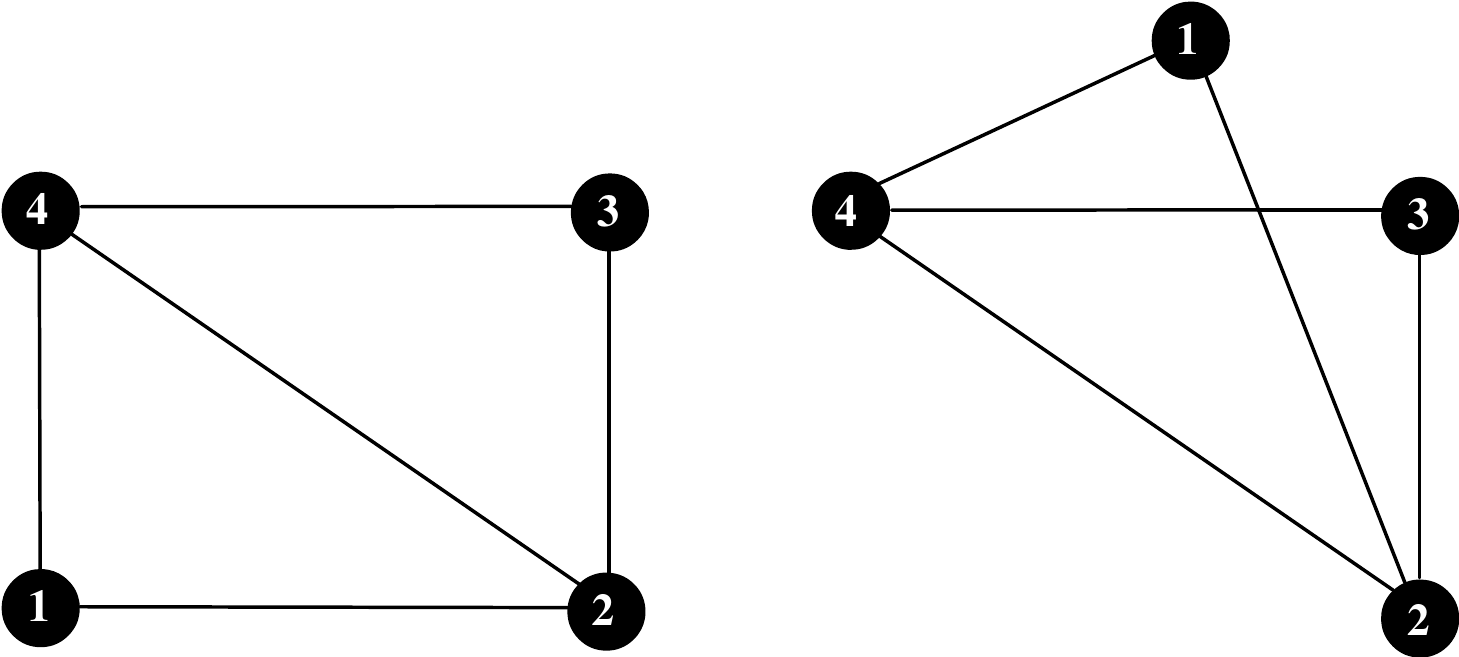}
\caption{Two different frameworks of a network} \label{realizations}
\end{figure}

 \emph{Node} localizability differs significantly from \emph{network} localizability. The  { approach} of simply partitioning a graph to a set of sub-graphs and detecting the localizability of the sub-graphs does not work for determining node localizability, as a partition may remove some constraints of the original network \cite{DBLP:journals/tmc/YangL12}. The problem is even more difficult when it is required to test the node localizability in a distributed way. A centralized approach may exhaust the energy of each node {early in the process}, as a packet of a source node usually needs to be forwarded many hops before it arrives at the center for processing. This  problem is so difficult that no distributed solution {has yet been found that can identify} all theoretically localizable nodes.

{In this paper, we propose an efficient and effective distributed algorithm to address the node localizability problem. Specifically, our algorithm employs a new method to  { extend} each graph. {  In a graph, the extension of our algorithm starts from  { a sub-graph having only }two beacon nodes and continues until the sub-graph reaches another beacon}. Then, our algorithm determines whether the extended graph can be localized, according to graph rigidity theory. In particular, this paper makes the following contributions: (1) The proposed graph extension method is theoretically proved to be able to find localizable nodes with  { fewer} beacons than the existing methods {  do}. (2) The complexity of the distributed algorithm is  { shown to be} linear to the number of nodes in a WSN. (3)  { Because of its low resource requirements}, the algorithm is { shown to be} applicable to a sparsely deployed WSN. To our knowledge, the algorithm proposed in this paper is currently the best solution to the node localizability problem.}

\section{Preliminaries}

\subsection{Graph Rigidity}
A rigid graph is a connected graph that has a finite number of frameworks \cite{MR42:4430}.  Furthermore, if a graph has a unique framework, it is called \emph{globally~rigid} \cite{bruce}. The graph shown in Fig.\ref{rigid-flexible-grigid}(a) is rigid, as the graph has two and only two different frameworks in a 2-dimension($2D$) plane. In contrast, the graph shown in Fig.\ref{rigid-flexible-grigid}(b) has an infinite number of frameworks;  { this} is called a flexible graph. The graph in Fig.\ref{rigid-flexible-grigid}(c) is globally rigid as it has only one framework. { In the graphs, an edge between two nodes indicates that the distance between the two nodes is known.}

\begin{figure}[ht]
\centering
\begin{minipage}{0.1 \textwidth}
\centering
\includegraphics [width=2 cm]{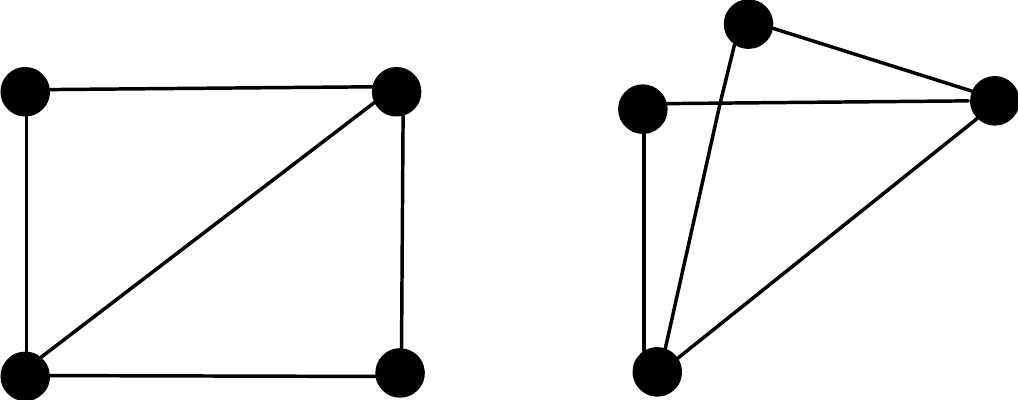}
\footnotesize\text{(a)}
\end{minipage}
\begin{minipage}{0.051 \textwidth}
\centering
\includegraphics [width=0.4 cm]{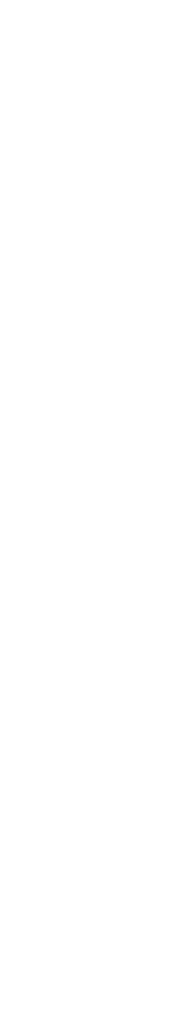}
\end{minipage}
\begin{minipage}{0.2\textwidth}
\centering
\includegraphics [width=3.5 cm]{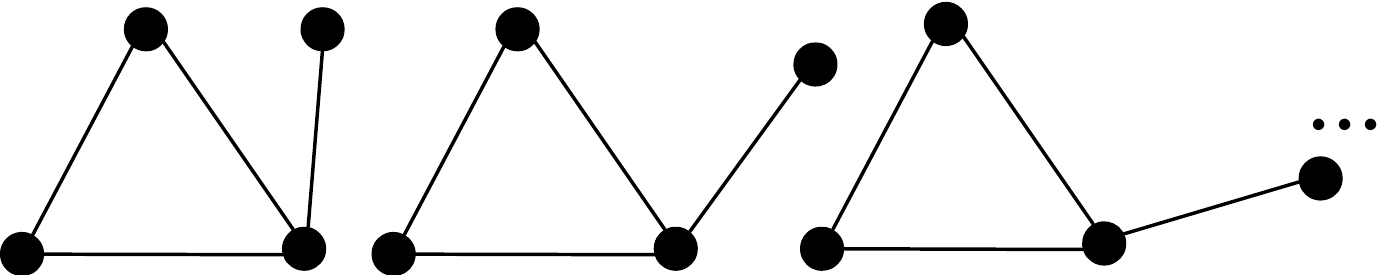}
\footnotesize\text{(b)}
\end{minipage}
\begin{minipage}{0.12\textwidth}
\centering
\includegraphics [width=0.65 cm, height = 0.65 cm]{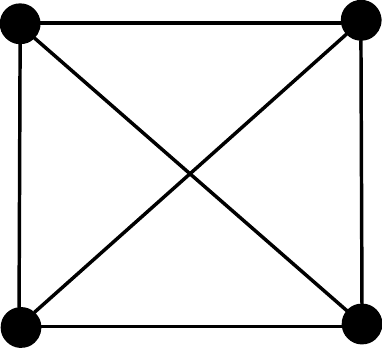}

\footnotesize\text{(c) }
\end{minipage}
\caption{Graphs: (a) rigid, (b) flexible, and (c) globally rigid} 
\label{rigid-flexible-grigid}
\end{figure}

Suppose a graph $G$ satisfies the
following two conditions: (1) { it is} globally rigid, (2)  {  it contains} three or more beacons that are non-collinear (i.e., they are not  { on} the same line). Then, all the nodes in $G$ can { be identified as} localizable. Hence, the key issue  { for} detecting localizability of a network is  { the detection of} global rigidity.

\subsection{Global Rigidity}

We { first} review Hendrickson's Lemma for the principle of detecting global rigidity. In Hendrickson's Lemma, a \emph{skeletal sub-graph} of $G$ contains all the vertexes of $G$. Then, we go { on to the} M-circuit  \cite{bruce}, to learn how to construct a globally rigid graph. The notation is listed in Table \ref{tablenotations}.

\noindent\textbf{Hendrickson's Lemma \cite{bruce}:} \textit{A graph
is globally rigid if it is 3-connected and has a skeletal
sub-graph that is a M-circuit.} 

\noindent\textbf{Definition of M-circuit \cite{bruce}:}
Graph $G$ is called an \emph{M-circuit} 
if it satisfies the following two conditions:\\
(1) $|E_{G}|=2|V_{G}|-2$;\\
(2) for each $X\subseteq V_{G}$ with $2\leq |X| \leq |V_{G}|-1$, $|E[X]|\leq 2|X|-3$.\\

\begin{table}[ht]\footnotesize
\caption{Notation in this paper}
\centering
\begin{spacing}{1.1}
\label{tablenotations}
\begin{tabular}{|c|c|}
\hline
Notation & Description\\
\hline
G & An undirected and connected graph \\
\hline
$V_{G}$ & Vertex set of graph $G$ \\
\hline
$E_{G}$ & Edge set of graph $G$\\
\hline
$|V| $ & The number of nodes in $V$\\
\hline
$G[X]$ & A sub-graph of $G$ induced by $X$; $X\subseteq V_{G}$ \\
\hline
$E[X]$ & The edge set of $G[X]$ \\
\hline
$|E[X]|$ & The number of edges in $G[X]$ \\
\hline
$L_x$ & The level of node $x$\\
\hline
\end{tabular}
\end{spacing}
\end{table}

According to Laman's Lemma \cite{MR42:4430}, a rigid graph can also be further broken down to a \emph{minimally rigid graph}. Laman proved that every rigid graph $G$ has a skeletal sub-graph $G'$ that is minimally rigid. Hence, each rigid graph can be reduced { to} a minimally rigid sub-graph by removing certain edges.

\noindent{Laman's Lemma\cite{MR42:4430}}: \textit{A graph
$G$ is minimally rigid { if and only if} $|E_{G}| = 2|V_{G}|-3$ and for each $X\subset V_{G}$ with $2\leq |X| \leq |V_{G}|-1$, $|E[X]|\leq 2 |X|-3$.}\\


\section{Related Work}

A series of methods have been proposed to determine the rigidity of a
graph \cite{Alex}, \cite{matroid}. However, as most of these methods are centralized, they are not suitable for a real world WSN. A centralized algorithm needs the global information of the network topology to construct the adjacency matrix. However, { because of the memory limit restriction} on a node, it is unrealistic for each node to maintain the global topology information in a WSN, especially { a large-scale} WSN.

Yang and Liu previously proposed a theoretical approach { for the detection of }localizable nodes, called RR3P \cite{DBLP:journals/tmc/YangL12}. According to RR3P, a node in the network is localizable if the following two conditions hold: (1) the node belongs to a redundantly rigid component, and (2) in this component there { exist} at least three vertex-disjoint paths connecting the node to three distinct beacons. RR3P is not { very} suitable for distributed algorithms, as finding { a} redundantly rigid component requires { near-global} topology information about a network.

The differences between RR3P and our approach can be summarized as follows: (1) We propose a realistic distributed algorithm to find the localizable nodes starting from pairs of beacons. (2) Our distributed algorithm is able to detect { most of the} localizable nodes without finding the redundantly rigid component which is the prerequisite in RR3P. (3) In our distributed algorithm, each node only uses the information from one-hop communication. In comparison, RR3P needs to detect the three vertex-disjoint paths, which usually span { several} hops.

Eren et al. proposed a distributed algorithm,
{ the} $trilateration~protocol$ (TP) \cite{trilateration}, to mark the localizable nodes and calculate their locations in a
network. The general idea of TP is that a node can be regarded as localizable, whenever the node { can identify} three or more localizable neighbors. This idea enables TP to be fully distributed. Hence, TP { has been} used by many applications \cite{
}, \cite{Savvides01a}.  Eren recently proposed the concepts of indices to quantitatively measure the network graph rigidity \cite{newtrilateration}. Such concepts are also helpful to these applications.

However, TP may miss localizable nodes { that} are on a $geographical ~ gap$ or in a graph with $border~nodes$. Fig.\ref{problem}(a) is a graph with a geographical gap.
The nodes without labels are detected {  to be} localizable. Nodes $A$, $B$, and $C$ are theoretically localizable. However, none of the nodes among $A,B,C$ will be determined {  to be localizable} by TP, because {  none of them has three or more localizable neighbors}. In Fig.\ref{problem}(b), nodes $A$ and $B$ are border nodes.
They can not be determined {  to be} localizable by TP, either.

\begin{figure}[ht]
\centering
\begin{minipage}{0.23\textwidth}
\centering
\includegraphics [width=3.38cm, height = 2.2cm]{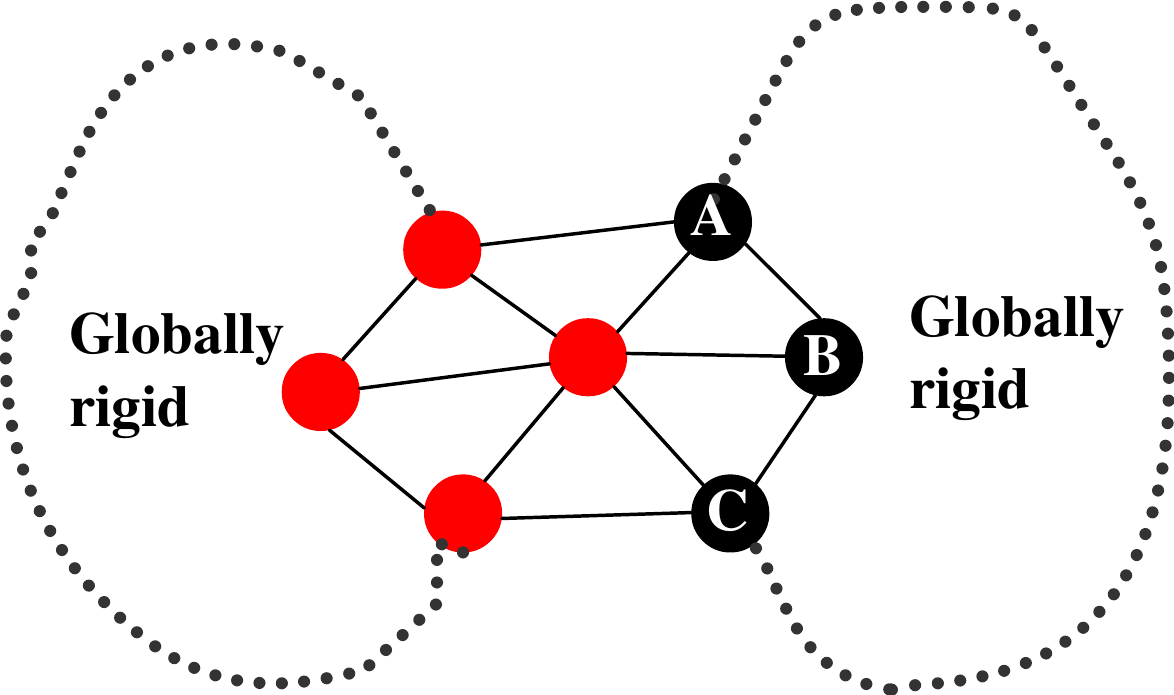}

\footnotesize\text{(a)Geographical gap}
\end{minipage}
\begin{minipage}{0.23\textwidth}
\centering
\includegraphics [width=3.35cm]{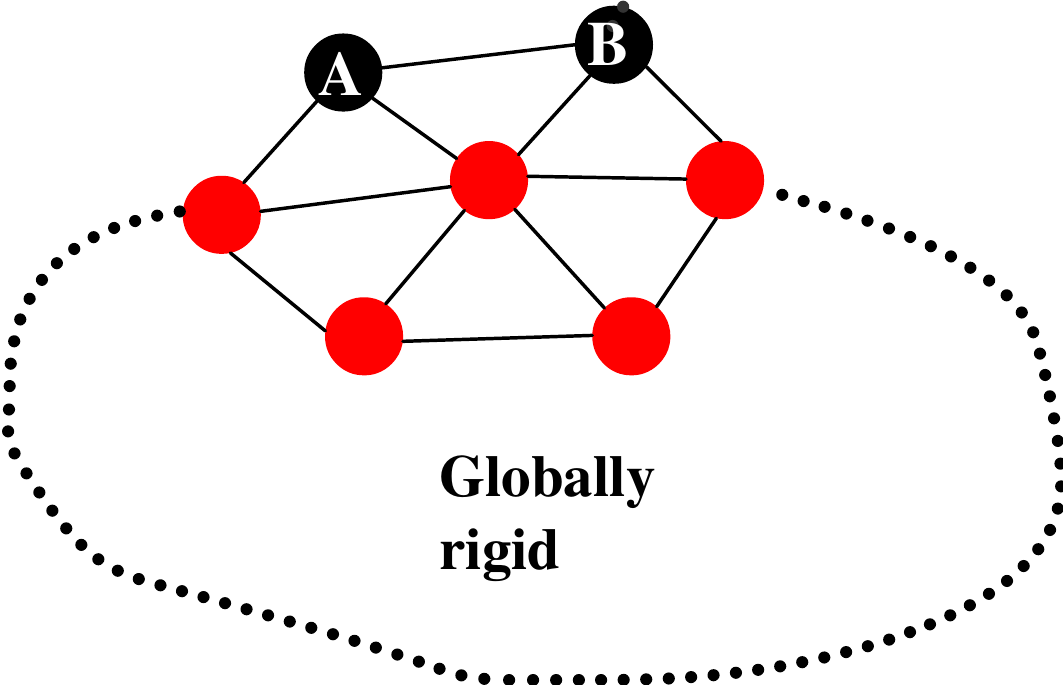}

\footnotesize\text{(b)Border nodes}
\end{minipage}
\caption{Deficiency of trilateration protocol} \label{problem}
\end{figure}

To solve the problems of { a} geographical gap and { of} border nodes, Yang et al. proposed a new fully distributed approach { for detecting} localizable nodes, called $wheel~ extension$ (WE) \cite{wheel}. In a WSN running WE, each node tries to construct wheel graphs within its neighbors. During the extension process, each node only needs to check the constructed wheel graphs. If there are three or more nodes that are localizable in a wheel graph, WE marks all the other nodes in the wheel graph as localizable.

Nonetheless, there are often scenarios { in which} WE may not work well, either. For example, { it may be the case that beacons are not present in} any wheel graphs in a network. {  Another scenario is one in which a} node cannot build a wheel graph,  {  because of the lack of adequate network information for the node}. As shown in Fig.\ref{circle},  {  none of} the nodes in graph $G_{c}$ are in a wheel graph and there is no node that has three or more localizable neighbors. Therefore, neither TP nor WE can find any localizable node in graph $G_{c}$. However, as we will show later, all the nodes in a graph such as $G_{c}$ are localizable.

\begin{figure}[ht]
\centering
\includegraphics [width=0.25\textwidth]{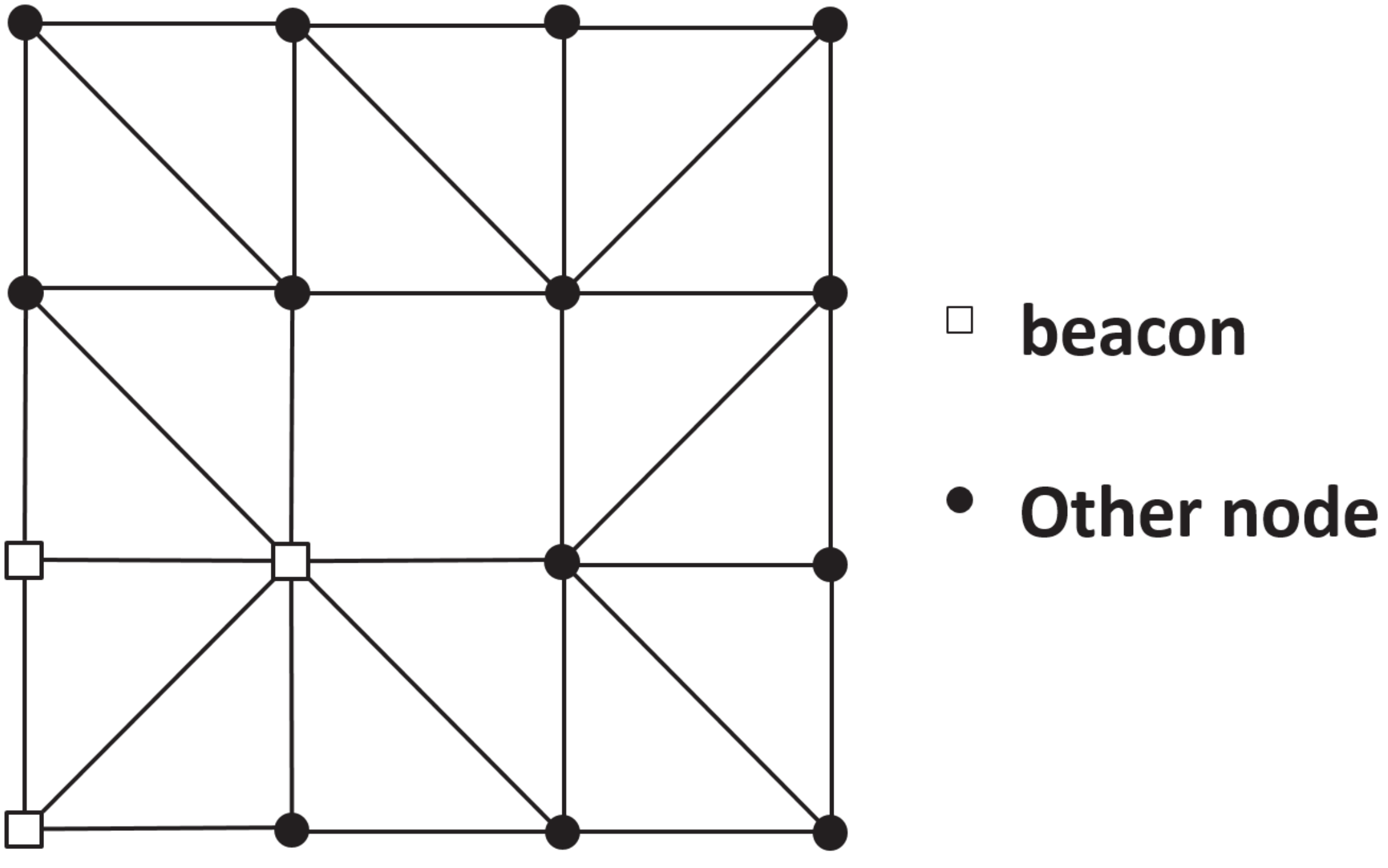}
\caption{$G_{c}$, A globally rigid graph.} \label{circle}
\end{figure}

 { In contrast with WE and TP}, our algorithm uses a new detection method, $triangle~extension$ (TE), to find the localizable nodes in a WSN. TE can start with  { far} fewer beacons than the existing algorithms. Moreover, TE not only works well { for a graph such as} $G_{c}$ in Fig.\ref{circle}, but also avoids the problems shown in Fig.\ref{problem}. 
 

\section{{Triangle Extension Theory}}

This section presents our theoretical approach to node localizability detection via global rigidity as follows: We define {  the concept} of a \emph{branch} and { propose related lemmas}. These concepts and lemmas enable us to construct a graph that is an M-circuit and globally rigid, { starting from a branch}.  

\subsection{Branch Concept}
  {We introduce three concepts { that will enable us to} define a branch. The first is an operation called \emph{extension}. The second is a special { kind of} extension: \emph{triangle extension}. \emph{Triangle block} is the third concept. A branch will then be constructed within a triangle block. }\\

\textbf{Extension}:  Given a graph $G$, an \emph{extension} operation on $G$ is { one that inserts} a new vertex $v$ into $V_{G}$ and two edges $(v,r_{1}),(v, r_{2})$ ($r_{1},r_{2}\in V_{G}$) into $E_{G}$. Here, nodes $r_{1}$ and $r_{2}$ are { said to be} \emph{extended} by node $v$. Nodes $r_{1} and r_{2}$ are called the \emph{parents} of $v$ and $v$ {  is called} a \emph{child} of $r_{1} and r_{2}$. The ancestors of node $v$ are recursively defined as $v$'s parents ($r_{1} and r_{2}$) and the \emph{ancestors} of $v$'s parents.

Lemma \ref{lextension} gives the property of extensions. Its proof is in Appendix A.  Following Lemma \ref{lextension}, when performing a series of extensions from a minimally rigid graph $K_{2}$, we can obtain a larger minimally rigid graph. $K_{2}$ is a complete graph of two nodes and is minimally rigid according to Laman's Lemma \cite{MR42:4430}. Fig.\ref{extension} { shows the} extensions from a $K_{2}$ graph(the leftmost graph with only $r_{1}$ and $r_{2}$). In this figure, the graph { is} extended by nodes $a$, $b$, $c$, $d$, and $e$, after five extension operations. The final { resulting} graph is still minimally rigid. 

\begin{lemma}
\label{lextension}
A graph $G_1$, generated from a series of extensions on a minimally rigid graph $G$, is also minimally rigid.
\end{lemma}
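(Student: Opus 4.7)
The plan is to prove Lemma \ref{lextension} by induction on the number of extension operations, showing that each single extension preserves the two defining conditions of Laman's Lemma. The base case is trivial: zero extensions leaves $G$ unchanged and hence minimally rigid by hypothesis. For the inductive step, I would assume the graph $G_k$ obtained after $k$ extensions is minimally rigid, and verify the Laman conditions for $G_{k+1}$, which is $G_k$ augmented by a new vertex $v$ and two new edges $(v,r_1),(v,r_2)$ with $r_1,r_2 \in V_{G_k}$.

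First I would verify the edge-count equality. Since $|V_{G_{k+1}}| = |V_{G_k}|+1$ and $|E_{G_{k+1}}| = |E_{G_k}|+2$, substituting $|E_{G_k}| = 2|V_{G_k}|-3$ from the inductive hypothesis gives $|E_{G_{k+1}}| = 2|V_{G_k}|-1 = 2|V_{G_{k+1}}|-3$, as required.

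Next I would verify the sparsity inequality $|E[X]| \le 2|X|-3$ for every $X \subset V_{G_{k+1}}$ with $2 \le |X| \le |V_{G_{k+1}}|-1$, splitting into cases according to whether the new vertex $v$ belongs to $X$. If $v \notin X$, then $X \subseteq V_{G_k}$ and $E_{G_{k+1}}[X] = E_{G_k}[X]$; the bound follows either from the inductive hypothesis (when $|X| \le |V_{G_k}|-1$) or directly from the edge-count equality for $G_k$ (when $X = V_{G_k}$). If $v \in X$, let $X' = X \setminus \{v\}$; then $|E_{G_{k+1}}[X]| \le |E_{G_k}[X']| + 2$, since $v$ contributes at most its two new incident edges. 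For $|X'|=1$, we have at most one edge between $v$ and the single element of $X'$, satisfying $|E_{G_{k+1}}[X]| \le 1 = 2|X|-3$. For $|X'| \ge 2$, the inductive hypothesis gives $|E_{G_k}[X']| \le 2|X'|-3$, and therefore $|E_{G_{k+1}}[X]| \le 2|X'|-3+2 = 2|X|-3$, as desired.

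I do not expect a serious obstacle here; the argument is essentially arithmetic bookkeeping once the case split is set up correctly. The only subtlety I would be careful about is the small-$|X|$ boundary case and the endpoint $|X| = |V_{G_{k+1}}|-1$, where I must be careful not to invoke the inductive hypothesis outside the range in which it applies, using the global edge count instead. Once these cases are handled, both Laman conditions are verified for $G_{k+1}$ and the induction closes.
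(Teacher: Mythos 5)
Your proposal is correct and follows essentially the same route as the paper's Appendix A proof: verify the edge-count equality $|E|=2|V|-3$ after one extension, then check the Laman sparsity condition by splitting on whether the new vertex $v$ lies in $X$ (the paper phrases the $v\in X$ case as a contradiction, you argue it directly, but the counting is identical). Your version is in fact slightly more careful than the paper's about the boundary cases ($|X'|=1$ and $X=V_{G_k}$) and about making the induction over the sequence of extensions explicit.
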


\begin{figure}[ht]
\centering
\includegraphics [width=0.49\textwidth]{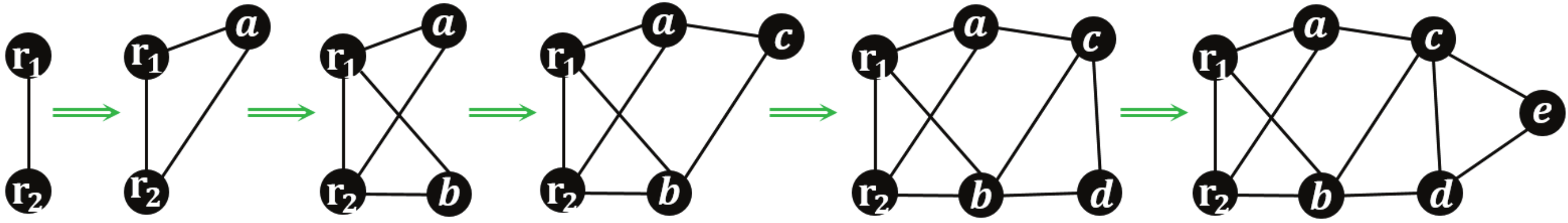}
\caption{A series of extensions} \label{extension}
\end{figure}

 {\textbf{Triangle Extension:}}  A triangle extension operation is a sequence of extensions, the first of which starts from a $K_{2}$. As shown in Fig.\ref{extension}, the extension sequence launched by nodes \{$a$, $b$, $c$\} {and that launched by nodes} \{$a$, $b$, $c$, $d$, $e$\} are both triangle extensions, but the extension sequence of \{$d$, $e$\} is not a triangle extension, as it does not start from a $K_{2}$.

  {\textbf{Triangle Block:} }A triangle block is a graph { constructed from a  $K_{2}$ graph by a \emph{triangle extension}}. The nodes in the original $K_{2}$ are called the \emph{roots} of this triangle block. In a triangle block, a node $x$ has a level denoted as $L_x$. If $x$ is a root, $L_x$ = 0; otherwise, $L_x$ = 1 + max(the levels of $x$'s parents). { The value of the level} indicates the temporal order of the extension. 

  {\textbf{Branch:}} In a triangle block $T$,  a \emph{branch} of $T$ { is a sub-graph of $T$} that is composed of a node $v$, the ancestors of $v$, and the edges between them. Node $v$ is called the \emph{leaf node} and the branch is denoted using $B(v)$ to indicate that the leaf node is the end of the triangle extension. The roots $r_{1}$ and $r_{2}$ of $T$ are also the roots of $B(v)$. Fig.\ref{branches} shows the four branches $B(c)$, $B(d)$, $B(e)$, and $B(f)$ of $T$.

\begin{figure}[ht]
\centering
\includegraphics [width=0.3\textwidth, height = 3cm]{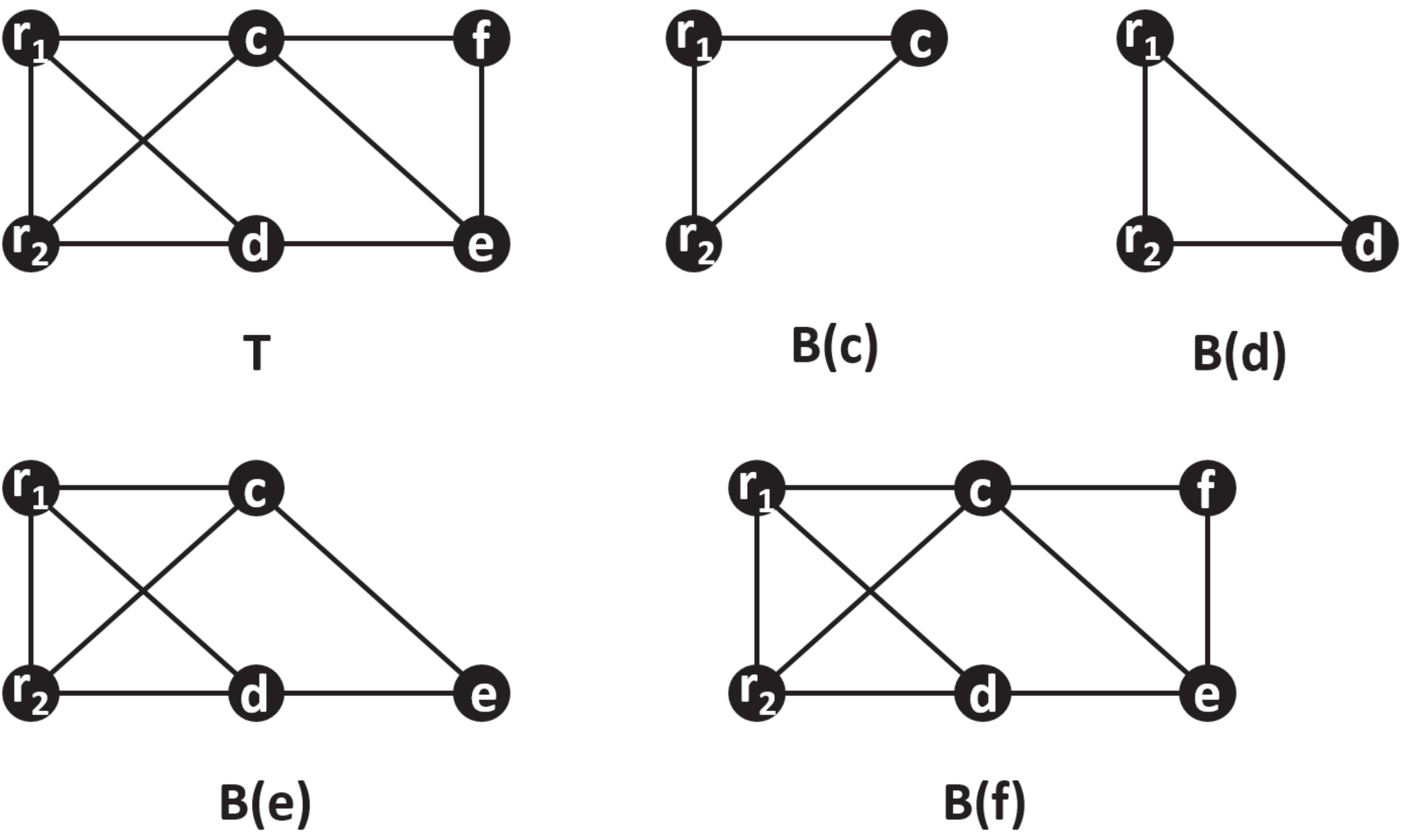}
\caption{Example triangle block and its branches} \label{branches}
\end{figure}

  {A branch has two following properties. \emph{Property-1}: A branch is minimally rigid. \emph{Property-2}: Removing any single node from a branch will not { cause the branch to become} disconnected. Property-1 can be derived from Lemma \ref{lextension} as a branch is extended from a $K_2$.  Property-2 can be proved as follows: { It is evident that} removing the leaf or one of the roots will not break a branch { into} two separate sub-graphs. Given a non-root and non-leaf node $x$ in a branch, the number of edges connected to $x$ is at least 3. The reason is that node $x$ must have two parents and at least one child connected to itself in a branch. Hence, removing $x$ will not {cause the branch to become} disconnected. Property-2 holds.}

\begin{lemma}
\label{mylemma1}
\text{{Branch Lemma}: }
In a branch $B(v)=(V_{B},E_{B})$, there is no set $X$ that satisfies all of the following three conditions.\\
(1) $X\subsetneqq V_{B}$, where $V_{B}$ is the vertices of $B(v)$.\\
(2) $X$ contains the two roots of $B(v)$ and $v$.\\
(3) $|E[X]|=2|X|-3$.\\
\end{lemma}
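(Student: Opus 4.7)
My plan is to argue by contradiction. Suppose a set $X$ satisfying conditions (1)--(3) exists, and let $Y = V_B \setminus X$; then $Y$ is non-empty and, by (2), contains neither root nor the leaf $v$. The two ingredients I intend to combine are Property-1 (a branch is minimally rigid, so $|E_B| = 2|V_B| - 3$) and the triangle-extension construction itself, which guarantees that every non-root vertex of $B(v)$ carries exactly two parent edges.

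First I would decompose
\[
|E_B| = |E[X]| + |E[Y]| + |E(X,Y)|
\]
and substitute $|E[X]| = 2|X| - 3$ from (3) together with $|E_B| = 2|V_B| - 3$ to obtain $|E[Y]| + |E(X,Y)| = 2|Y|$.

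The next step is a double count of parent edges. Each $y \in Y$ is a non-root and therefore has exactly two parent edges, giving $2|Y|$ such edges in total. Letting $q_Y$ denote the number of these edges whose parent lies in $X$, the remaining $2|Y|-q_Y$ stay inside $Y$ and account for every edge of $E[Y]$ exactly once (a given edge of $B(v)$ arises as a parent edge only from its higher-level endpoint). Hence $|E[Y]| + q_Y = 2|Y|$, which, combined with the earlier identity, forces $|E(X,Y)| = q_Y$. In other words, every $(X,Y)$-edge is a parent edge directed from $Y$ into $X$, so no non-root vertex of $X$ has a parent inside $Y$---equivalently, $X$ is closed under taking parents.

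The contradiction then follows quickly. Since $v \in X$ and $v$ is not a root (the degenerate case $|V_B|=2$ is immediate: $X$ would have to contain both roots and hence equal $V_B$, violating (1)), both parents of $v$ lie in $X$; iterating, every ancestor of $v$ lies in $X$. By the definition of a branch, $V_B = \{v\} \cup \{\text{ancestors of } v\} \subseteq X$, contradicting $X \subsetneqq V_B$. The main obstacle I anticipate is the parent-edge bookkeeping itself, in particular making sure that the unique edge between the two roots is not mis-counted; because both roots lie in $X$, that edge sits inside $E[X]$ and plays no role in the $|E[Y]|$ or $|E(X,Y)|$ tallies, so the accounting remains clean.
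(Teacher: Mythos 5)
Your proof is correct, and it takes a genuinely different route from the paper's. The paper also argues by contradiction with $Y=V_B\setminus X$, but it then absorbs the vertices of $Y$ into $X$ one at a time in order of increasing level, checking that each absorbed vertex brings at least two edges (its parents being already in $X$) and that the last absorbed vertex brings at least three (its parents and at least one child being in $X$); this drives the count to $|E[X]|\geq 2|X|-2$ and contradicts Laman's inequality via the minimal rigidity of the branch. You instead make a single global count: the decomposition $|E_B|=|E[X]|+|E[Y]|+|E(X,Y)|$ together with $|E_B|=2|V_B|-3$ and $|E[X]|=2|X|-3$ gives $|E[Y]|+|E(X,Y)|=2|Y|$, and the double count of parent edges (each non-root vertex having exactly two, each non-root edge being the parent edge of exactly its unique child endpoint) forces $|E(X,Y)|=q_Y$, i.e.\ $X$ is closed under taking parents; since $V_B$ consists precisely of $v$ and its ancestors, $X=V_B$, contradicting condition (1). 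Your version is non-iterative, derives the contradiction directly from condition (1) rather than from a violated Laman bound, and makes the structural reason for the lemma transparent (a deficit of $2|X|-3$ edges on $X$ is only achievable if $X$ is parent-closed). Your bookkeeping is sound: the root edge lies in $E[X]$ because both roots are in $X$, every other edge of $B(v)$ is a parent edge of exactly one endpoint (the higher-level one), and the degenerate case is disposed of correctly. One small point worth making explicit if you write this up: the inequality $|E(X,Y)|\geq q_Y$ holds a priori, and it is the equality $|E[Y]|+|E(X,Y)|=2|Y|$ that upgrades it to $|E(X,Y)|=q_Y$ and hence to parent-closure.
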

\vspace{-0.5cm}

\begin{proof} 
In the scenario of $L_v =1$, i.e., node $v$ is at level 1, $B(v)$ only contains $v$ and the two roots, which are also the parents of $v$. { It is evident that} conditions (1) and (2) cannot be satisfied at the same time.

In the scenarios of $L_v\geq 2$, we prove Lemma \ref{mylemma1} by contradiction: Suppose that there is a set $X$ satisfying all three conditions.  Let $Y= V_{B}-X$. 

  {When there is only a single node $y$ in $Y$, $y$ is { neither a root nor} the leaf node since they are in $X$. The number of edges connected to $y$ is at least three. Subsequently, as $|E[X]|=2|X|-3$, $|E[X \cup y]|\geq 2|X \cup y|-2$. On the other hand, since $B(v)$ is a minimally rigid graph (Branch Property-1), according to Laman's Lemma, for each $X'\subset V_{G}$ with $2\leq |X'| \leq |V_{G}|-1$, $|E[X']|\leq 2|X'|-3$. Let $X' = X \cup y$; then $|E[X \cup y]|\leq 2|X \cup y|-3$ contradicts the previous deduction: $E[X \cup y]\geq 2|X \cup y|-2$. }

  {When there { is} more than one node in $Y$, { we proceed as follows:} Let $ym$ be { the node in $Y$ having the smallest level}. As $ym$ has at least two parents in $X$, the number of edges connected to $ym$ is at least two. We move $ym$ from $Y$ to $X$ and { obtain} $|E[X]|\geq2|X|-3$. Similarly, we can move each node with the smallest level from the remainder nodes in $Y$ to $X$ until there is only a single node  in $Y$, denoted as $ym'$. Then there are at least three edges connected to $ym'$, since the parents of $ym'$ and {  the children of $ym'$} are all in $X$ now. After we move $ym'$ into $X$, we have $|E[X]|\geq 2|X|-2$. This is in conflict with Laman's lemma.}

In summary, {regardless of whether} $L_v=1$ or $L_v\geq 2$, there is no $X$ that satisfies all of the three conditions.
\end{proof}
\vspace{-0.5cm}
\subsection{Globally Rigid Graph Construction Theorem}

  {We now start to build a globally rigid graph from a branch. The steps are as follows: First, we deduce Lemma \ref{branchlemma2} to obtain a 3-connected graph from a branch. Then, we prove that the constructed graph is an M-circuit, according to Lemma \ref{mylemma1} of the last subsection. As a result, { the two necessary conditions for obtaining} a global rigidity graph are met.}

\begin{lemma}
\label{branchlemma2}
In a branch $B(v)$, the removal of two nodes $t_{1}$ and $t_{2}$, where $v\notin \{t_{1},t_{2}\}$ and there is at most one root in $\{t_{1},t_{2}\}$ divides $B(v)$ into at most two sub-graphs. If after the removal of $t_{1}$ and $t_{2}$, $B(v)$ is divided, the leaf node $v$ and the remaining root(s) are in different sub-graphs.
\end{lemma}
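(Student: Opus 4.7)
The plan is to prove this lemma by contradiction, reducing it to a level-chasing argument that exploits the triangle-extension structure of $B(v)$.

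First I would suppose the conclusion fails, so that $B(v)\setminus\{t_1,t_2\}$ either has at least three components, or has exactly two components with $v$ and some remaining root sharing a component. In either case I can select a non-empty subset $D\subseteq V_B\setminus\{t_1,t_2\}$ that is a union of components of the induced subgraph $B(v)\setminus\{t_1,t_2\}$ and is disjoint from $\{v\}$ and from every remaining root (simply take any ``unoccupied'' component; at least one exists under either failure scenario).

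Next I would read off the structure of $D$. Since $D$ is a union of components of $B(v)\setminus\{t_1,t_2\}$, no edge of $B(v)$ runs between $D$ and $V_B\setminus(D\cup\{t_1,t_2\})$, so every neighbour in $B(v)$ of a node of $D$ lies in $D\cup\{t_1,t_2\}$. Because $D$ avoids $v$ and every (removed or remaining) root, each node of $D$ is a non-root, non-leaf vertex of the branch and therefore has exactly two parents and at least one child in $B(v)$; the latter uses the fact that every non-leaf vertex of $B(v)$ is an ancestor of $v$.

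Then I would focus on the two extremal levels in $D$. Let $z_1\in D$ attain the minimum level $L_{z_1}$ and $z^*\in D$ the maximum level $L_{z^*}$ (possibly $z_1=z^*$ when $|D|=1$). The two parents of $z_1$ lie in $D\cup\{t_1,t_2\}$ with level strictly below $L_{z_1}$; since no element of $D$ has smaller level, both parents must belong to $\{t_1,t_2\}$, and hence equal $t_1$ and $t_2$, forcing $L_{z_1}=1+\max(L_{t_1},L_{t_2})$. Dually, pick any child of $z^*$ in $B(v)$; it lies in $D\cup\{t_1,t_2\}$ with level strictly above $L_{z^*}$, so it must be some $t_j\in\{t_1,t_2\}$. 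Since $z^*$ is a parent of $t_j$, the node $t_j$ cannot be a root, and the lemma's hypothesis that at most one of $\{t_1,t_2\}$ is a root is exactly what guarantees that a non-root $t_j\in\{t_1,t_2\}$ is available to play this role; consequently $L_{z^*}<L_{t_j}\leq\max(L_{t_1},L_{t_2})$. Chaining these estimates gives
\[
L_{z^*}\;<\;\max(L_{t_1},L_{t_2})\;<\;1+\max(L_{t_1},L_{t_2})\;=\;L_{z_1},
\]
which contradicts $L_{z_1}\leq L_{z^*}$ (immediate from the extremal choice of $z_1,z^*$, and collapsing to $L_{z_1}<L_{z_1}$ when $|D|=1$). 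This single contradiction rules out ``three or more components'' and ``two components with $v$ and a remaining root together'' simultaneously. The main subtlety is isolating the correct $D$ in the two failure scenarios and observing that the hypothesis ``at most one root in $\{t_1,t_2\}$'' is precisely what is needed to ensure that $z^*$'s child in $\{t_1,t_2\}$ is non-root—the level chase itself is mechanical once this is set up.
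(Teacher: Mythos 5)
Your proof is correct, but it takes a genuinely different route from the paper's. The paper argues by cases on how many roots lie in $\{t_{1},t_{2}\}$ and reasons about vertex-disjoint routes: because every non-root vertex has two distinct parents, each vertex owns two internally disjoint paths up to the root pair, and the proof tracks which of these routes a two-vertex cut can sever; its preliminary claim that two deletions leave at most two components is inferred rather loosely from 2-connectivity. You instead run a single unified contradiction: any failure of the conclusion yields a component $D$ of $B(v)\setminus\{t_{1},t_{2}\}$ avoiding $v$ and every remaining root, so every vertex of $D$ has its two parents and at least one child confined to $D\cup\{t_{1},t_{2}\}$; the minimum-level vertex of $D$ forces $\{t_{1},t_{2}\}$ to be exactly its parent pair, the maximum-level vertex forces one of $t_{1},t_{2}$ to be its child, and the resulting level inequalities are incompatible. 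Your version buys two things: it establishes the ``at most two sub-graphs'' claim and the separation claim in one stroke, and it is tighter on the former, since deleting two vertices from a general 2-connected graph can leave more than two components --- the paper's step there implicitly leans on branch structure that your level chase makes explicit. It is also stylistically consonant with the paper's own proof of Lemma~\ref{mylemma1}, which uses the same minimum-level device. One small correction of emphasis: your aside that the ``at most one root'' hypothesis is what furnishes a non-root child for $z^{*}$ is misattributed --- whichever $t_{j}$ is a child of $z^{*}$ is automatically a non-root simply because roots have no parents, and your level chase would reach a contradiction even if both of $t_{1},t_{2}$ were roots; the hypothesis merely delimits the lemma's scope rather than powering your argument.
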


\begin{proof}
As a branch is a 2-connected graph, the removal of a single node will not divide the branch { into} two separate sub-graphs. Then removing another node may divide the graph  { into} at most two separate sub-graphs. Furthermore, the removal of two nodes  { cannot} divide a branch of three nodes into two sub-graphs. The { following} are the three possible scenarios for $t_{1}$ and $t_{2}$.

(I) {$\{t_{1},t_{2}\}$ does not contain any root.
We consider a node $t$, $t \in \{t_{1},t_{2}\}$. As $t$ has two different parents and each ancestor of $t$ has two different parents, there are two different routes with no intersection from $t$'s two parents to the roots, $r_{1}$ and $ r_{2}$. Similarly, the routes from $v$'s two parents to  $r_{1},r_{2}$ have no intersection either. Fig.\ref{branch_pics}(a) shows the { branch}. The three routes, { those} from $t$ to $v$, from $t$ to $r_{1}$, and from $t$ to $r_{2}$, only intersect at $t$. The { dash-dotted} lines in Fig.\ref{branch_pics} and other figures in this paper illustrate the edges between the roots or beacons (as their distances are known). Suppose $t_1 = t$; then $t_2$ should be { on} the other route from $v$ to the roots. Otherwise, the removal of $t_1$ and $t_2$ does not divide $B(v)$.  As such, if the removal of $t_1$ and $t_2$ { causes} $B(v)$ to be divided { into} two sub-graphs, the two disjoint routes from $v$ to $r_{1}, r_{2}$ must have been cut off. Therefore, $v$ is in a different graph from that of the two roots.} 

(II) $t_{1}$(or $t_{2}$) is a root. For simplicity, let $t_{1}=r_{1}$.
Then $t_2 \notin \{r_{1},r_{2},v\}$, { in which case} $t$ has two disjoint routes to $r_{2}$ and $v$. These two routes do not contain $r_{1}$, as $r_1$ is a root. Hence, the removal of $t_2$ will not break these two disjoint routes to $r_{2}$ and $v$. In summary, there is no node $t_{2}$ that together with $r_{1}$ { will} divide the branch while keeping $r_{2}$ and $v$ in the same component.

(III) $\{t_{1},t_{2}\} = \{r_{1},r_{2}\}$; i.e., $t_{1}$ and $t_{2}$ are the two roots. In this scenario, $v$ has a path to all its ancestors left in $B(v)$. $B(v)$ can not be divided by removing $\{r_{1},r_{2}\}$. Therefore, $\{t_{1},t_{2}\}$ contains at most one of the roots.

From the above, node $v$ and the remaining root(s) are in different sub-graphs if $B(v)$ is broken { into} two sub-graphs after the removal of two nodes.
\end{proof}
\vspace{-0.5cm}
\begin{figure}[ht]
\begin{minipage}{0.22\textwidth}
\centering
\includegraphics [height=2.5cm]{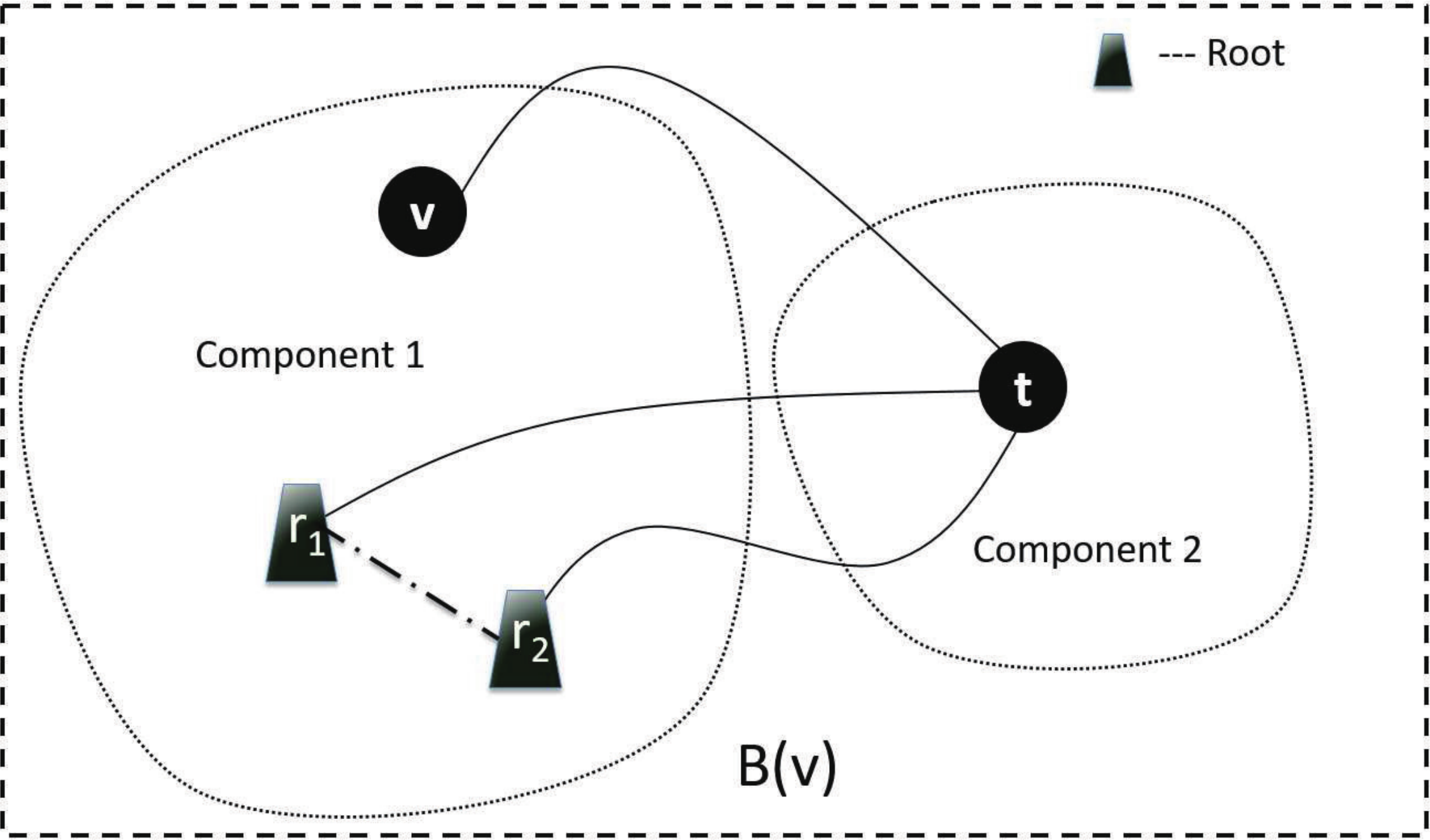}

\footnotesize\text{(a) Branch division}

\end{minipage}
\begin{minipage}{0.01\textwidth}
\end{minipage}
\begin{minipage}{0.24\textwidth}
\centering
\includegraphics [height=2.5cm]{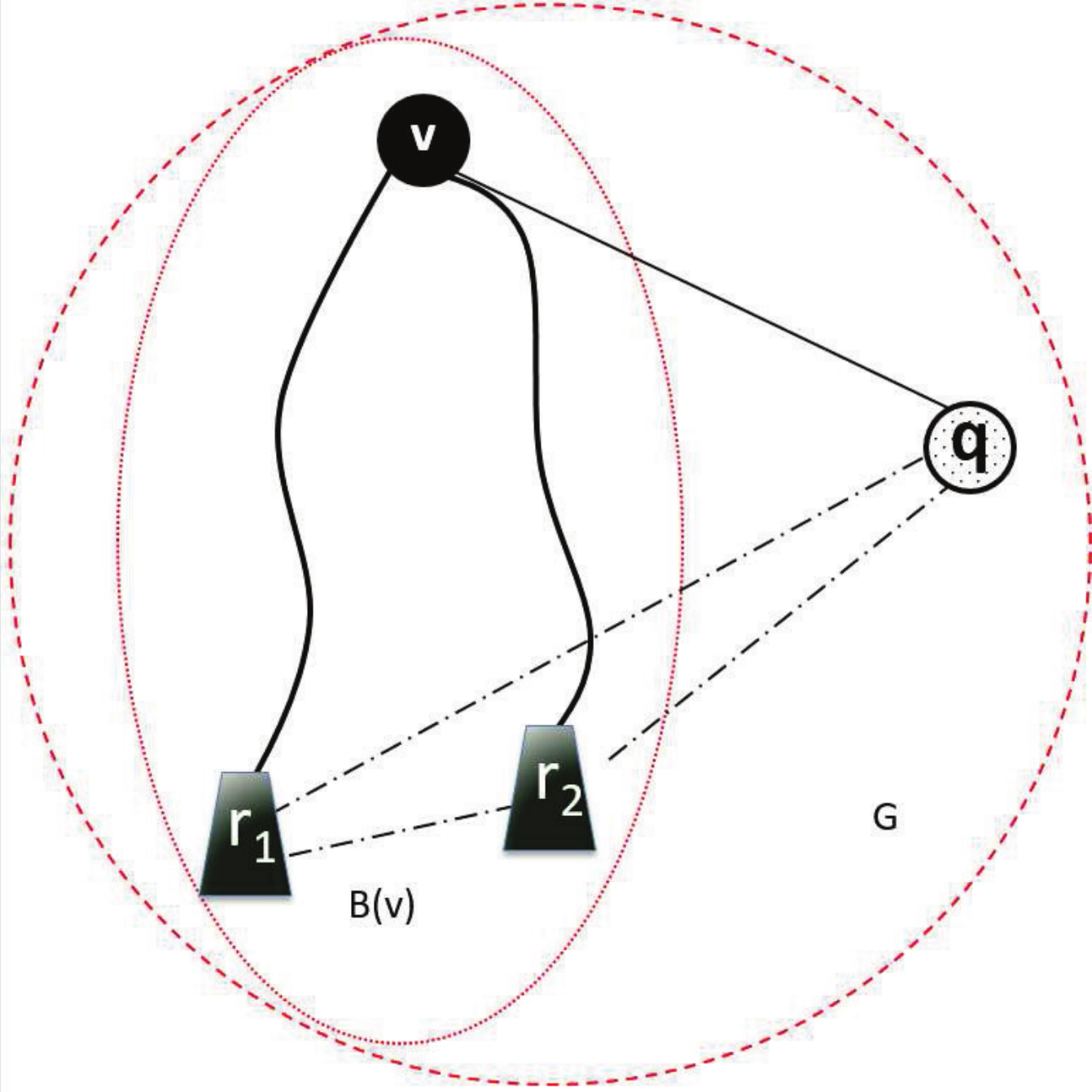}
\footnotesize\text{ (b) Globally rigid graph construction}
\end{minipage}
\caption{Branch and construction of globally rigid graph (dash-dotted lines: existing edges with fixed distances)}  \label{branch_pics}
\end{figure}

Using Lemma \ref{branchlemma2}, the graph $G$ composed of $B(v)$, $q$, $(q, v), (q, r_{1}), (q, r_{2})$ {  such as} that in Fig.\ref{branch_pics}(b) can be proved to be 3-connected:  First of all,  if $q$ in $G$ is removed, then $G$ becomes $B(v)$, which is still a connected graph. Then, removing any two nodes of $B(v)$ divides $B(v)$ into at most two disconnected sub-graphs, as a branch is 2-connected. From { L}emma \ref{branchlemma2}, after the removal of two more nodes,  if $B(v)$ is divided into two graphs, then $v$ and the roots must be in different graphs. Therefore, $G$ is 3-connected. If $G$ is further proved to be an M-circuit, then $G$ can be claimed as a globally rigid graph.  Theorem \ref{wldtheorem}  shows how to obtain an M-circuit.

\begin{theorem}
\label{wldtheorem}
Given a branch $B(v)$, where $B(v)=(V_{B},E_{B})$, if a graph $G$ is constructed by adding a new vertex $q$ into $V_{B}$ and three edges $(q, v), (q, r_{1}), (q, r_{2})$ into $E_{B}$, then $G$ is globally rigid.
\end{theorem}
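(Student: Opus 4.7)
My plan is to apply Hendrickson's Lemma, which reduces global rigidity to proving (i) that $G$ is 3-connected and (ii) that $G$ contains a skeletal M-circuit sub-graph. Part (i) has already been established in the discussion immediately preceding the theorem using Lemma \ref{branchlemma2}, so the remaining task is to verify that $G$ itself is an M-circuit (and hence serves as its own skeletal sub-graph).

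For the edge-count condition, I would compute directly: since $B(v)$ is minimally rigid by Branch Property-1, Laman's Lemma gives $|E_B| = 2|V_B|-3$. Adding $q$ together with the three new edges $(q,v), (q,r_1), (q,r_2)$ yields $|E_G| = 2|V_B|-3+3 = 2|V_B| = 2|V_G|-2$, matching the first M-circuit condition exactly.

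The second M-circuit condition---$|E[X]| \leq 2|X|-3$ for every $X \subseteq V_G$ with $2 \leq |X| \leq |V_G|-1$---I would handle by a short case analysis. If $q \notin X$, then $X \subseteq V_B$ and $E[X]$ in $G$ coincides with $E[X]$ in the minimally rigid graph $B(v)$, so Laman's Lemma gives the bound directly. If $q \in X$, let $X' = X \setminus \{q\}$; since $|X| \leq |V_G|-1$ we have $X' \subsetneq V_B$, and the only new edges that $E[X]$ gains over $E[X']$ are those from $q$ to whichever of $\{v, r_1, r_2\}$ lie in $X'$. When $X'$ misses at least one of these three, at most two new edges contribute, and Laman's Lemma on $X'$ again suffices.

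The genuinely delicate case---and the main obstacle---is when $q \in X$ and $X'$ contains all three of $v, r_1, r_2$, so that all three edges incident to $q$ land in $E[X]$. Here Laman's bound alone gives only $|E[X]| \leq 2|X|-2$, which is one too many. This is precisely the configuration that the Branch Lemma (Lemma \ref{mylemma1}) was designed to exclude: it forbids the equality $|E[X']| = 2|X'|-3$ on any proper subset of $V_B$ that contains both roots and $v$. Combined with Laman, this forces $|E[X']| \leq 2|X'|-4$, and therefore $|E[X]| \leq 2|X'|-4 + 3 = 2|X|-3$ in this last case as well. Once every case closes, $G$ satisfies both M-circuit conditions, and Hendrickson's Lemma delivers the global rigidity of $G$.
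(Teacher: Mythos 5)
Your proof is correct and follows essentially the same route as the paper's: 3-connectivity via Lemma \ref{branchlemma2}, the edge count $|E_G|=2|V_G|-2$ from minimal rigidity of the branch, Laman's Lemma for subsets avoiding $q$, and the Branch Lemma (Lemma \ref{mylemma1}) to exclude the critical count when $X$ contains $q$ together with $v$, $r_1$, $r_2$. If anything, your explicit split on whether $X\setminus\{q\}$ contains all three of $v$, $r_1$, $r_2$ is slightly more careful than the paper's contradiction argument, which invokes Lemma \ref{mylemma1} without checking its hypothesis that the set contains both roots and $v$.
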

\vspace{-0.1cm}
\begin{proof}

Consider a branch $B(v)$ and a graph $G$ { obtained} by adding node $q$, and { edges} $(q, v), (q, r_{1}), (q, r_{2}) $ to  $B(v)$.  Using Lemma \ref{branchlemma2}, as mentioned above, $G$ can be proved to be 3-connected.  This 3-connected property leads to the following equations:  $|E_{B}|=2|V_{B}|-3$, $|E_{G}|=|E_{B}|+3$ and $|V_{G}|= |V_{B}|+1$; and finally $|E_{G}|=2|V_{G}|-2$. 

Given a node set $X$, $X \subseteq V_{B}$, if node $q \notin X$,  then $|E[X]|\leq 2|X|-3 $ since $B(v)$ is minimally rigid and $X$ is a subset of { the} nodes in $B(v)$.  If $q \in X$, we prove by contradiction. We assume $|E[X]| \geq 2|X|-2$. Let $X' = X - q$. As node $q$ has at most three edges connected to the nodes in $X$, we have $|E[X']| \geq 2|X'|-3$ and $X' \subsetneqq V_{B}$ after the removal. Because $X' \subsetneqq V_{B}$ and $B(v)$ is minimally rigid, $|E[X']|>2|X'|-3$ does not hold. From Lemma \ref{mylemma1}, $|E[X']|=2|X'|-3$ is not possible, either. As a result,  the assumption $|E[X]| \geq 2|X|-2$ does not hold and thus only $|E[X]| \leq 2|X|-3$ holds. 
Now that for each $X\subsetneqq V_{2}$ with $|X|\geq2$, $|E[X]|\leq 2|X|-3$ holds, { the 3-connected graph $G$ can be concluded to be} an M-circuit. Therefore, $G$ is globally rigid.
\end{proof}
\vspace{-0.5cm}


\section{Distributed Localizability Detection Using Triangle Extension}    

  {In this section, we propose a distributed approach to localizability detection through  triangle extensions. Fig.\ref{grgc} provides an example to illustrate this idea. In Fig.\ref{grgc}, there are two nodes in the initial stage, $r_1$ and $r_2$, which constitute a $K_{2}$ graph. In the first step, by triangle extension, $v_1$ extends $r_1$ and $r_2$ to form {  the} branch $B(v_1)$. Then,  $v_2$ is added as the { leaf node} to form the branch $B(v_2)$. Finally, the node $q$, which has edges to $r_1$ and $r_2$, is found to be a neighbor of $v_2$. According to Theorem \ref{wldtheorem}, the graph $G$, which contains both $q$ and $B(v_2)$, is globally rigid. { Supposing} that the locations of nodes $q$, $r_1$ and $r_2$ are known and the three nodes are not { on} the same line, all of the nodes in Branch $B(q)$ can be determined  { to be} localizable. }

\begin{figure}[ht]
\centering
\includegraphics[width=0.4\textwidth]{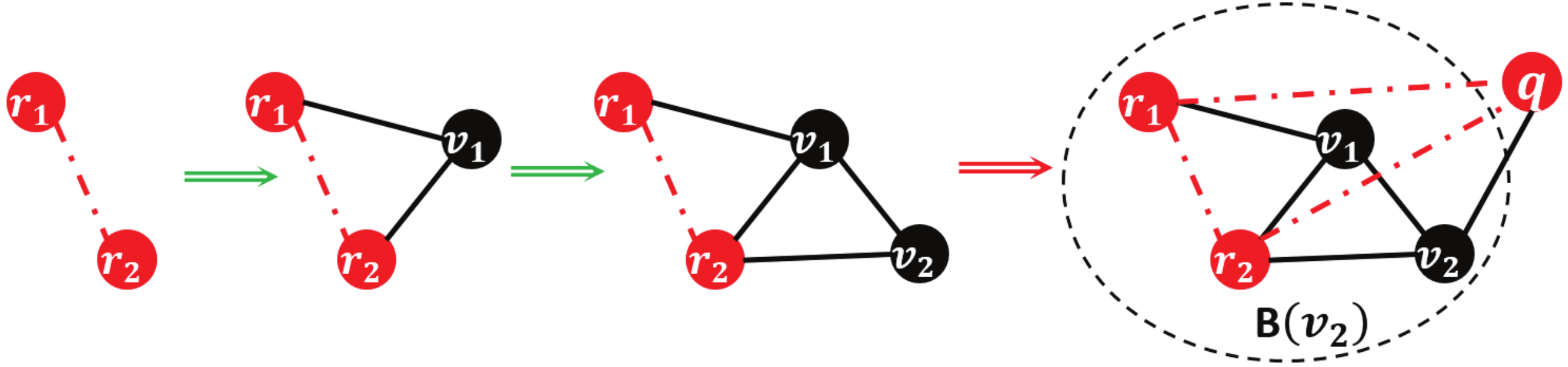}
\caption{A complete example of branch construction and global rigidity determination}
\label{grgc}
\end{figure}

  {Our approach proceeds in two phases: the extension phase and the detection phase. In the detection phase, a node determines its localizability $state$ according to the received messages. The nodes in a WSN may be { in any of the following} three states:  \emph{flexible}, \emph{rigid} and \emph{localizable}. The state of a beacon is initialized as localizable since its location is known and fixed.}

The  { details of the two phases} are as follows. In the extension phase, the states of beacons are first labeled as localizable, and those of the other nodes are labeled as flexible.  Then, a pair of beacons triggers the extension operations on their neighbor nodes. In turn, certain neighbor nodes will be added to form triangle blocks. These newly added nodes change their states to \emph{rigid} and inform their neighbors. In the detection phase, when a node $v$ changes its state from \emph{flexible} to \emph{rigid}, it checks whether the following two conditions are satisfied;  { if so, then all of the nodes in $B(v)$ can be determined to be localizable}.
\\
(1) $v$ has a localizable neighbor $q$ that is not an ancestor of $v$ in $B(v)$.\\
(2) Locations of $q$ and the roots of the branch are not collinear (not { on} the same line).\\

We now formulate the initial version of the distributed localizability detection algorithm using the triangle extension, denoted as ITE. ITE has two phases: \emph{extension} and \emph{detection} as shown in Procedure \ref{algote} and Procedure \ref{algoteD}, respectively.  In Procedure \ref{algote}, initially a node broadcasts messages that contain its unique ID, its current state and its location (if known). A node { in} the \emph{flexible} state performs extensions after measuring distances to its neighbors: (1) It chooses a pair of neighbors as its parents and updates its state to \emph{rigid} (Lines 11--20). (2) It broadcasts the state information to its neighbors (Lines 21--22).   In Procedure \ref{algote}, the sets $P$ and $B$ denote the parent set and branch set on the node.

\algsetup{indent=1.2em}
\begin{algorithm}
\caption{Extension Phase}
\label{algote}
\begin{algorithmic}[1]
\STATE type b: \{state, parents\{NULL, NULL\},roots\{NULL, NULL\} \} // b: branch tuple
\STATE type p: \{id, b\} // p: a candidate of parents for this node
\STATE $P \gets \phi$ //$P$: set of $p_i$, $i=1,2,3,...$
\STATE $B \gets \phi$ //$B$: set of $b_i$, $i=1,2,3,...$
\STATE Initialization($this.id, P, B$) //  the states of beacons are set as localizable
\IF{ReceivedMessagesFromNeighbors($p_i$)}
  	\STATE $P$.Add($p_i$)
	\FOR{$p_j$ in $P$}
  	\IF{$p_i=p_j$}
  		\CONTINUE
   	\ENDIF
   	\IF{$p_i$ is $beacon$}
    	\IF{$p_j$ is $beacon$}
     		\STATE $B$.Add(new b($state \gets rigid, parents \gets \{p_i,p_j\}, roots \gets \{p_i,p_j\}$))
    	\ELSIF{$p_i \in p_j.roots$} 
     		\STATE $B$.Add(new b($state=rigid, parents \gets \{p_i,p_j\}, roots \gets p_j.roots$))
    	\ENDIF
   	\ELSE
    	\IF{$p_j.state$ = $localizable$ and $p_j \in p_i.roots$}
     		\STATE $B$.Add(new b($state \gets rigid, parents \gets \{p_i,p_j\}, roots \gets p_i.roots$))
    	\ELSIF{$p_i.roots$ equals $p_j.roots$}
     		\STATE $B$.Add(new b($state \gets rigid, parents \gets \{p_i,p_j\}, roots \gets p_i.roots$))
    	\ENDIF
   	\ENDIF
  \ENDFOR
\ENDIF
\FOR{$b_i$ in $B$}
 	\STATE Broadcast(new p($this.id$, $b_i$))
 \ENDFOR
 \end{algorithmic}
\end{algorithm}

  {The nodes in Fig.\ref{grgc} are taken as an example to show how the triangle extension and detection are launched. Initially, every node runs Procedure \ref{algote}. Node $v_1$ finds that its two parents $r_1$ and $r_2$ are two beacons when it receives the messages from these two parents (Line 6). It then adds them to form its branch $B(v_1)$,  as shown in Lines 11--13. This step starts a \emph{triangle extension}. Next, $v_1$ updates its state to rigid and integrates this new state and branch information into a parent candidate, $p$.  Finally, $v_1$ broadcasts $p$ to its neighbors (Line 22).  $v_2$ receives $p$ and adds $v_1$ and $r_2$ to construct a new branch $B(v_2)$ (Lines 14--15). The triangle extension comes to $v_2$. Procedure \ref{algoteD} on a node detects whether there is an extra localizable neighbor that is non-collinear with the two roots of the branch of this node. If so, the node marks itself as localizable and { broadcasts} this news to its neighbors.}

\begin{algorithm}
\caption{Detection Phase}
\label{algoteD}
\begin{algorithmic}[1]
\IF{ReceivedMessagesFromNeighbors($p_n$)}
 	\IF{$p_n$ is a beacon}
  	\FOR{$b_i$ in $B$}
   		\IF{$p_n \notin b_i.roots$ and non-collinear($b_i.roots$, $p_n$)}
    		\STATE $b_i.state \gets localizable$
   		\ENDIF
  	\ENDFOR
 	\ELSIF{$p_n.state=localizable$ and ($this.id = p_n.parents[0].id$ or $this.id = p_n.parents[1].id$)}
  	\FOR{$b_i$ in $B$}
   		\IF{$b_i.roots$ equals $p_n.roots$}
    		\STATE $b_i.state \gets localizable$
   		\ENDIF
  	\ENDFOR
 	\ENDIF
\ENDIF
\FOR{$b_i$ in $B$}
	\IF{$b_i.state=localizable$}
	\STATE Broadcast((new p($this.id, b_i$))\ENDIF
\ENDFOR
\end{algorithmic}
\end{algorithm}

  {We use the previous figure, Fig.\ref{grgc}, to show how Procedure \ref{algoteD} works. As { given} in Procedure \ref{algote}, node $q$ finally receives the branch information from its parents. Since node $q$ is a beacon, it { notifies its localizability state back} to its neighbors. After $v_2$ receives this information from $q$, it proceeds Lines 2-5, and notifies all its neighbors that the nodes on the branch of $v_2$ are all localizable. }

In Algorithm ITE, a node finds two neighbor nodes and extends them if these two neighbors are in the same triangle block as itself. The node can find out whether it shares the same triangle block with the two neighbors, by comparing the roots of the neighbors with itself. The two neighbors are then marked as the possible parents of this node. Each node maintains a set of possible parents and a set of branches. The extension phase on each node covers all pairs of neighbors in the branch set. In the detection phase, a traversal of set $B$ is performed to inform all neighbors. Therefore, the time complexity of ITE is O($m^2$), where $m$ is the number of neighbors of a node has.

We also analyze the space complexity of the two sets $P$ and $B$ due to the space limit of a sensor node. The number of neighbors of a node in a sparse network is usually relatively small, { on} the order of tens. Therefore, a single resource-modest sensor node can maintain the two sets. Take node $d$ in the WSN of Fig.\ref{spacecomlexity}(a) as an example. Node $d$ only needs to put four nodes, $r_{2}$, $a$, $b$, and $c$, into its set $P$. In Fig.\ref{spacecomlexity}(a), the ancestors' topologies are transparent to node $d$, and thus there are only six different branches in its neighborhood. In general, within a branch any pair of neighbors of a node can be the node's parents and any pair of beacons can be the roots. Hence, there { are} at most $\binom{m}{2}*\binom{k}{2}$ different branches for a node, { supposing} there are $m$ neighbors of a node and $k$ beacons on average ($m \ge 2$, $k \ge 2$). Node $d$ can construct six different branches and { these} are all listed in Fig.\ref{spacecomlexity}(b) .

\begin{figure}[htbp]
\centering
\begin{minipage}{0.2\textwidth}
\includegraphics[width=2.5cm]{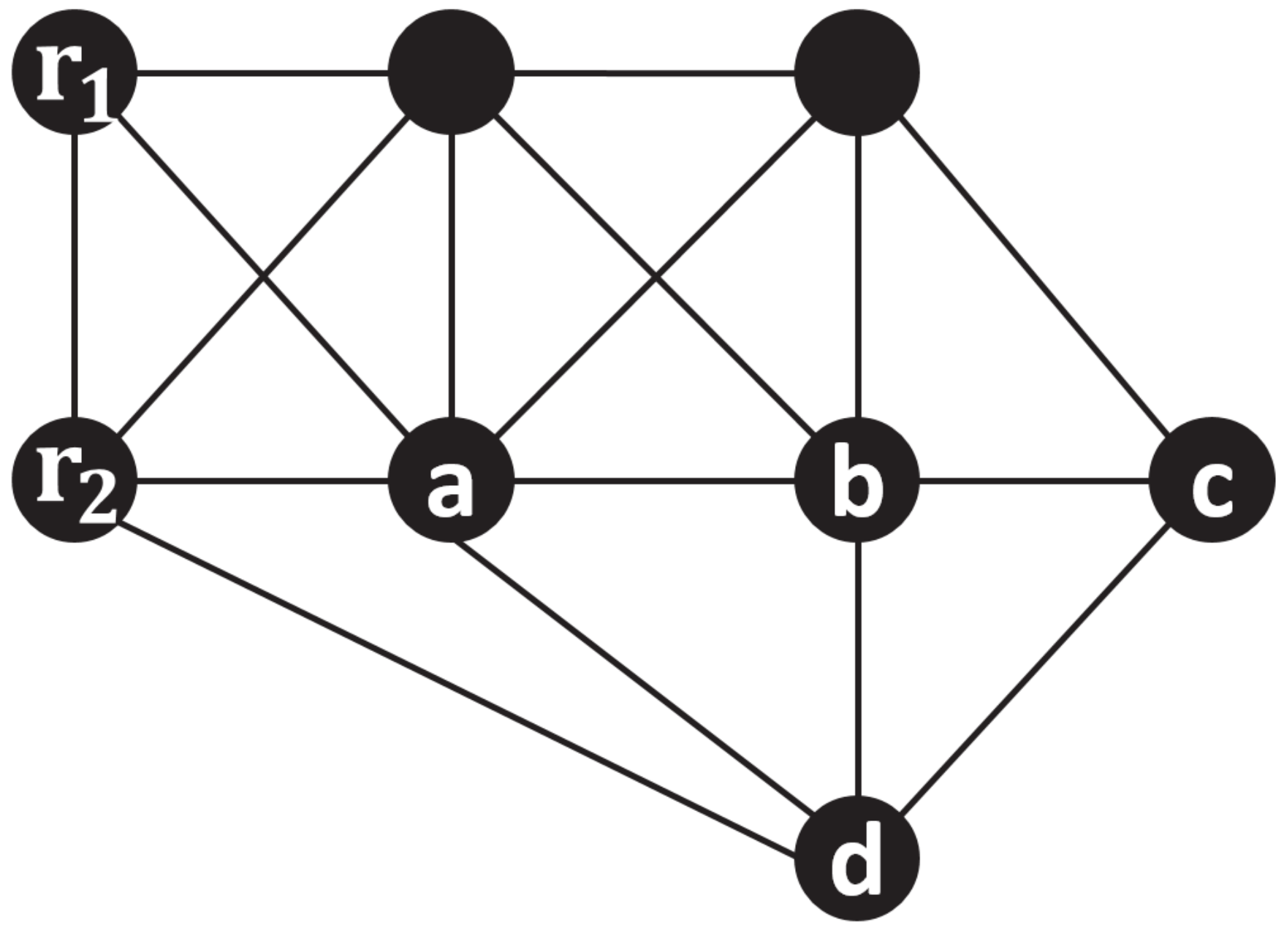}
\footnotesize\text{(a) { Working} scenario for TP and TE}
\end{minipage}

\begin{minipage}{0.45\textwidth}
\centering
\includegraphics[width=6cm, height = 4cm]{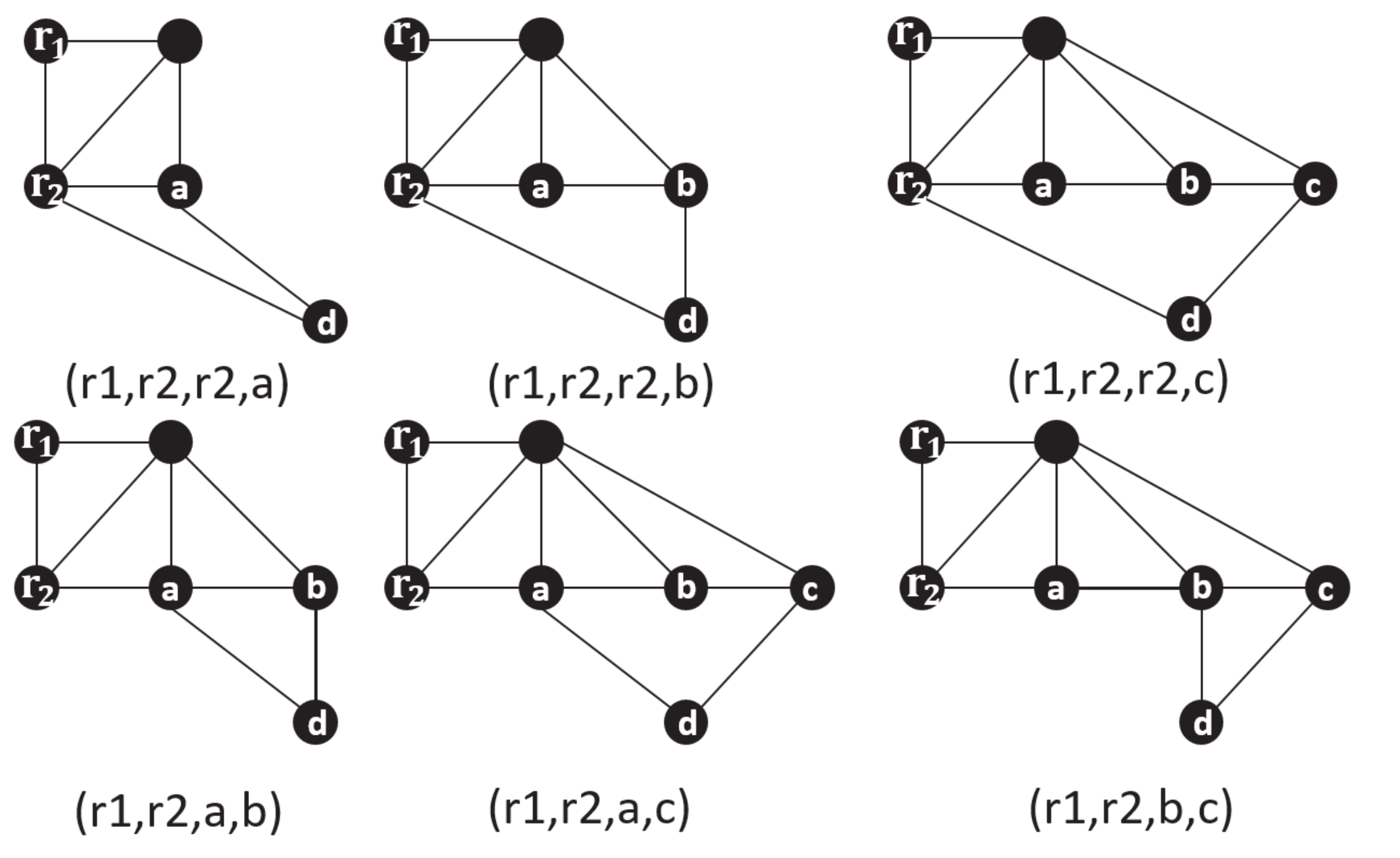}

\footnotesize{(b)  The six branches on node $d$}
\end{minipage}
\caption{Space complexity analysis example}
\label{spacecomlexity}
\end{figure}

As most sensor nodes are highly resource-limited, it is inefficient for a node to maintain a set of all its branches in practice. Hence, ITE is not applicable to large-scale networks. A naive method is to limit the size of the branch set, but it is difficult to locally choose the best branch for a node itself. The best branch can help the node to detect the most localizable nodes.

To address this resource { limitation} problem, we next propose an advanced extension operation, called \emph{directed}-\emph{extension}, as follows. The extension operation in ITE is not directed.  For instance, the undirected extension of ITE can add a new node $v$ in $G$ and two edges ($v$,$v_{1}$), ($v$,$v_{2}$) to form $B(v)$, as long as $v_{1}, v_{2}\in V_{G}$. In directed-extension, $v$'s parents $v_1$ and $v_{2}$ should conform to the following additional rule: $v_{1}$ should be a parent of $v_{2} $ or $v_{2}$ a parent of $v_{1}$ if they are not the two roots. Fig.\ref{newExtension} shows an example: when $b$ is to be added to form $B(b)$, its parent $a$ and another parent $r_2$ conform to the rule, as $r_2$ is a parent of $a$. Furthermore, in directed-extension, once a node changes its state to rigid, it will { no longer} accept any other nodes as its parents. This way, the extension will follow { only one} direction. 

Directed-extension { holds} the property of minimally rigidity, since the new branch set of node $v$ { is a subset of the branch set in the original (undirected version) extension}. Because the directed-extension approach limits the extension possibilities, the set $B$ can be reduced significantly. In Fig.\ref{newExtension}, a node sequence such as ($a, b, c, d, e, f$) is called a \emph{directed triangle extension path} and this extension path specifies a unique branch.

 \begin{figure}[ht]
\centering
\includegraphics [width=0.49\textwidth]{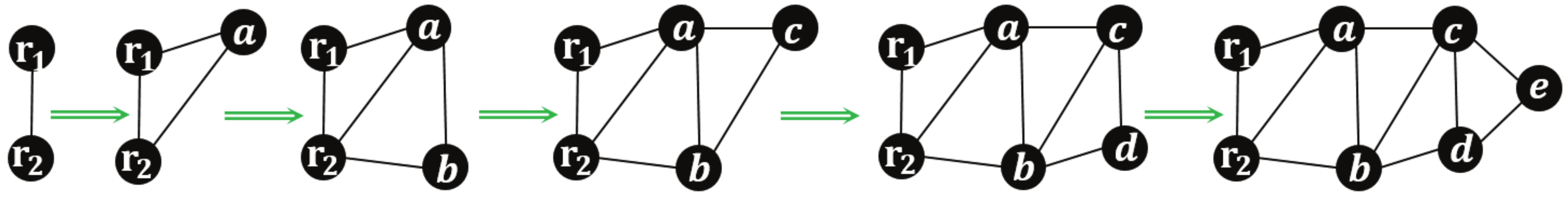}
\caption{Directed-extension} \label{newExtension}
\end{figure}
\begin{figure}[ht]
\centering
\includegraphics [width=0.3\textwidth]{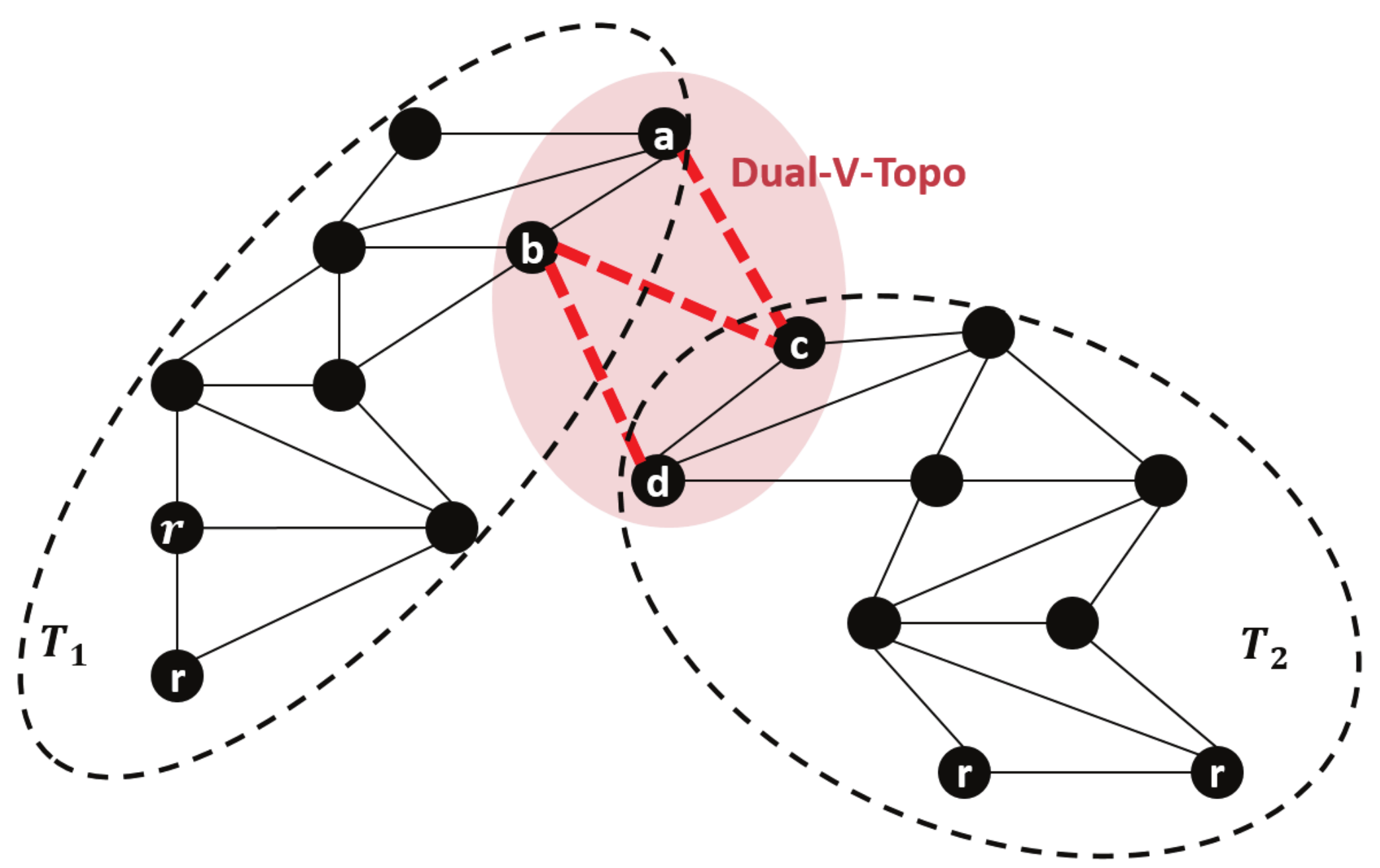}
\caption{Triangle blocks meeting} \label{Tmeets}
\end{figure}

Nonetheless, directed-extension has \emph{an early-stop problem}. Suppose that in a WSN as shown in Fig.\ref{Tmeets} there are two pairs of beacons ({ each root} is denoted as $r$ for simplicity). Each pair of beacons can create a triangle block, denoted as $T_{1}$ and $T_{2}$ respectively. After a series of extensions, $T_{1}$ might overlap $T_{2}$ as shown in the shaded area of Fig.\ref{Tmeets}. The special topology in the overlapping area of $T_{1}$ and $T_{2}$ is named as \emph{Dual-V-Topo}. The Dual-V-Topo is composed of four nodes and five edges, in which two of the nodes have three edges { each}. As the directed-extension on a node stops when the node changes its state to rigid, the nodes $a$, $b$, $c$ and $d$ { perform no further} extension operations. As a result, since there are no more than three non-collinear beacons in each block, neither graph $T_{1}$ nor graph $T_{2}$ can be determined { to be} localizable. However, the two blocks  can actually be determined { to be} localizable by the original undirected triangle extension approach, ITE.

To address this early-stop problem, we modify the detection phase to \emph{dual-v-detection}. A node in a branch may launch dual-v-detection when it finds rigid neighbors { in a triangle block other than its own}. For example, in the network of Fig.\ref{Tmeets}, the extension from $b$ to $a$ and the extension from $d$ to $c$ finish at the same time. Next, node $c$ learns that it can access rigid neighbors $a$ and $b$, both of which belong to a different triangle block. Node $c$ will launch the dual-v-detection procedure to test { whether} node $d$ can access $a$ or $b$. $T_1$ and $T_2$ can be connected by the three edges  $(a, c)$, $(b, c)$, and $(b, d)$ when node $c$ confirms that node $d$ can access $a$ or $b$. The two connected triangle blocks can be determined localizable. This procedure { for} determining the localizability of $T_1$ and $T_2$ is called \emph{dual-v-detection}. Combining the directed extension (Procedure \ref{algote2}), and the dual-v-detection { procedure} (Procedure \ref{algoteD2}), we propose the final version of { the} localizability detection algorithm via triangle extension, denoted as \emph{TE}.

\begin{algorithm}
\caption{TE-Extension Phase}
\label{algote2}
\begin{algorithmic}[1]
\STATE type $b$: \{state, parents\{NULL, NULL\},roots\{NULL, NULL\} \} // b: branch tuple
\STATE type $p$: \{$id$, $b$\} // p: a candidate of parents for this node
\STATE $P \gets \phi $ // $P$: set of $p_i$
\STATE Initialization($myp$) //$myp$, $p_i$ and $p_j$ represent instances of type p
\IF{ReceivedMessagesFromNeighbors($p_i$)}
  \STATE $P$.Add($p_i$) 
  \FOR{$p_j$ in $P$}
   \IF{$p_i.id=p_j.id$}
    \CONTINUE
   \ENDIF
   \IF{$p_i.state$ = $beacon$}
    \IF{$p_j.state$ = $beacon$}
     \STATE $myp.state \gets rigid$; $myp.parents \gets \{p_i,p_j\}$; $myp.roots \gets \{p_i,p_j\}$
     \BREAK
    \ELSIF{$p_i \in p_j.parents$}
     \STATE $myp.state \gets rigid$; $myp.parents \gets \{p_i,p_j\}$; $myp.roots \gets p_j.roots$
    \ENDIF
   \ELSE
    \IF{$p_j.state$ = $localizable$ and $p_j \in p_i.parents$}
    \STATE $myp.state \gets rigid$; $myp.parents \gets \{p_i,p_j\}$; $myp.roots \gets p_i.roots$
     \BREAK
    \ELSIF{$p_i \in p_j.parents$ or $p_j \in p_i.parents$}
     \STATE $myp.state \gets rigid$; $myp.parents \gets \{p_i,p_j\}$; $myp.roots \gets p_i.roots$
     \BREAK
    \ENDIF
   \ENDIF
  \ENDFOR
  \IF{$myp.state \ne flexible$}
 	\STATE Broadcast($myp$)
 	\BREAK
  \ENDIF
\ENDIF
\end{algorithmic}
\end{algorithm}

\begin{algorithm}
\caption{TE-Detection Phase}
\label{algoteD2}
\begin{algorithmic}[1]
 \IF{ReceivedMessagesFromNeighbors($p_n$)}
 	\STATE $P$.Add($p_n$) //$p_n,p_i$: type p
 	\IF{$p_n.state=beacon$}
  	\IF{$p_n \notin myp.roots$ and non-collinear($myp.roots$, $n$)}
   		\STATE $myp.state \gets localizable$
  	\ENDIF
 	\ELSIF{$p_n.state=localizable$ and $myp \in p_n.parents$}
   	\STATE $myp.state \gets localizable$
 	\ELSIF{$p_n.roots \ne myp.roots$ and non-collinear($n.roots$, $myp.roots$)}
  	\STATE //Dual-V-Topo detection
  	\FOR{$p_i$ in $P$}
  		\STATE //find the child and parent in $P$, check parents' neighbors
   		\IF{$p_n \in p_i.parents$ or $p_i \in p_n.parents$}
    			\IF{$p_n \in parents.P$ or $p_i \in parents.P$}
     			\STATE $myp.state \gets localizable$
     			\BREAK
    		\ENDIF
   		\ENDIF
  	\ENDFOR
 	\ENDIF
  \ENDIF
  \IF{$myp.state = localizable$}
  		\STATE Broadcast($myp$)
  \ENDIF
\end{algorithmic}
\end{algorithm}

In TE, a node { in a WSN} broadcasts at most twice, once for the state transition from the flexible state to the rigid state and the other time for the notification of the success of localizability detection. In a WSN, the first traversal launched by two beacons creates several minimally rigid triangle blocks. The second traversal from the third beacon to the roots of a certain branch $B(v)$ in these blocks { informs} all the nodes in $B(v)$ that they are localizable. Hence, the time complexity of TE is $O(n)$, where $n$ is the number of nodes in the WSN. In TE, each node keeps the information about the neighboring beacon { and} rigid nodes. Such information takes $O(m)$ space, where $m$ is the number of neighbors. To find the pair of connected neighbors, TE searches every pair of neighbors on each node. Hence, TE has a time complexity of $O(m^2)$, which is acceptable since $m$ is usually small, especially in a sparse network. 

  {The following discusses the advantages of TE over the other two previous approaches, TP and WE.  The network scenarios that TP, WE, and TE work with are shown in Fig.\ref{comparison-analysis}(a),(b). { TE additionally} works in the network scenario of Fig.\ref{comparison-analysis}(c), where TP and WE cannot. In Fig.\ref{comparison-analysis}, $r_1$, $r_2$, and $q$ are beacons and their distances are known before the algorithms run. Their distances are denoted using the { dash-dotted} lines. As shown in Fig.\ref{comparison-analysis}(a), TP requires the three beacons to be the neighbors of a single node so that TP can measure the distances between node $v$ and the beacons. TE works in this scenario by a triangle extension. As shown in Fig.\ref{comparison-analysis}(b), WE requires three beacons within a $wheel$. The dashed arrows in Fig.\ref{comparison-analysis}(b) show the sequence of extensions by TE. As the extensions from $r_1$ and $r_2$ finally come to  $q$, the nodes can be determined to be localizable by TE. }

  {Fig.\ref{comparison-analysis}(c) presents a common network { in which} many nodes such as $v_3$--$v_6$ are not  neighbors of beacons. Nonetheless, they can extend from beacons by a sequence of extensions, as shown by the numbers 1 to 7 in Fig.\ref{comparison-analysis}(c).  In this figure, when $v_7$ performs the last extension(No. 7), $v_7$ determines that $B(v_7)$ is localizable, since its neighbor $q$ is a beacon and the three beacons, $q$, $r_1$, and $r_2$ { constitute} a triangle. $v_7$ then broadcasts back to its parents, $v_5$ and $v_6$, that the branch $B(v_7)$ is localizable. Similarly,  $v_4$, $v_3$, and $v_1$ are notified by their children that the branch is localizable. { The localizability of node $v_2$ cannot be determined}, however, because it is not in $B(v_7)$ but in $B(v_2)$. }

\begin{figure*}[!b!t]
\centering
\begin{minipage}{0.19\textwidth}
\centering
\vspace{0.6cm}
\includegraphics[width=2.5cm]{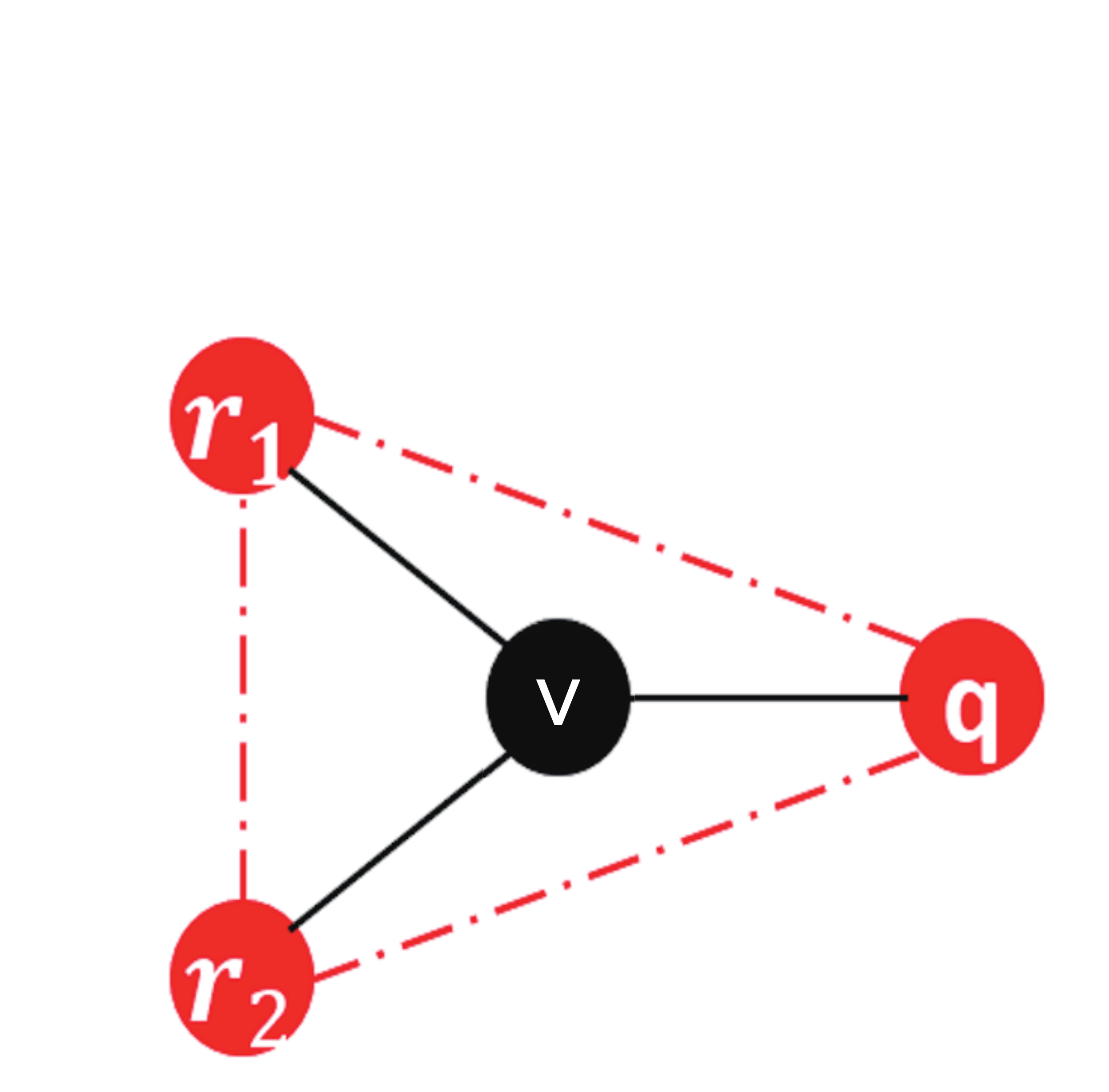}
\vspace{0.8cm}

\footnotesize\text{(a) Scenario of TP and TE}
\end{minipage}
\begin{minipage}{0.07\textwidth}
\centering
\includegraphics[width=1cm ]{blank.pdf}

\end{minipage}
\begin{minipage}{0.32\textwidth}
\centering
\vspace{0.68cm}
\includegraphics[width=5cm, height = 2.5cm ]{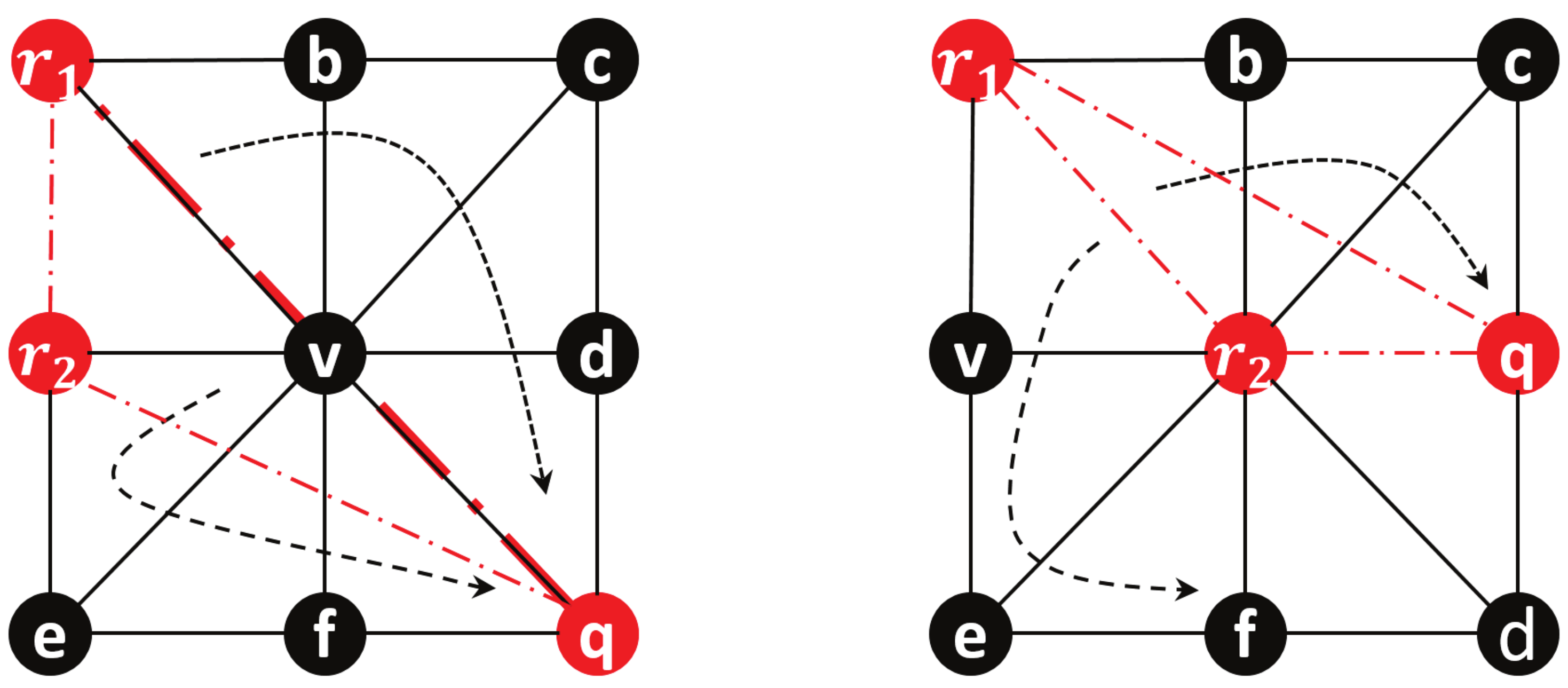}
\vspace{0.68cm}

\footnotesize{(b) Scenario of WE and TE}
\end{minipage}
\begin{minipage}{0.32\textwidth}
\centering
\includegraphics[width=3.5cm, height = 3.2cm]{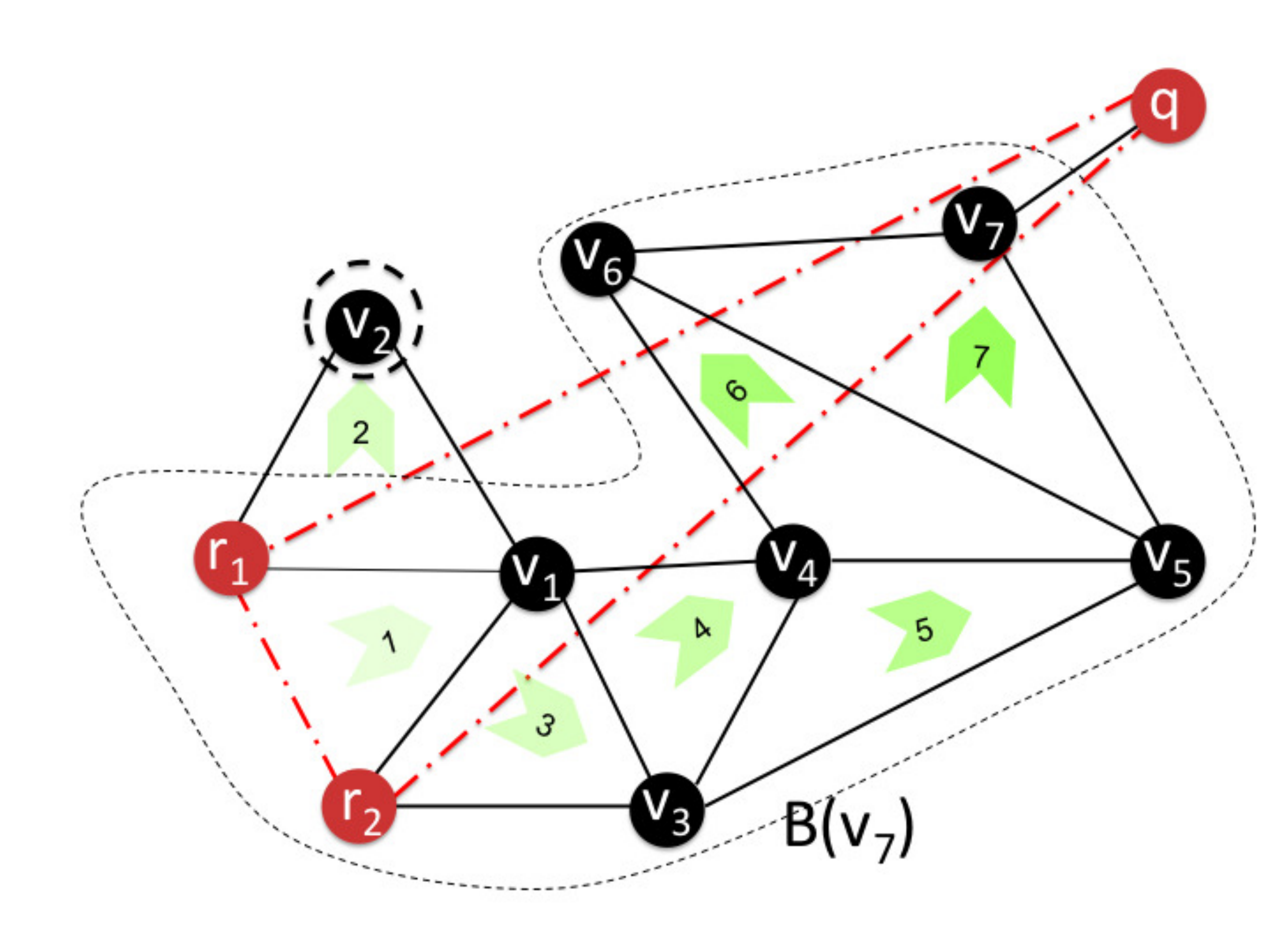}
\vspace{0.6cm}

\footnotesize{(c) Additional scenario of TE}
\end{minipage}
\caption{Comparison of TP, WE and TE}
\label{comparison-analysis}
\end{figure*}

\section{Evaluation}

\subsection{Simulation}
\label{simulations}

  {We first used TOSSIM \cite{tossim} to run the three localizability detection algorithms:  { TE (ours),  WE \cite{wheel}} and TP \cite{trilateration}. TOSSIM is a scalable WSN simulator that can simulate the behaviors of application programs in a WSN with hundreds of nodes\cite{tossim}. The application programs tested on TOSSIM can be directly executed on  { actual} sensor nodes. } 
Table \ref{topoPara}  { shows} the simulation setup, including the network and beacon deployment parameters. The deployment area was partitioned evenly into 20 $\times$ 20 cells and each node was { placed into a random cell}. The cell has a side length of $D_0 \times N$, where $D_0$ is set as the standard unit for node distance and $N$ the network density parameter. The average range for reliable communication of simulated nodes is within $6D_0$. Therefore, we set the maximum value for $N$ to 5.8, to allow the neighboring sensor nodes to communicate. A smaller $N$ value leads to a higher network density in terms of the number of nodes deployed in a unit area.  

\begin{table}[ht]
\caption{Simulation setup}
\centering
\resizebox{0.49\textwidth}{!}{
\begin{tabular}{|C{0.1\textwidth}|C{0.37\textwidth} |C{0.12\textwidth}|}
\hline
\textbf{Parameter} & \textbf{Definition} & \textbf{Value}\\
\hline
$S$ & Scale: number of nodes in a WSN  & $S = 400$\\
\hline
$C$ & Num of correctly~detected~localizable~nodes  & $0 \le C \le S$\\
\hline
$D_0$ & Unit of distance between nodes & 10 m\\
\hline
$B$ & Beacon density  & $0.01 \le B \le 0.2$\\
\hline
$N$ & Network density  & $2.0 \le N \le 5.8$\\
\hline
$L$ & Localizability detection accuracy:  $\frac{C}{S}$   & $0.0 \le L \le 1.0$ \\
\hline
\end{tabular}
}
\label{topoPara}
\end{table}


{ The simulations were run under different beacon and node densities to} reveal  how these factors affect the node localizability detection in WSNs. A beacon { continued} broadcasting its location periodically in each simulation.  The assumption is that the locations of beacons in the simulations and experiments are accurately set { and thus} will not introduce location biases.   

To { obtain} a comprehensive view of the performance of the algorithms under evaluation, we performed a series of simulations with two parameter sets with (1) $B = 0.04$ and (2) $B = 0.1$.  The network density parameters in the two sets are $N =2.4$, $N = 3.2$, and $N = 4.2$. These two parameter sets specify six WSNs, from a low beacon density and network density to a high beacon density and network density.  We ran each algorithm 30 times on TOSSIM under the two parameter sets.  Fig.\ref{composition} shows $L$, which is the average percentage of detected localizable nodes over the total number of nodes in the WSN, for the three algorithms.  The range line on the bar for an algorithm shows the upper and lower bounds of $L$ { for that} algorithm.  The results show that TE performed the best on average. Furthermore, TE is stable as the upper and lower bounds of $L$ for TE { are close together}. The bounds actually show the localizability detection probability distribution of an algorithm.

TE performed especially well when the network and beacons { were} sparse. Sparse beacons are { typical} scenarios in WSN applications, since it is often infeasible to deploy a network with up to 10\% beacons and short node distances. In addition, a high network density is not realistic, either. The parameter { value} of $N = 2.4$ might result in about 50 neighbors for an internal node. This large number of neighbors is good for localizability detection but it is { not efficient}, as there may be heavy radio interference and the node energy will be exhausted { early on}.

\begin{figure}[htbp]
\centering
\begin{minipage}{0.24\textwidth}
\centering
\includegraphics[width=4.8cm]{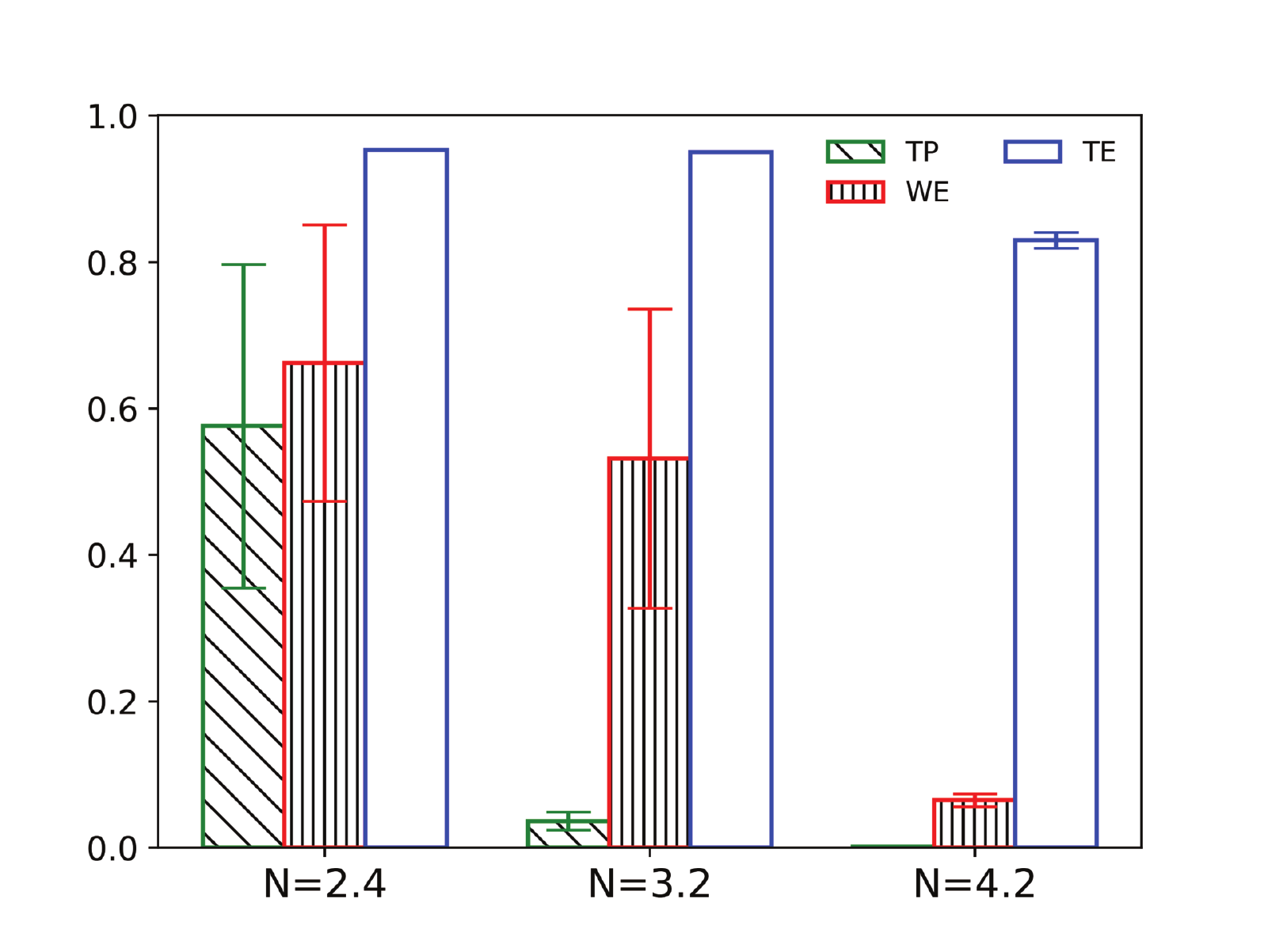}

\footnotesize{(a) B = 0.04: low beacon density}
\end{minipage}
\begin{minipage}{0.24\textwidth}
\centering
\includegraphics[width=4.8cm]{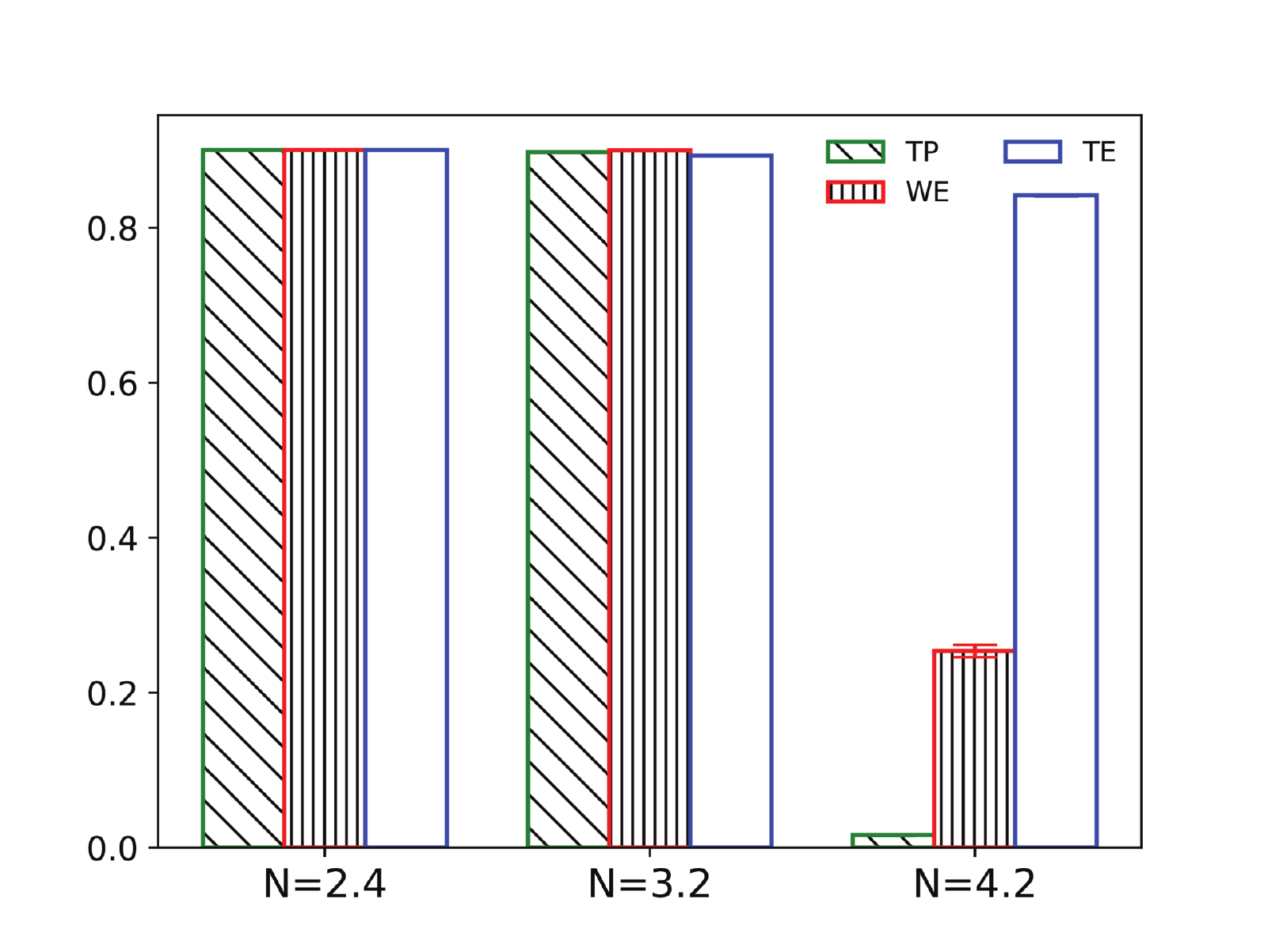}

\footnotesize{(b) B =  0.1: high beacon density}
\end{minipage}
\caption{Percentages of localizable nodes under different beacon and network densities}
\label{composition}
\end{figure}

Fig.\ref{bnl} records the values of $L$ { while $B$ and $N$ are gradually changed}. As some algorithms may not function under { very low} beacon or network densities, the results of $L$ in Fig.\ref{bnl} were recorded by running the three algorithms { until} timeout. The timeout length was set to 15 seconds, after which { none} of the three algorithms could find more than 1\% { additional} localizable nodes. The detected localizable nodes are theoretically correct. Therefore, the scheme that finds the most localizable nodes is the best one.  It can be seen that TE significantly outperforms TP and WE when $N$ is 4.2. On average, TE's $L$ is at least 50\% higher than { that for} TP and WE.  The { numeric} values of $L$ for TE, TP, and WE { are} listed in Tables \ref{teresult}, \ref{tpresult}, and \ref{weresult}) in Appendix B, respectively. The value of $L$ is calculated to an accuracy of 0.001, as our simulated network scalability is within a thousand.

\begin{figure*}[htbp]
\centering
\includegraphics[width=0.92\textwidth, height = 3cm]{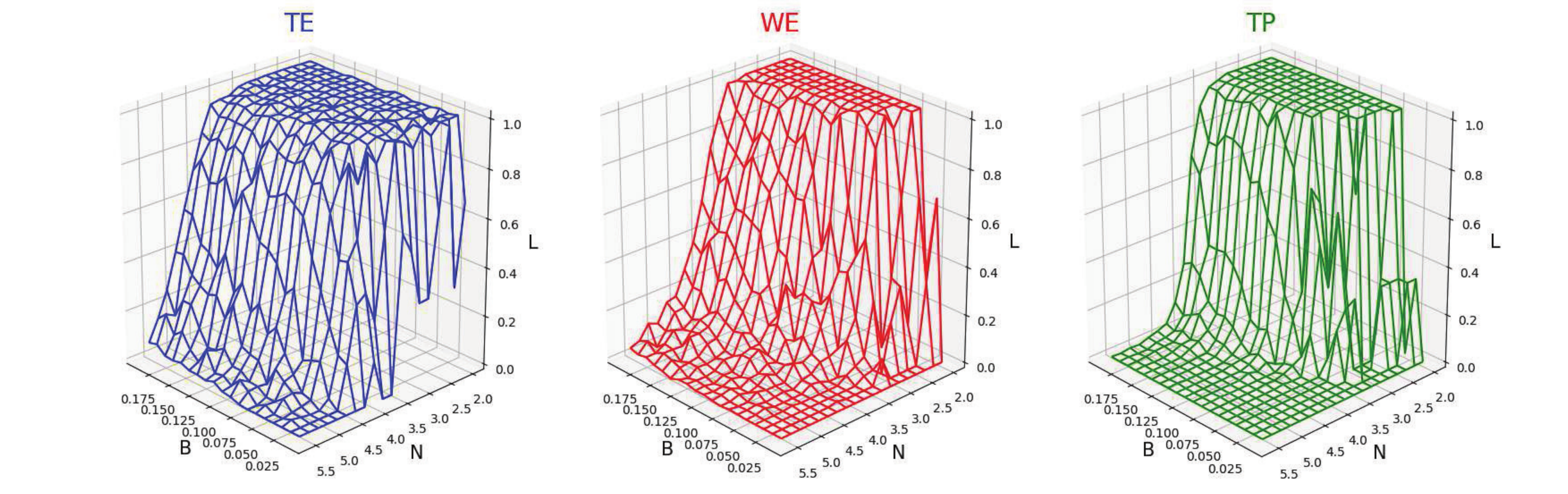}
\caption{The effect of beacon deployment and node density}\label{bnl}
\end{figure*}

We can draw the following conclusions from Fig.\ref{bnl}: (1) Node density is the dominant performance factor, especially for TP and WE. (2) A higher beacon density in a network helps finding more localizable nodes. (3) The performance of TP and WE drops sharply when $N$ grows larger than $3.5$ (Fig.\ref{bnl}). In { contrast}, the performance of TE is more stable.

During the above simulations, beacons were { placed randomly} in the WSNs.  We found that the beacon placement could affect the detection when $B \approx 0.01$, as can be seen in the first row of each of the three tables (Tables \ref{teresult}--\ref{weresult}). Fig.\ref{specialBD} shows the localizable nodes of a sparse beacon deployment network, { a kind of network for which} it is difficult to detect the localizable nodes. The network of Fig.\ref{specialBD} has only four beacons placed in a $400\times400$ square ($N=2.0$, $B=0.01$).

\begin{figure}[htbp]
\centering
\begin{minipage}{0.24\textwidth}
\centering
\includegraphics[width=4.8cm]{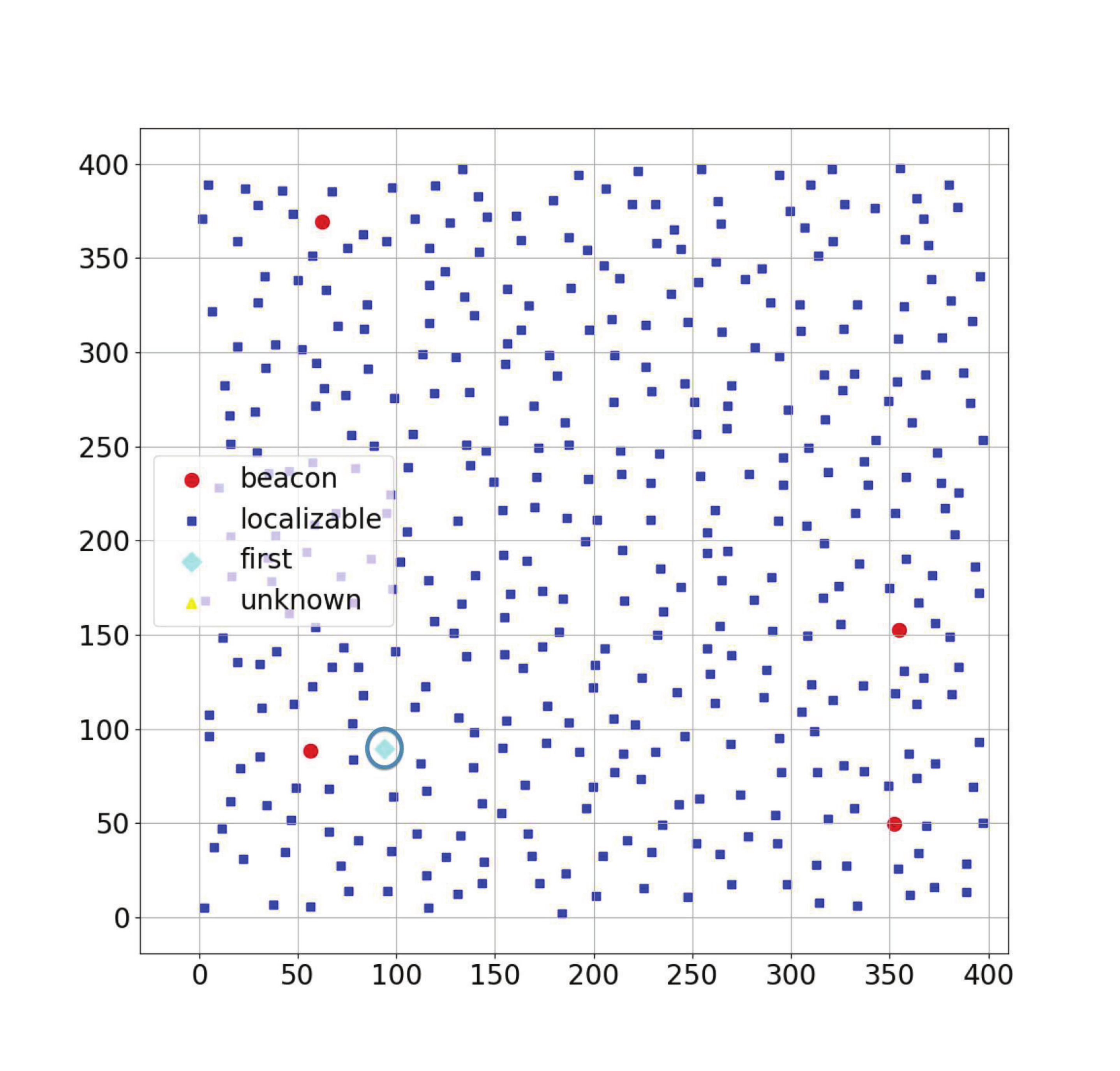}

\footnotesize{(a) [TE]}
\end{minipage}
\begin{minipage}{0.24\textwidth}
\centering
\includegraphics[width=4.8cm]{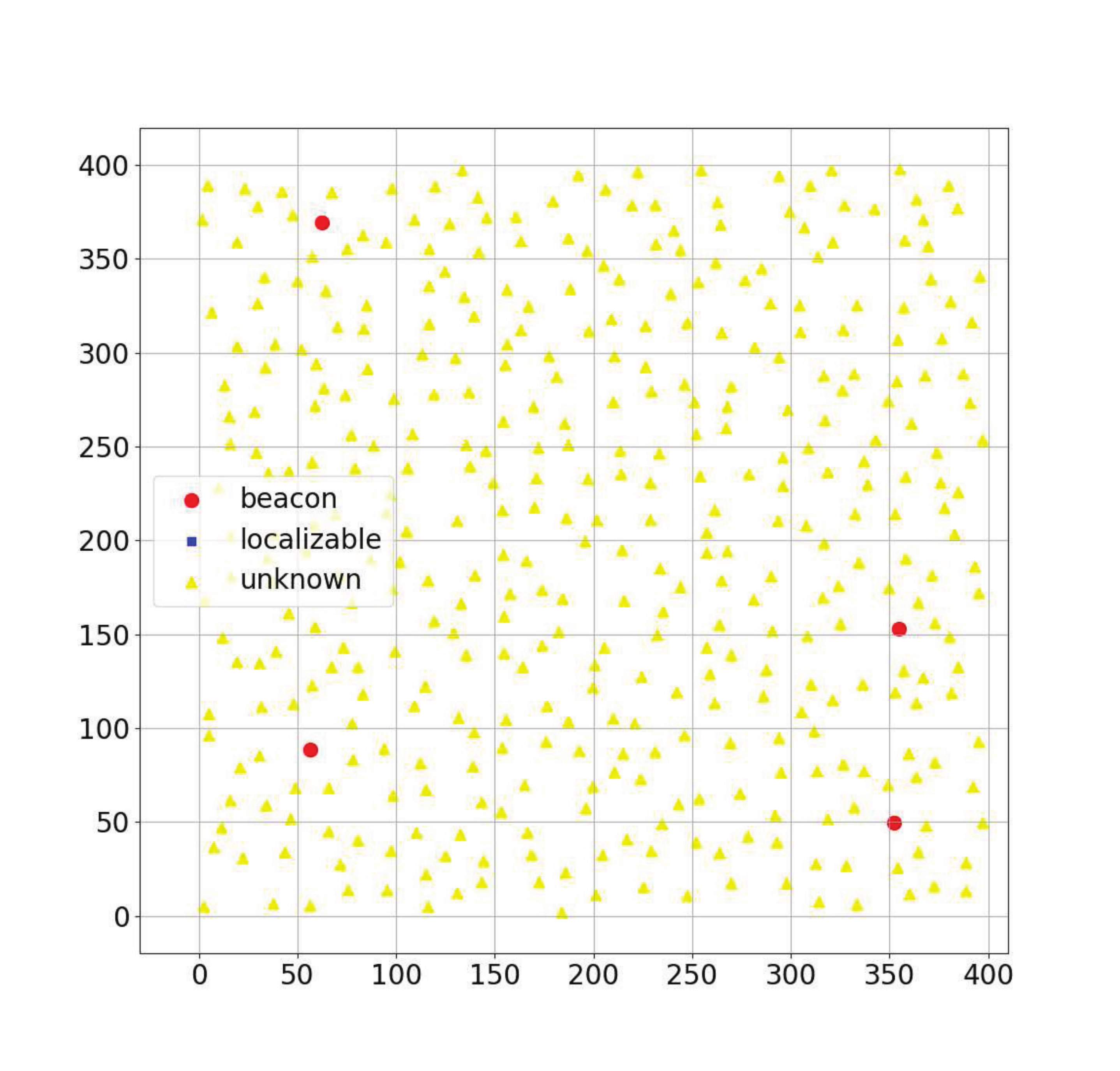}

\footnotesize{(b) [WE and TP]}
\end{minipage}
\caption{Sparse beacon placement}
\label{specialBD}
\end{figure}

WE and TP could not find any localizable nodes when the beacons  { were} placed much more sparsely, whereas TE still worked well even when only two beacons were closely placed  { near} a flexible node. The first localizable node found by TE was {close} to the third beacon, as marked in Fig.\ref{specialBD}.

The simulations demonstrate that the triangle extension (TE) method improves the localizability detection efficiency, especially when sensor nodes are sparsely deployed. Moreover, for each flexible node $v$, TE only requires two non-flexible neighbors for state transitions. In contrast, WE needs at least six edges to set up a wheel graph, and TP needs three edges connected to three localizable neighbors. As a result, { not only did} TE perform the best among the three algorithms, but it also required the least information about the network.

 \begin{table}[ht]\footnotesize
\caption{Hole simulation results}
\centering
\begin{tabular}{|c|c|c|c|c|c|}
\hline
& 1 & 2 & 3 & 4 & 5\\
\hline
TE & 0.740 & 0.855 & 0.835 & 0.789 & 0.797 \\
\hline
WE & 0.363 & 0.430 &0.499 & 0.560 & 0.545 \\
\hline
TP & 0.148 & 0.217 & 0.173 & 0.247 & 0.236 \\
\hline
\end{tabular}
\label{holeT}
\end{table}

We next evaluated the algorithms on networks with holes. A hole { of a network} is defined as an empty area within the network that has a minimum diameter greater than the transmission range of the nodes.  Hence, any two nodes on opposite borders of a hole are not neighbors.  Five simulations were performed under the same network deployment with $B=0.1$ and $N=3.2$.  To facilitate { hole generation}, there was no cell partitioning in the networks of these five simulations.

Table \ref{holeT} lists the average results of the five simulations. Fig.\ref{Holelocation} shows the nodes in one of the simulated networks classified by the algorithms. Since the other simulation results are similar, we omit them in the figure. In all simulations, TE detected more localizable nodes than the other two algorithms.  To explore the effect of network density $N$ on different algorithms, we further performed  simulations on a network with a hole under $N=2.4$ with $B = 0.1$. The three algorithms could detect all the localizable nodes in the four corners, but TE was the fastest one to finish when $N=2.4$ and $B = 0.1$.   
\begin{figure*}[!h!t!bp]
\centering
\subfigure[TE]{
\includegraphics[width=0.322\textwidth, height = 4cm]{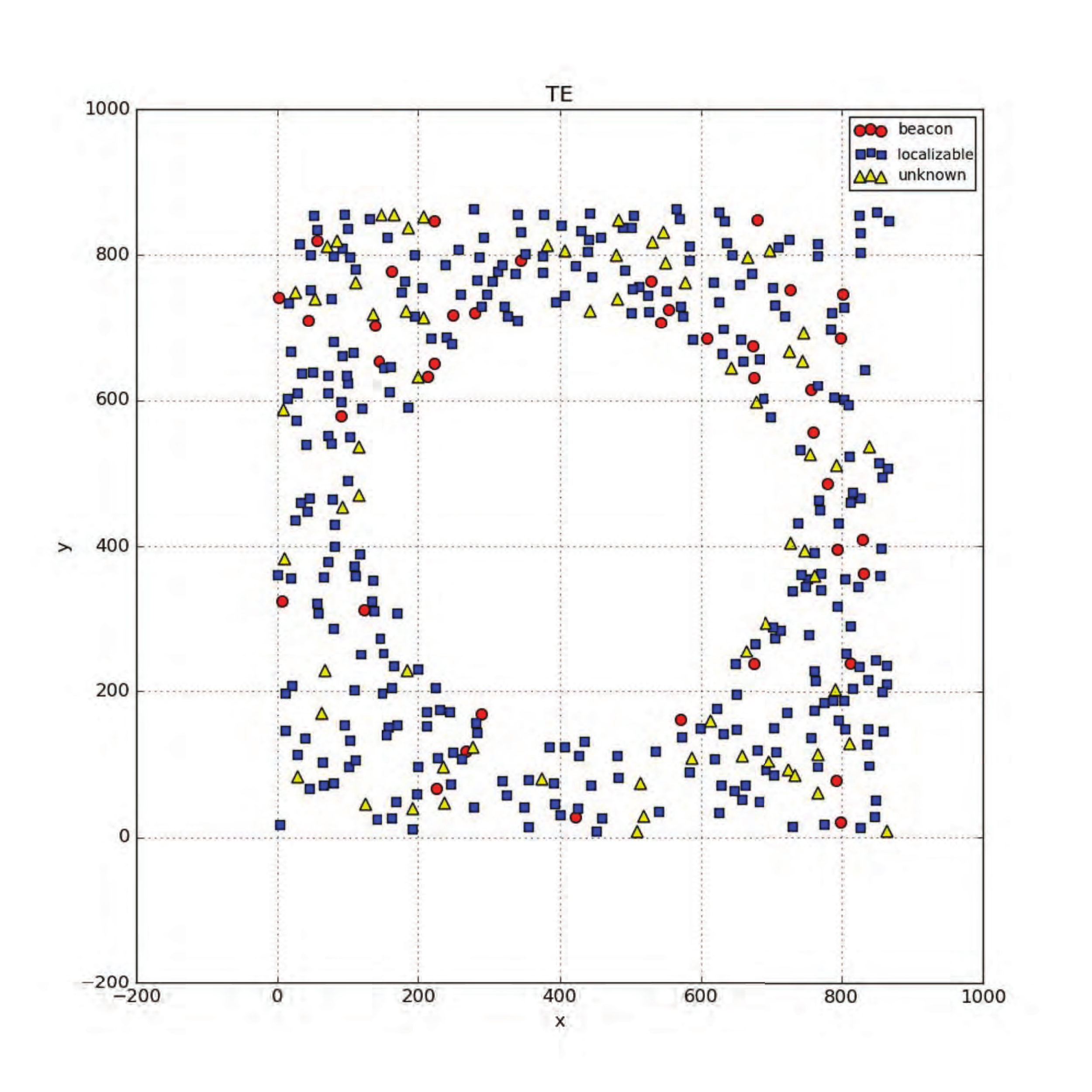}}
\subfigure[WE]{
\includegraphics[width=0.322\textwidth, height = 4cm]{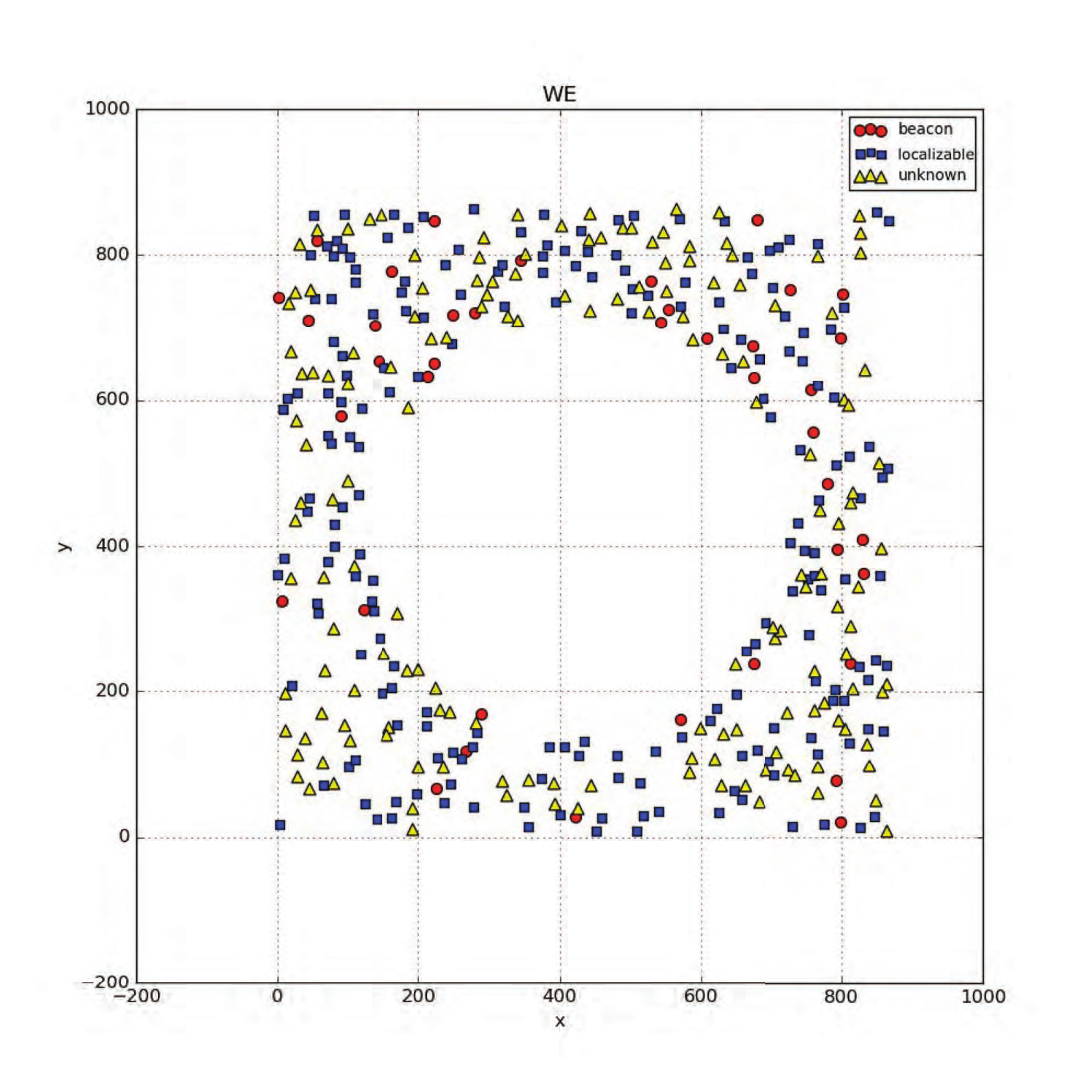}}
\subfigure[TP]{
\includegraphics[width=0.322\textwidth, height = 4cm]{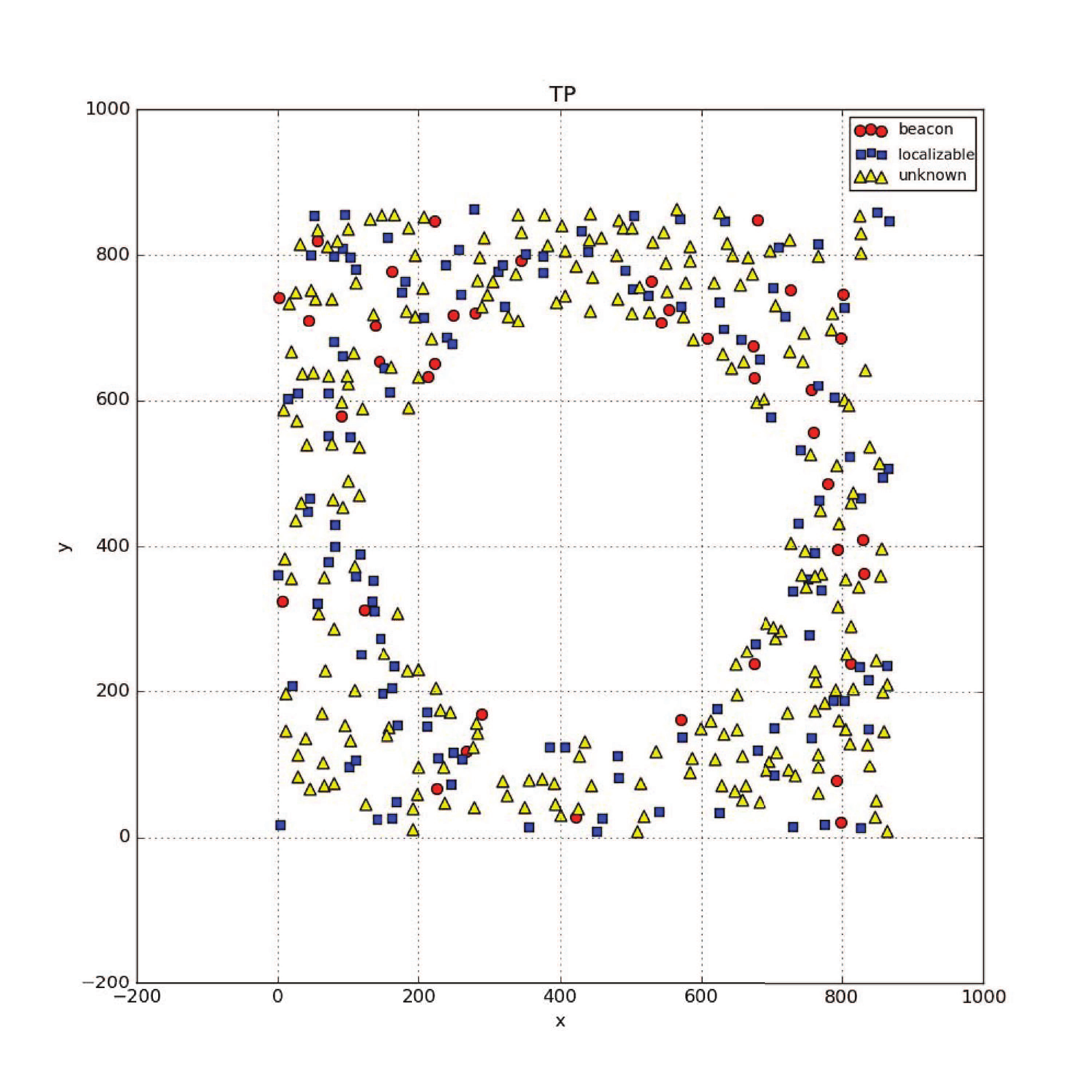}
}
\caption{Hole simulation (sparse network) T0: N = 3.2, B =0.1}
\label{Holelocation}
\end{figure*}

We then adjusted $B$ to 0.05 and carried out two different beacon deployment distributions, \emph{random} and \emph{skewed}, as shown in Fig.\ref{Holesimulation1} and Fig.\ref{Holesimulation2}, respectively.  The random beacon deployment is shown in Fig.\ref{Holesimulation1}. Fig.\ref{Holesimulation2} shows that, even with the same beacon density, when beacons were densely deployed in some corners, both TP and WE could work in these corners.  In contrast, TE worked in both deployments. It was not affected by the sparse beacon deployment, { whether} skewed or not.
\begin{figure*}[!htbp]
\centering
\subfigure[TE]{
\includegraphics[width=0.32\textwidth, height = 4cm]{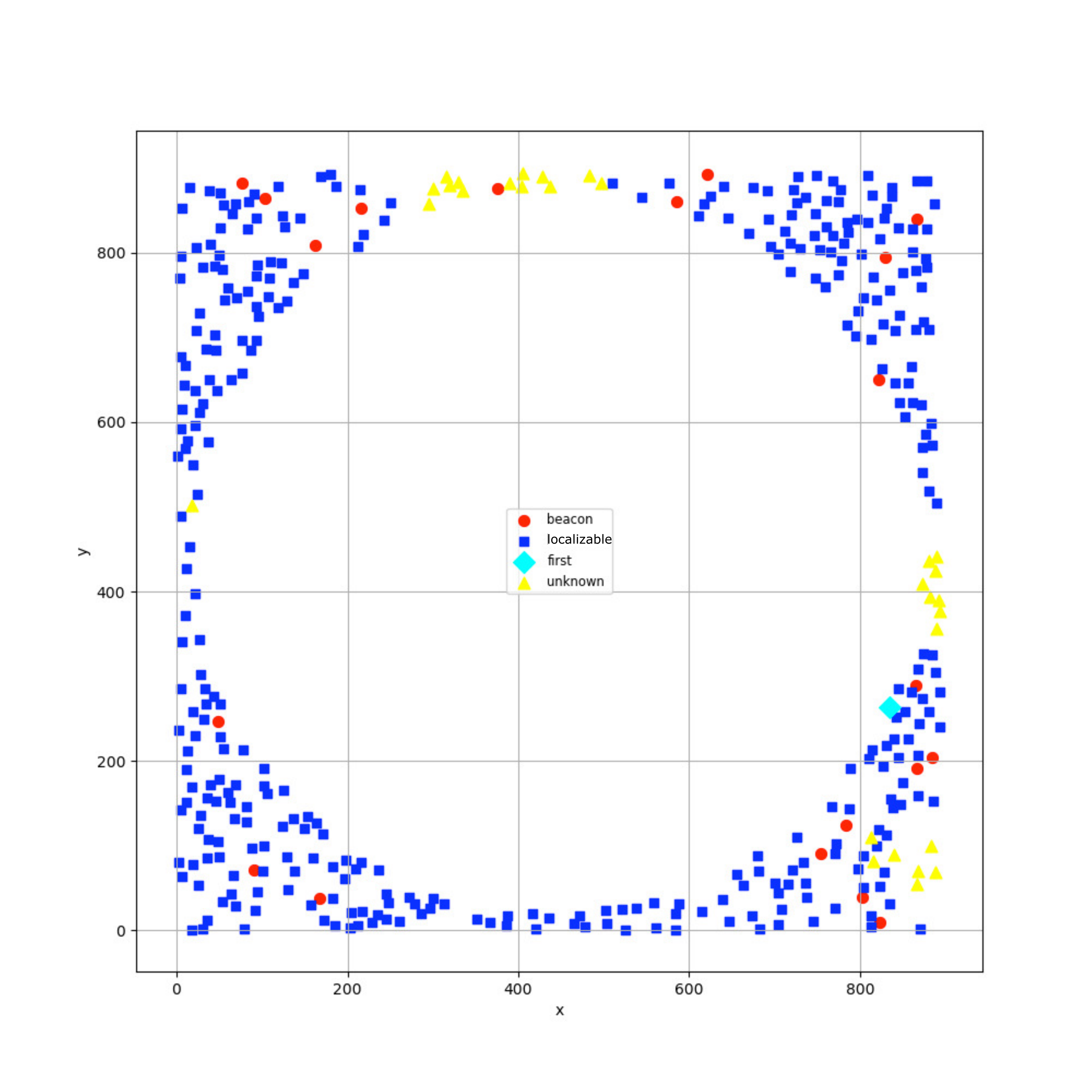}}
\subfigure[WE]{
\includegraphics[width=0.32\textwidth, height = 4cm]{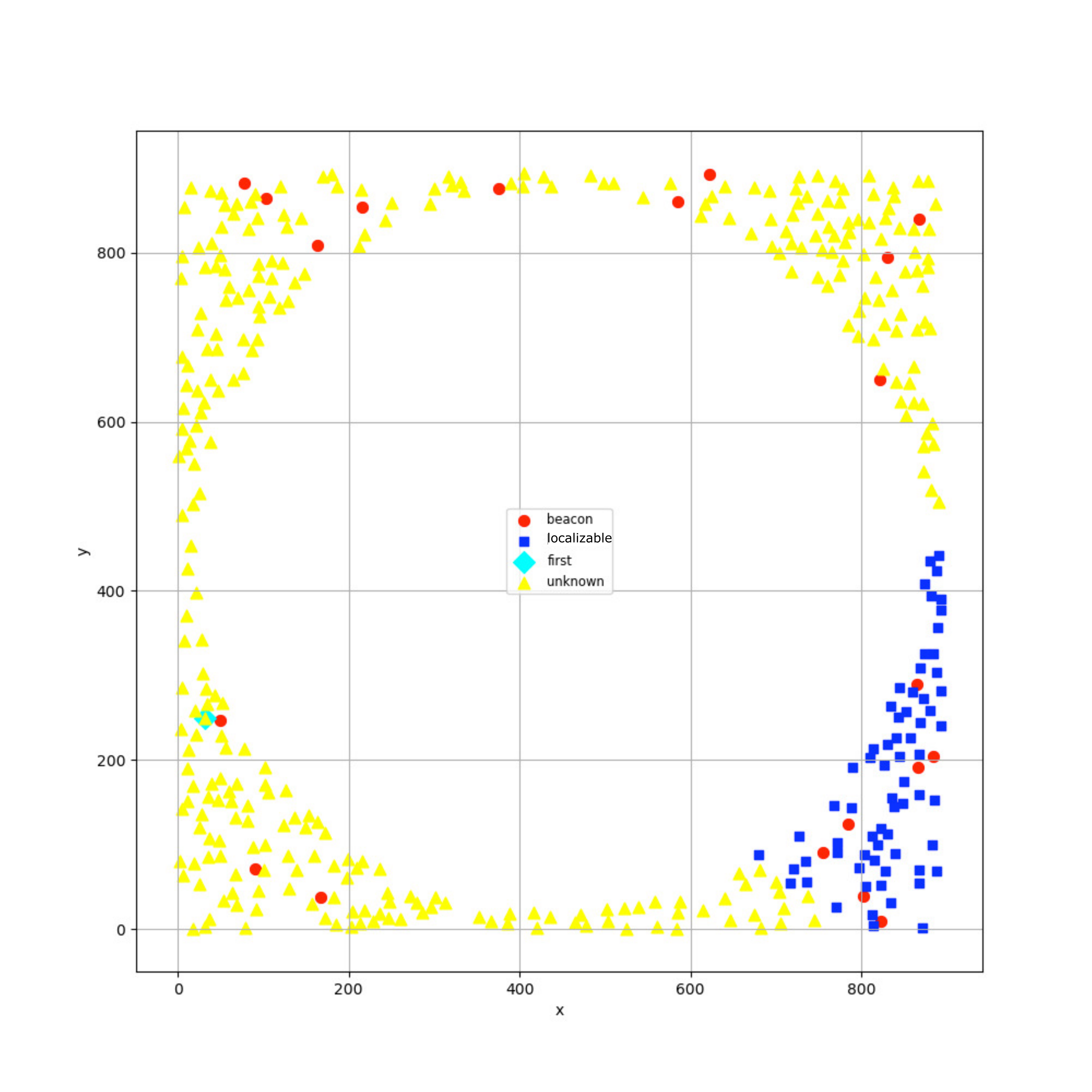}}
\subfigure[TP]{
\includegraphics[width=0.32\textwidth, height = 4cm]{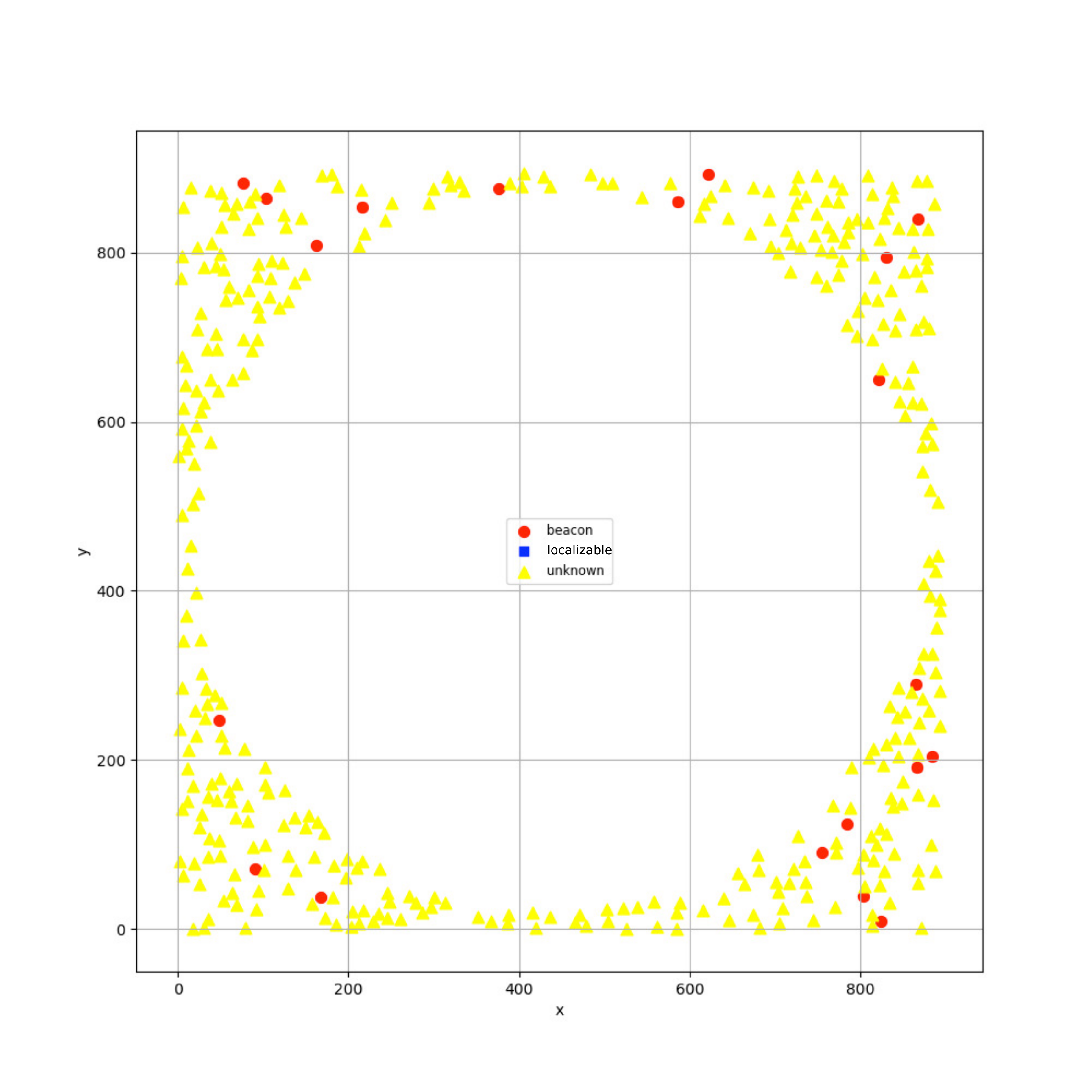}
}
\caption{Hole simulation (dense network, random beacon deployment) T1: N = 2.4, B =0.05}
\label{Holesimulation1}
\end{figure*}
\begin{figure*}[!htbp]
\centering
\subfigure[TE]{
\includegraphics[width=0.32\textwidth, height = 4cm]{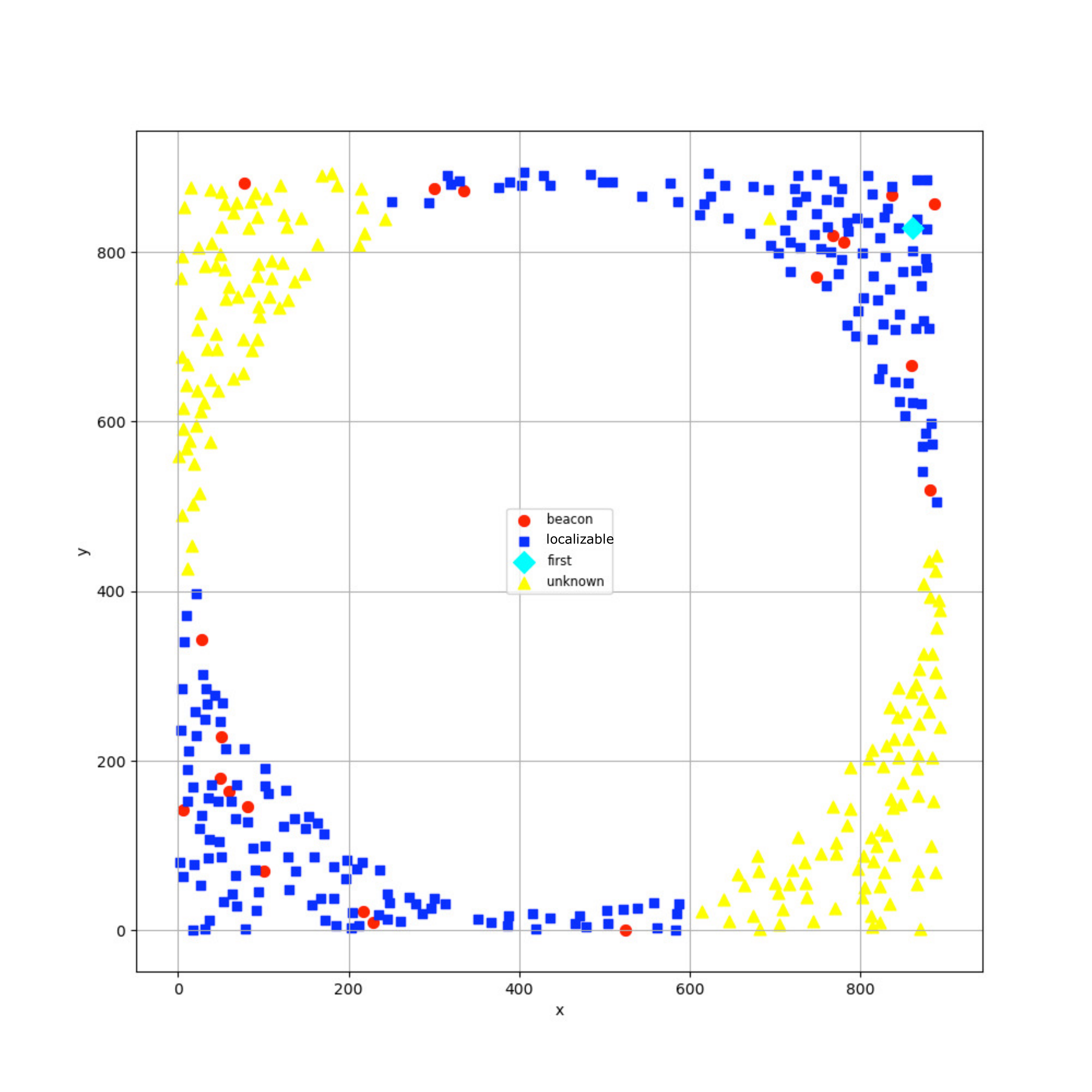}}
\subfigure[WE]{
\includegraphics[width=0.32\textwidth, height = 4cm]{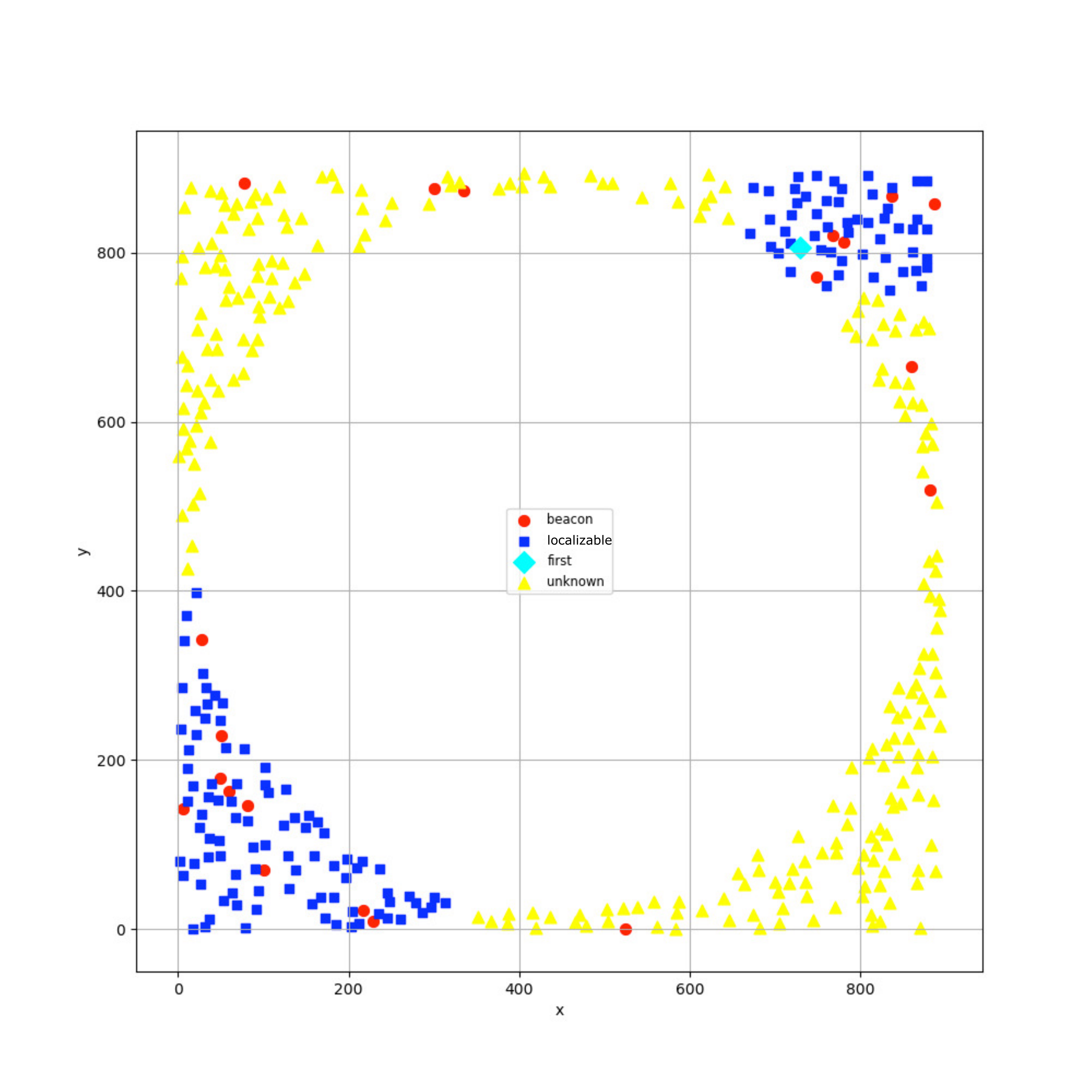}}
\subfigure[TP]{
\includegraphics[width=0.32\textwidth, height = 4cm]{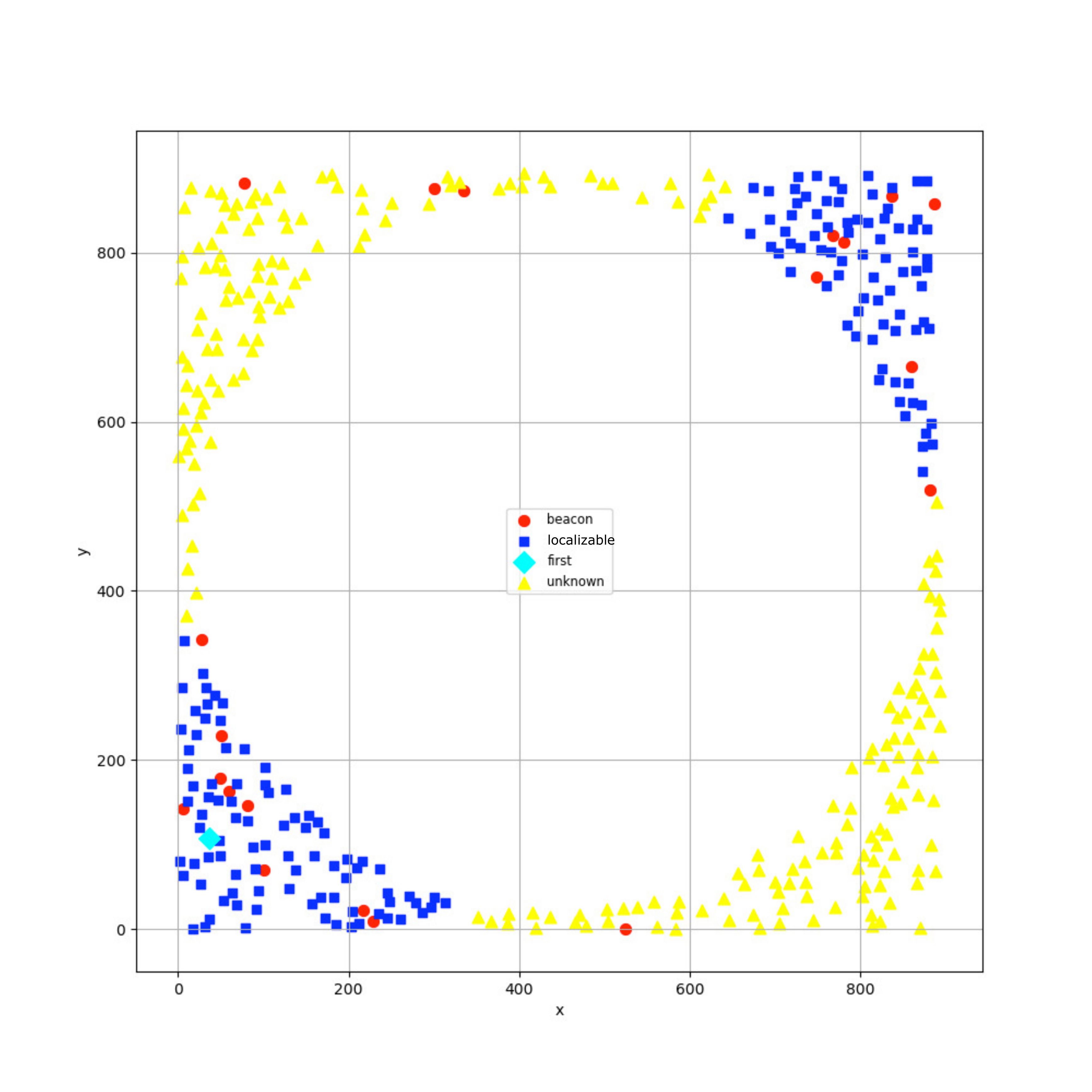}
}
\caption{Hole simulation (dense network, skewed beacon deployment) T2: N = 2.4, B =0.05}
\label{Holesimulation2}
\end{figure*}

Finally, we estimated the energy consumption of the algorithms using the execution time and the node electric current. In the energy consumption measurement in VMNet \cite{vmnet}, the average electric current { of} a working node is about 20 mA. { The execution time of each algorithm} can thus indicate its energy consumption on a sensor node. The simulations were performed under different $N$ and $B$ parameter { values}. Fig.\ref{energy} shows the simulation results.  The values on the Y-axis are the numbers of run cycles. Suppose the time length of each run cycle is $T$ and an algorithm runs $P$ cycles: The energy consumption of the algorithm is estimated as $E= 20~(mA)/1000\times 3~(V)\times P \times T~(s) = 0.06PT~(J)$, where $3~(V)$ is the voltage of the batteries of a sensor node.  The algorithms were driven to detect from 25\% to 50\% of the localizable nodes in a WSN. However, some algorithms failed to reach the required localizability detection percentages.  We set a timeout to stop the algorithms as can be seen by the columns that reach the { maximum} Y (350) in  Fig.\ref{energy}.  The results show that TE consumed the { least} energy to find the same number of localizable nodes.

\begin{figure}[]
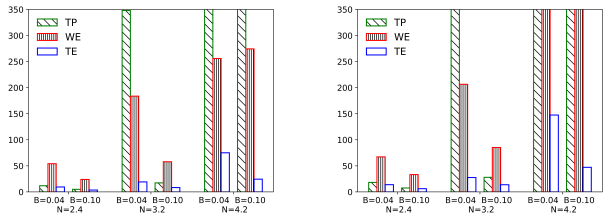

\centering
\subfigure[25\% localizable nodes detected]{
\includegraphics[width=0.23\textwidth]{bar3NBtl100.pdf}}
\subfigure[50\% localizable nodes detected]{
\includegraphics[width=0.23\textwidth]{bar3NBtl200.pdf}}
\caption{Energy consumption(Unit: 0.06T Joule)}
\label{energy}
\end{figure}

\vspace{-0.3cm}

\subsection{Experiments}

We performed two series of experiments to evaluate the performance of our TE { algorithm} with up to 14 TelosB wireless sensor nodes. The same detection program was installed on the sensor nodes, each of which was assigned a unique ID. The hardware configuration of a sensor node is listed in Table \ref{mote}.

\begin{table}[ht]\footnotesize
\caption{Experiment setup} 
\centering 
\label{mote}
\begin{tabular}{|c|c|} 
\hline 
Operating System: & TinyOS\\
\hline
Processor: & 16-bit RISC \\ 
\hline
Memory: & 48 kB Flash and 10KB RAM\\
\hline
Bandwidth: & 250 kbps\\
\hline 
\end{tabular}
\end{table}

In order to take photos with all sensor nodes in our experiments, we reduced the node radio power to the minimal level to limit the network deployment area. The photos can thus indicate the status of all the nodes in each network through the LED lights of the sensor nodes. The red, yellow, and blue  LED lights are used to represent the three states, flexible, rigid, and localizable, respectively.
\vspace{-0.3cm}
\subsubsection{Experiment 1}
The deployment of the sensor nodes of a WSN in our experiments is shown in Fig.\ref{distribution1}. The nodes with red light turned on in the figure functioned as beacons. Our manual deployment ensured that certain nodes have to communicate with the beacons via multi-hop. This setup enables nodes to start the direction-extension to construct a triangle block. Otherwise, the nodes in the WSN might either determine their localizability without extension or be theoretically non-localizable.

\begin{figure}[ht]
 \centering
 \includegraphics[width=0.15\textwidth, height = 0.8cm]{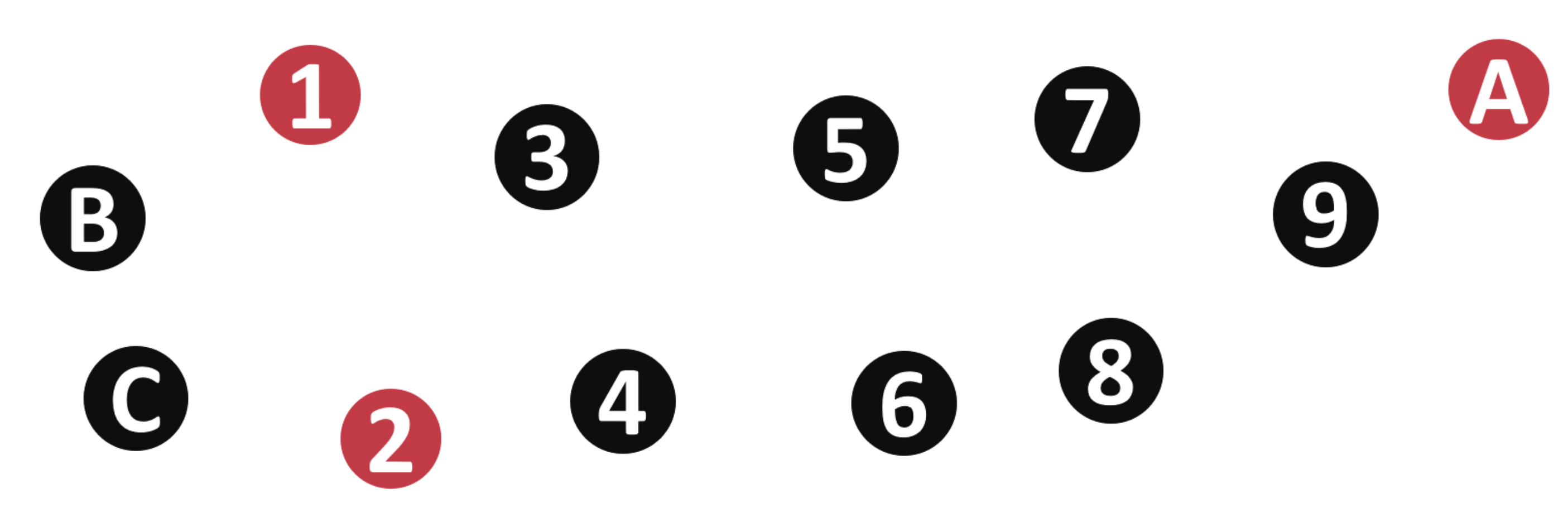}\\
 \caption{Node deployment in Experiment 1}\label{distribution1}
\end{figure}

In the WSNs { for the} experiments, the three beacons were sequentially set up within three seconds. Then, the pair of beacon nodes 1 and 2 launched the extensions. The LED light switching sequences were recorded manually. The LED light switching sequence corresponds to the temporal order of the state transition events. More specifically, the sequence of switching from red to yellow LED, named as the yellow sequence, indicates the transition from the flexible state to the rigid state; similarly, the blue sequence refers to state transition from the rigid state to the localizable state. The two recorded sequences are listed in Table \ref{e1Sequence}. In { the} table, { nodes whose LEDs switched simultaneously are combined as a single element}. The yellow sequence took about five seconds, and the blue sequence, three seconds.

\begin{table}[ht]\footnotesize
\caption{Sequences in experiment 1} 
\centering 
\label{e1Sequence}
\begin{tabular}{|c|c|} 
\hline
Yellow sequence & 3 and C, 4 and B, 5, 6, 7, 8, 9 \\ [0.5ex]
\hline 
Blue sequence & 9, 8 and 7, 6 and 5, 3 and 4 \\
\hline
\end{tabular}
\end{table}

The blue sequence finished more quickly as a localizable node also informs a pair of its parents { thus spreading} the information faster. Fig.\ref{e11} shows the final status of the whole network. Nodes $B$ and $C$ were not included in the blue sequence because they stayed rigid. The two branches grew in the triangle block, and only one branch could find the third beacon, $A$. The branch containing $B$ and $C$ could not find beacon $A$. In fact, the branch containing $B$ and $C$ is theoretically non-localizable.

\begin{figure}[ht]
 \centering
 \includegraphics[width=0.23\textwidth, height = 1.5cm]{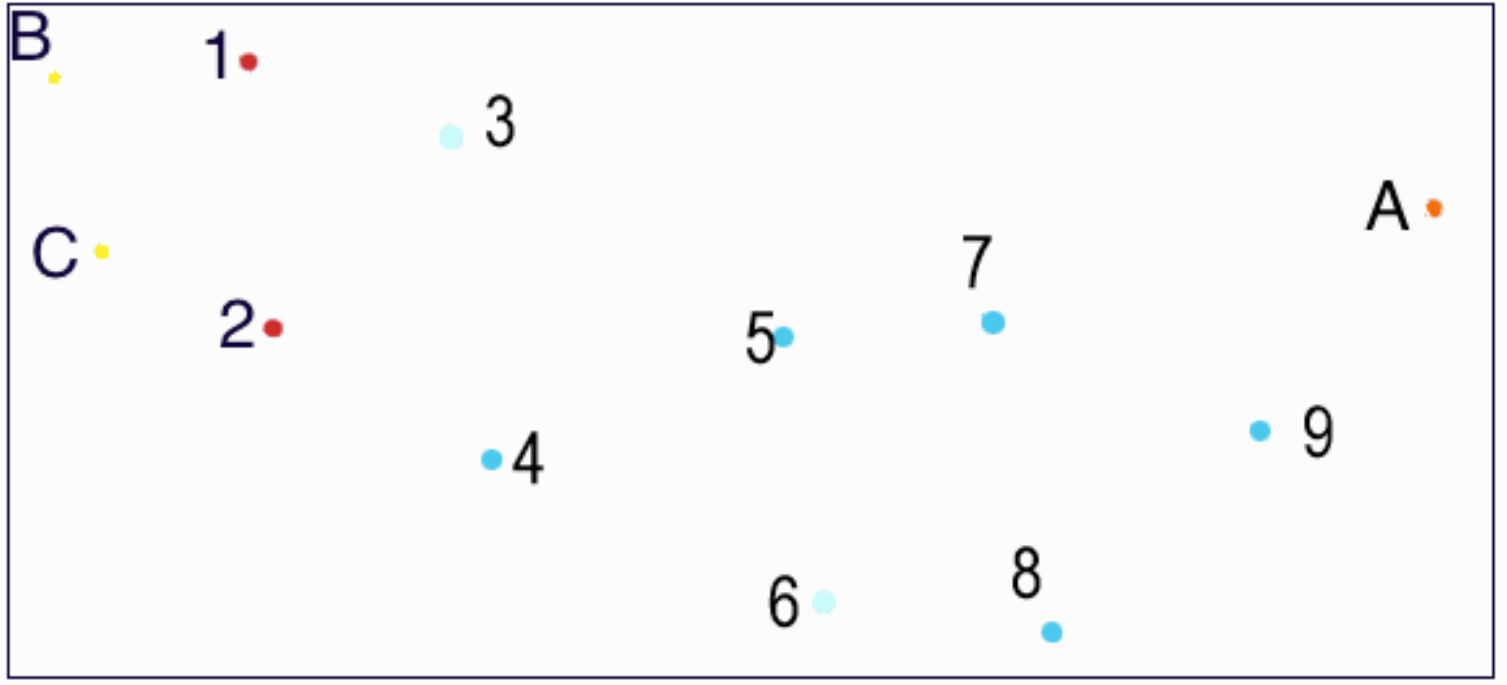}\\
 \caption{Snapshot of Experiment1 }\label{e11}
\end{figure}

\vspace{-0.3cm}

\subsubsection{Experiment 2}

In this experiment, the sensor nodes were divided into two groups. The IDs and approximate locations of the sensor nodes are shown in Fig.\ref{distribution3}. The distance between groups was changed to test the Dual-V-Topo scenario and { to verify that} our dual-v-detection method works. In each group, two nodes were chosen as a pair of beacons to launch extensions. To reproduce the meeting of two triangle blocks, two nodes in different groups could not receive messages from each other initially. After each group finished constructing its own triangle block, some border nodes in one group would be moved closer to the other group. Consequently, the two groups could form a Dual-V-Topo.

\begin{figure}[ht]
 \centering
 \includegraphics[width=0.15\textwidth, height = 1.5 cm]{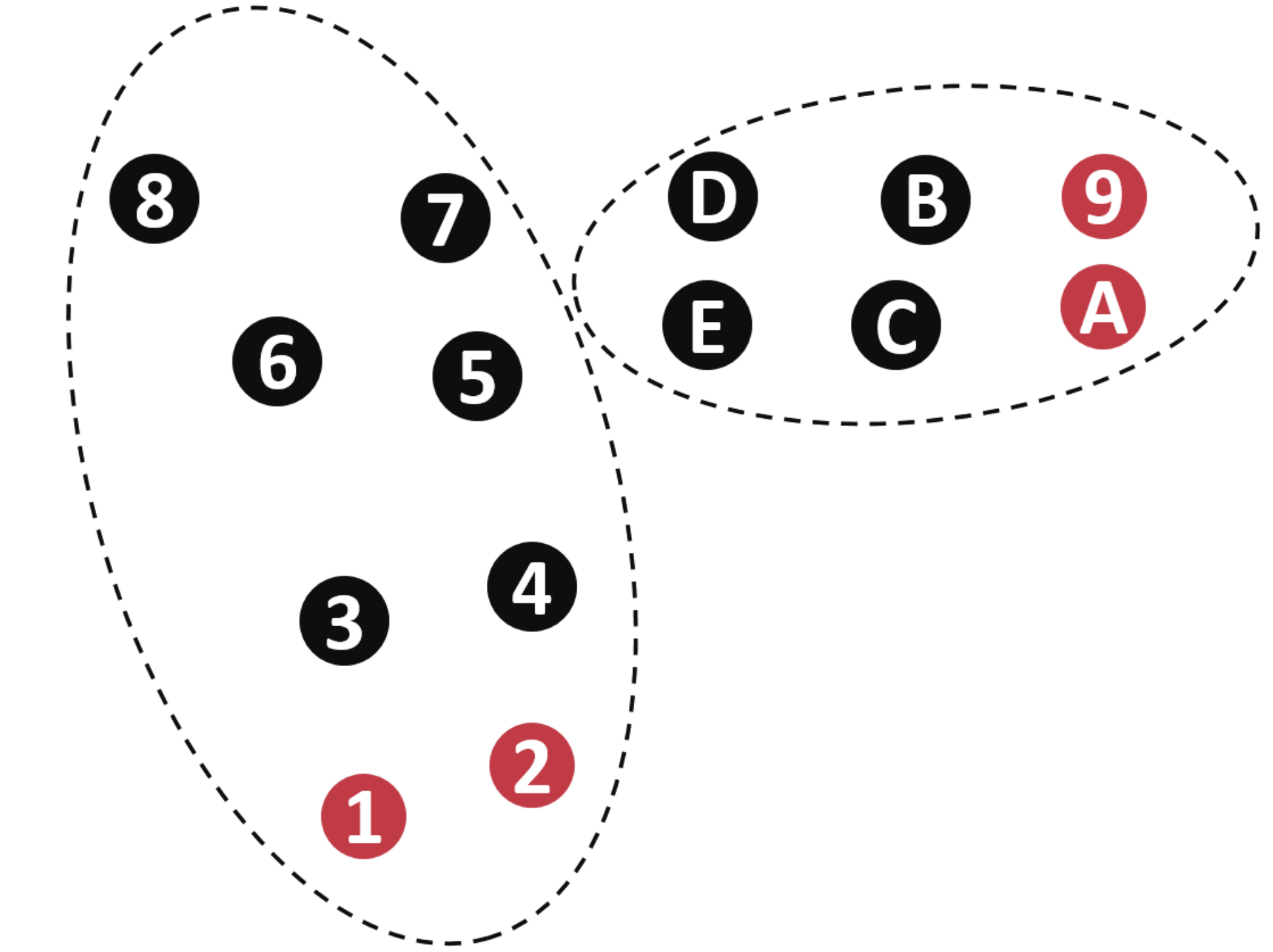}\\
 \caption{Node deployment in Experiment2}\label{distribution3}
\end{figure}

In the WSN of Fig.\ref{distribution3}, we moved the rigid nodes D and E closer to rigid nodes 5 and 7. Table \ref{e2Sequence} lists the blue sequences of the two groups. It can be seen from the table that dual-v-detection is able to detect the localizability of nodes in the meeting area. {  Each node in the meeting area collected the messages from two nodes in the different triangle blocks and then sent a query message to its neighbors}. In this experiment, node E communicated with nodes 7 and 5. Node E inquired of its parent D whether node D could also communicate with node 7 or 5. Node E informed node D of the success of the dual-v-detection process when node D replied with a confirmation message to node E. Then, the rigid nodes in this group sequentially turned on blue LED lights, indicating that these rigid nodes { were} all localizable.

\begin{table}[ht]\footnotesize
\caption{Blue sequences in Experiment 2} 
\centering 
\label{e2Sequence}
{
\begin{tabular}{|c|c|} 
\hline
Left group & 7 and 5, 6 and 4, 3 \\
\hline 
Right group & E and D, C and B \\
\hline
\end{tabular}
}
\end{table}

Fig.\ref{e31} shows the final LED light status of the WSN in Experiment 2. In this figure, the LED light of node 8 is yellow although that of its parent, node 7, is blue. The reason is that the localizability of a node does not help its child to detect its localizability but { instead} helps its ancestors to detect their localizability. Consequently, as node 7 is the parent of node 8, node 8 could not be determined { as} localizable even though node 7 is localizable.

\begin{figure}[ht]
 \centering
 \includegraphics[width=0.22\textwidth, height = 1.5cm]{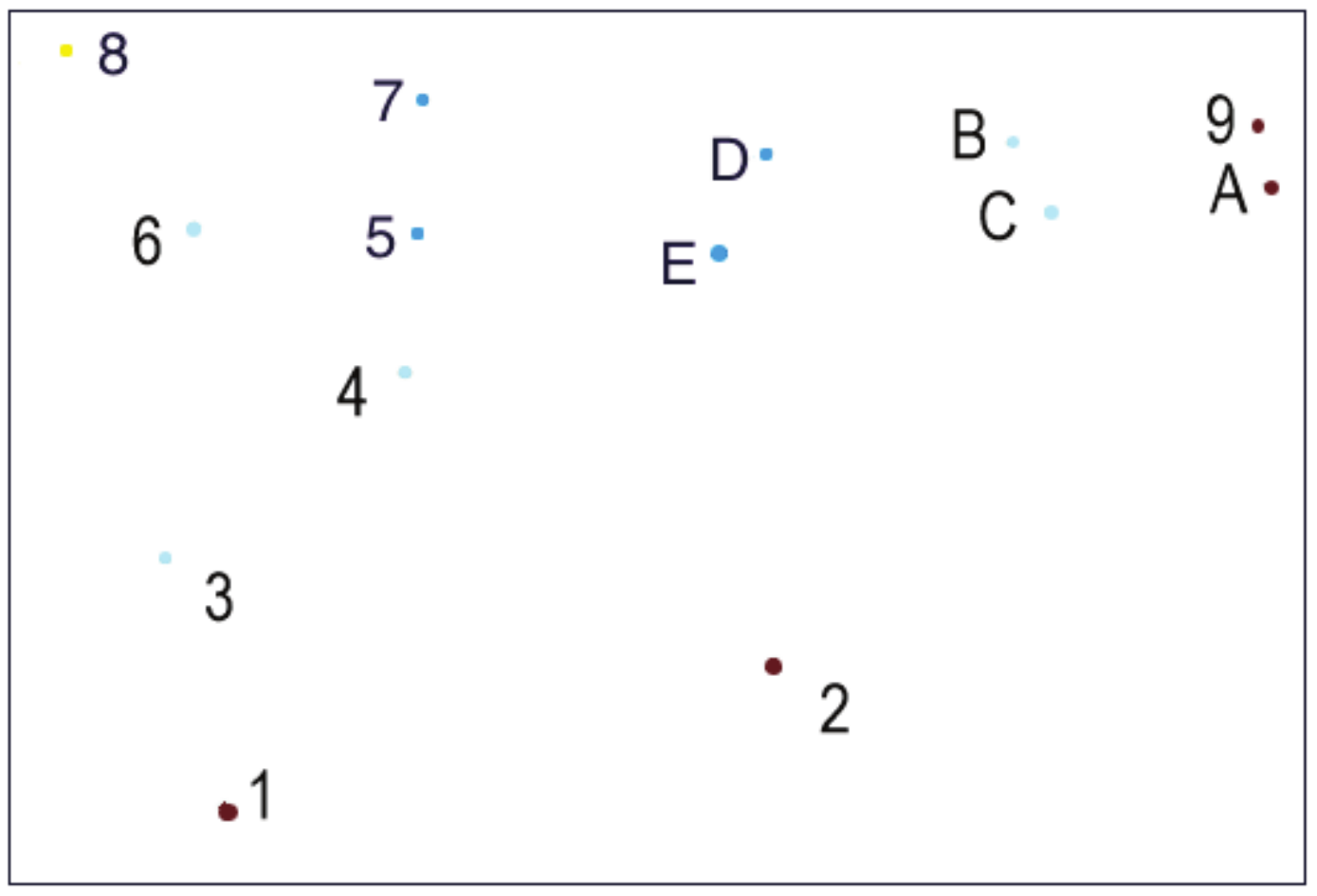}\\
 \caption{Snapshot of Experiment2 }\label{e31}
\end{figure}

\vspace{-0.2cm}
\section{Conclusion and Future Work}

Determining theoretically localizable and non-localizable nodes in a
WSN is important for most localization algorithms and applications.
In this paper, we propose a distributed algorithm, TE, to determine the localizable nodes in a network based on graph rigidity theory. TE uses an efficient approach of triangle extension to construct a rigid graph to detect the localizable nodes and needs less information than the existing algorithms. We theoretically analyzed the efficiency of TE { and compared it to that of} the existing algorithms. Simulations and experiments also demonstrated that TE is applicable to { real-world} WSNs. A promising direction is to integrate TE with localization algorithms.

\appendices

\section{Proof of Lemma \ref{lextension}}

\begin{proof} During an extension operation process, suppose a node, denoted as $v$, $v\notin V_{G}$, is added to $V_{G}$ and edges $(v,v_{1}),(v,v_{2}),(v_{1},v_{2})\in V_{G}$ are added to $E_{G}$. Now a new graph $G_{1}$ is created after the extension. As $G$ is minimally rigid, it has $|E_{G}| = 2|V_{G}|-3$; since $|V_{G_{1}}| =| V_{G}|+1$ and $|E_{G_{1}}| = |E_{G}| +2$, Equation (1) can be { derived}:

{
\centering{~~~~~~~~~~~~~~~~~~~~~~ $|E_{G_{1}}| = 2|V_{G_{1}}|-3~~~~~~~~~~~~~~~~~~~~~~~~~~~~~~~~~~~~~$(1)} \\
}

With equation (1), we now only need to prove the following condition according to Laman's Lemma to prove $G_{1}$ is minimally rigid:
$|E[X]|\leq 2|X|-3$ for each $X\subset V_{G_{1}}$ with $2 \leq |X| \leq |V_{G_{1}}|-1$.

If $v\notin X$, $X \subset V_{G}$, then according to Laman's Lemma, $|E[X]|\leq 2|X|-3$, since $G$ is minimally rigid.

If $v\in X$, we prove by contradiction. Suppose $|E[X]|> 2|X|-3$. We first remove node $i$ from $X$ and up to two relevant edges from $E[X]$. Then, $|E[X]|> 2|X|-3$ is still true. However, this contradicts the given condition that $G$ is minimally rigid and thus $|E[X]|\leq 2|X|-3$, since $X \in V_{G}$ after the removal. Therefore, $G_{1}$ is minimally rigid.  \end{proof}

\vspace{-0.6cm}
\section{Numeric results in simulations}
This section gives the numeric values in our simulation, as mentioned in Section\ref{simulations}.
\begin{table}[!htb]
\caption{Proportion of localizable nodes found by TE} \label{teresult}
\centering 
\resizebox{0.515\textwidth}{!}{
\begin{tabular}{|*{21}{c|}}
\hline
\diagbox{B}{N} & 2.0 & 2.2 & 2.4 & 2.6 & 2.8 & 3.0 & 3.2 & 3.4 & 3.6 & 3.8 & 4.0 & 4.2 & 4.4 & 4.6 & 4.8 & 5.0 & 5.2 & 5.4 & 5.6 & 5.8 \\
\hline
0.01 & 0.666 & 0.332 & 0.994 & 0.665 & 0.332 & 0.329 & 0.549 & 0.636 & 0.000 & 0.000 & 0.168 & 0.000 & 0.000 & 0.000 & 0.000 & 0.000 & 0.000 & 0.000 & 0.000 & 0.000 \\
\hline
0.02 & 0.999 & 0.996 & 0.997 & 0.956 & 0.999 & 0.931 & 0.969 & 0.991 & 0.644 & 0.300 & 0.429 & 0.395 & 0.171 & 0.000 & 0.000 & 0.000 & 0.000 & 0.000 & 0.000 & 0.000 \\
\hline
0.03 & 0.991 & 0.996 & 0.993 & 0.973 & 0.997 & 0.978 & 0.957 & 0.973 & 0.897 & 0.954 & 0.627 & 0.552 & 0.193 & 0.033 & 0.000 & 0.000 & 0.007 & 0.014 & 0.000 & 0.000 \\
\hline
0.04 & 0.999 & 0.949 & 0.997 & 0.996 & 0.953 & 0.993 & 0.973 & 0.976 & 0.957 & 0.818 & 0.910 & 0.543 & 0.296 & 0.166 & 0.025 & 0.016 & 0.008 & 0.000 & 0.000 & 0.000 \\
\hline
0.05 & 0.992 & 0.989 & 0.995 & 0.984 & 0.988 & 0.996 & 0.992 & 0.980 & 0.968 & 0.958 & 0.822 & 0.715 & 0.423 & 0.154 & 0.037 & 0.054 & 0.047 & 0.000 & 0.000 & 0.001 \\
\hline
0.06 & 0.992 & 0.994 & 0.996 & 0.999 & 0.976 & 0.989 & 0.991 & 0.996 & 0.973 & 0.954 & 0.941 & 0.821 & 0.537 & 0.137 & 0.095 & 0.082 & 0.016 & 0.020 & 0.002 & 0.000 \\
\hline
0.07 & 0.988 & 0.992 & 0.994 & 0.990 & 0.996 & 0.969 & 0.993 & 0.987 & 0.963 & 0.945 & 0.884 & 0.800 & 0.686 & 0.415 & 0.165 & 0.116 & 0.018 & 0.000 & 0.000 & 0.014 \\
\hline
0.08 & 0.998 & 0.998 & 0.989 & 0.989 & 0.992 & 0.973 & 0.986 & 0.979 & 0.988 & 0.964 & 0.940 & 0.880 & 0.717 & 0.375 & 0.254 & 0.037 & 0.043 & 0.039 & 0.004 & 0.005 \\
\hline
0.09 & 1.000 & 0.994 & 0.997 & 0.988 & 0.993 & 0.989 & 0.986 & 0.978 & 0.989 & 0.984 & 0.908 & 0.853 & 0.750 & 0.483 & 0.210 & 0.117 & 0.046 & 0.070 & 0.016 & 0.005 \\
\hline
0.1 & 0.988 & 0.994 & 0.990 & 0.996 & 0.994 & 0.996 & 0.992 & 0.996 & 0.970 & 0.972 & 0.944 & 0.854 & 0.722 & 0.523 & 0.318 & 0.125 & 0.074 & 0.028 & 0.020 & 0.004 \\
\hline
0.11 & 0.987 & 0.992 & 0.993 & 0.993 & 0.998 & 0.975 & 0.996 & 0.985 & 0.988 & 0.978 & 0.917 & 0.874 & 0.707 & 0.538 & 0.266 & 0.141 & 0.075 & 0.052 & 0.043 & 0.024 \\
\hline
0.12 & 0.995 & 0.997 & 0.995 & 0.998 & 0.993 & 0.994 & 0.988 & 0.986 & 0.980 & 0.973 & 0.964 & 0.906 & 0.826 & 0.630 & 0.371 & 0.296 & 0.132 & 0.044 & 0.031 & 0.006 \\
\hline
0.13 & 0.995 & 0.995 & 0.997 & 0.997 & 0.995 & 0.988 & 0.989 & 0.969 & 0.988 & 0.969 & 0.972 & 0.856 & 0.732 & 0.712 & 0.314 & 0.223 & 0.106 & 0.123 & 0.037 & 0.023 \\
\hline
0.14 & 0.999 & 0.996 & 0.996 & 0.999 & 0.987 & 0.997 & 0.995 & 0.996 & 0.990 & 0.997 & 0.962 & 0.912 & 0.830 & 0.621 & 0.459 & 0.303 & 0.102 & 0.058 & 0.038 & 0.033 \\
\hline
0.15 & 0.992 & 0.994 & 0.995 & 0.998 & 0.998 & 0.990 & 0.993 & 0.997 & 0.988 & 0.978 & 0.950 & 0.919 & 0.855 & 0.742 & 0.383 & 0.289 & 0.165 & 0.090 & 0.052 & 0.052 \\
\hline
0.16 & 0.994 & 0.998 & 1.000 & 0.996 & 0.999 & 0.985 & 0.998 & 0.985 & 0.979 & 0.974 & 0.968 & 0.922 & 0.819 & 0.699 & 0.434 & 0.311 & 0.247 & 0.150 & 0.045 & 0.051 \\
\hline
0.17 & 0.994 & 1.000 & 0.996 & 0.995 & 0.995 & 0.998 & 0.995 & 0.989 & 0.981 & 0.990 & 0.971 & 0.925 & 0.856 & 0.690 & 0.453 & 0.339 & 0.247 & 0.119 & 0.047 & 0.034 \\
\hline
0.18 & 0.998 & 0.998 & 0.997 & 0.997 & 0.990 & 0.998 & 0.981 & 0.996 & 0.990 & 0.977 & 0.958 & 0.918 & 0.831 & 0.768 & 0.557 & 0.409 & 0.169 & 0.189 & 0.078 & 0.042 \\
\hline
0.19 & 0.999 & 0.999 & 1.000 & 1.000 & 0.999 & 0.999 & 0.996 & 0.995 & 0.989 & 0.965 & 0.962 & 0.954 & 0.853 & 0.711 & 0.565 & 0.403 & 0.259 & 0.154 & 0.069 & 0.100 \\
\hline
0.2 & 0.998 & 0.995 & 0.998 & 0.999 & 0.999 & 0.996 & 0.997 & 0.995 & 0.993 & 0.980 & 0.976 & 0.940 & 0.876 & 0.748 & 0.599 & 0.424 & 0.256 & 0.147 & 0.118 & 0.059 \\
\hline
\end{tabular}
}
\end{table}
\vspace{-0.4cm}
\begin{table}[!htbp]
\caption{Proportion of localizable nodes found by TP} 
\centering 
\label{tpresult}
\resizebox{0.515\textwidth}{!}{
\begin{tabular}{|*{21}{c|}}
\hline
\diagbox{B}{N} & 2.0 & 2.2 & 2.4 & 2.6 & 2.8 & 3.0 & 3.2 & 3.4 & 3.6 & 3.8 & 4.0 & 4.2 & 4.4 & 4.6 & 4.8 & 5.0 & 5.2 & 5.4 & 5.6 & 5.8 \\
\hline
0.01 & 0.000 & 0.000 & 0.000 & 0.000 & 0.000 & 0.000 & 0.000 & 0.000 & 0.000 & 0.000 & 0.000 & 0.000 & 0.000 & 0.000 & 0.000 & 0.000 & 0.000 & 0.000 & 0.000 & 0.000 \\
\hline
0.02 & 0.333 & 0.333 & 0.000 & 0.000 & 0.000 & 0.000 & 0.000 & 0.000 & 0.000 & 0.000 & 0.000 & 0.000 & 0.000 & 0.000 & 0.000 & 0.000 & 0.000 & 0.000 & 0.000 & 0.000 \\
\hline
0.03 & 0.000 & 0.333 & 0.332 & 0.333 & 0.000 & 0.000 & 0.329 & 0.000 & 0.000 & 0.000 & 0.000 & 0.000 & 0.000 & 0.000 & 0.000 & 0.000 & 0.000 & 0.000 & 0.000 & 0.000 \\
\hline
0.04 & 1.000 & 1.000 & 0.333 & 0.000 & 0.000 & 0.000 & 0.272 & 0.194 & 0.000 & 0.000 & 0.000 & 0.000 & 0.000 & 0.000 & 0.000 & 0.000 & 0.000 & 0.000 & 0.000 & 0.000 \\
\hline
0.05 & 1.000 & 1.000 & 1.000 & 1.000 & 1.000 & 0.332 & 0.650 & 0.061 & 0.000 & 0.001 & 0.000 & 0.000 & 0.000 & 0.000 & 0.000 & 0.000 & 0.000 & 0.000 & 0.000 & 0.000 \\
\hline
0.06 & 1.000 & 1.000 & 1.000 & 0.667 & 0.998 & 0.998 & 0.326 & 0.515 & 0.161 & 0.000 & 0.000 & 0.004 & 0.000 & 0.000 & 0.000 & 0.000 & 0.000 & 0.000 & 0.000 & 0.000 \\
\hline
0.07 & 1.000 & 1.000 & 1.000 & 0.999 & 1.000 & 0.329 & 0.658 & 0.731 & 0.050 & 0.030 & 0.012 & 0.002 & 0.005 & 0.000 & 0.000 & 0.000 & 0.000 & 0.000 & 0.000 & 0.000 \\
\hline
0.08 & 1.000 & 1.000 & 0.999 & 1.000 & 0.999 & 0.998 & 0.989 & 0.289 & 0.189 & 0.021 & 0.000 & 0.000 & 0.000 & 0.000 & 0.000 & 0.000 & 0.003 & 0.000 & 0.000 & 0.000 \\
\hline
0.09 & 1.000 & 1.000 & 1.000 & 1.000 & 1.000 & 0.998 & 0.990 & 0.781 & 0.279 & 0.000 & 0.025 & 0.002 & 0.013 & 0.000 & 0.000 & 0.000 & 0.000 & 0.000 & 0.000 & 0.000 \\
\hline
0.1 & 1.000 & 1.000 & 1.000 & 1.000 & 0.999 & 0.995 & 0.986 & 0.866 & 0.495 & 0.130 & 0.042 & 0.014 & 0.002 & 0.003 & 0.001 & 0.001 & 0.000 & 0.000 & 0.000 & 0.000 \\
\hline
0.11 & 1.000 & 1.000 & 1.000 & 1.000 & 1.000 & 0.999 & 0.991 & 0.978 & 0.568 & 0.111 & 0.021 & 0.016 & 0.011 & 0.004 & 0.002 & 0.000 & 0.000 & 0.000 & 0.000 & 0.000 \\
\hline
0.12 & 1.000 & 1.000 & 1.000 & 1.000 & 1.000 & 0.996 & 0.989 & 0.906 & 0.697 & 0.137 & 0.053 & 0.020 & 0.004 & 0.001 & 0.001 & 0.001 & 0.000 & 0.004 & 0.001 & 0.000 \\
\hline
0.13 & 1.000 & 1.000 & 1.000 & 0.998 & 0.999 & 0.997 & 0.994 & 0.953 & 0.678 & 0.261 & 0.102 & 0.043 & 0.010 & 0.005 & 0.004 & 0.003 & 0.003 & 0.000 & 0.001 & 0.000 \\
\hline
0.14 & 1.000 & 1.000 & 1.000 & 1.000 & 0.999 & 0.997 & 0.996 & 0.979 & 0.801 & 0.329 & 0.130 & 0.047 & 0.008 & 0.007 & 0.003 & 0.000 & 0.002 & 0.002 & 0.000 & 0.000 \\
\hline
0.15 & 1.000 & 1.000 & 1.000 & 1.000 & 0.999 & 0.996 & 0.987 & 0.963 & 0.825 & 0.370 & 0.125 & 0.070 & 0.040 & 0.023 & 0.004 & 0.002 & 0.002 & 0.002 & 0.002 & 0.003 \\
\hline
0.16 & 1.000 & 1.000 & 1.000 & 1.000 & 0.999 & 0.999 & 0.993 & 0.954 & 0.827 & 0.365 & 0.162 & 0.123 & 0.029 & 0.025 & 0.010 & 0.002 & 0.001 & 0.003 & 0.001 & 0.000 \\
\hline
0.17 & 1.000 & 1.000 & 1.000 & 1.000 & 0.998 & 0.999 & 0.989 & 0.945 & 0.769 & 0.367 & 0.252 & 0.101 & 0.049 & 0.017 & 0.016 & 0.013 & 0.006 & 0.003 & 0.003 & 0.000 \\
\hline
0.18 & 1.000 & 1.000 & 1.000 & 0.999 & 1.000 & 0.994 & 0.989 & 0.968 & 0.771 & 0.646 & 0.292 & 0.081 & 0.030 & 0.019 & 0.014 & 0.001 & 0.001 & 0.004 & 0.002 & 0.002 \\
\hline
0.19 & 1.000 & 1.000 & 0.999 & 1.000 & 0.999 & 1.000 & 0.985 & 0.969 & 0.903 & 0.714 & 0.271 & 0.119 & 0.044 & 0.029 & 0.017 & 0.004 & 0.007 & 0.001 & 0.004 & 0.001 \\
\hline
0.2 & 1.000 & 1.000 & 1.000 & 1.000 & 0.999 & 0.997 & 0.986 & 0.978 & 0.893 & 0.655 & 0.308 & 0.108 & 0.062 & 0.028 & 0.009 & 0.009 & 0.006 & 0.002 & 0.002 & 0.005 \\
\hline
\end{tabular}
}
\end{table}

\vspace{-0.4cm}
\begin{table}[!bp]
\caption{Proportion of localizable nodes found by WE} 
\centering 
\label{weresult}
\resizebox{0.515\textwidth}{!}{
\vspace*{-\baselineskip}
\begin{tabular}{|*{21}{c|}}
\hline
\diagbox{B}{N} & 2.0 & 2.2 & 2.4 & 2.6 & 2.8 & 3.0 & 3.2 & 3.4 & 3.6 & 3.8 & 4.0 & 4.2 & 4.4 & 4.6 & 4.8 & 5.0 & 5.2 & 5.4 & 5.6 & 5.8 \\
\hline
0.01 & 0.000 & 0.000 & 0.000 & 0.000 & 0.000 & 0.000 & 0.000 & 0.000 & 0.000 & 0.000 & 0.000 & 0.000 & 0.000 & 0.000 & 0.000 & 0.000 & 0.000 & 0.000 & 0.000 & 0.000 \\
\hline
0.02 & 0.667 & 0.333 & 0.000 & 0.000 & 0.000 & 0.000 & 0.097 & 0.000 & 0.000 & 0.000 & 0.000 & 0.008 & 0.000 & 0.000 & 0.000 & 0.000 & 0.000 & 0.000 & 0.000 & 0.000 \\
\hline
0.03 & 0.336 & 0.667 & 0.155 & 0.333 & 0.199 & 0.000 & 0.033 & 0.025 & 0.000 & 0.000 & 0.000 & 0.000 & 0.000 & 0.000 & 0.000 & 0.000 & 0.000 & 0.000 & 0.000 & 0.000 \\
\hline
0.04 & 1.000 & 1.000 & 0.661 & 0.353 & 0.077 & 0.284 & 0.070 & 0.048 & 0.027 & 0.000 & 0.000 & 0.000 & 0.000 & 0.000 & 0.005 & 0.000 & 0.000 & 0.000 & 0.000 & 0.000 \\
\hline
0.05 & 1.000 & 1.000 & 0.931 & 0.997 & 0.877 & 0.527 & 0.220 & 0.138 & 0.000 & 0.000 & 0.016 & 0.023 & 0.000 & 0.000 & 0.000 & 0.000 & 0.012 & 0.002 & 0.000 & 0.000 \\
\hline
0.06 & 1.000 & 1.000 & 1.000 & 0.995 & 0.952 & 0.443 & 0.390 & 0.086 & 0.069 & 0.064 & 0.031 & 0.023 & 0.000 & 0.009 & 0.004 & 0.000 & 0.008 & 0.003 & 0.003 & 0.000 \\
\hline
0.07 & 1.000 & 1.000 & 1.000 & 1.000 & 0.997 & 0.646 & 0.377 & 0.294 & 0.177 & 0.081 & 0.062 & 0.074 & 0.011 & 0.014 & 0.014 & 0.025 & 0.000 & 0.000 & 0.004 & 0.004 \\
\hline
0.08 & 1.000 & 1.000 & 1.000 & 1.000 & 0.990 & 0.987 & 0.571 & 0.239 & 0.267 & 0.139 & 0.038 & 0.050 & 0.025 & 0.044 & 0.014 & 0.002 & 0.013 & 0.000 & 0.000 & 0.000 \\
\hline
0.09 & 1.000 & 1.000 & 1.000 & 1.000 & 0.986 & 0.931 & 0.435 & 0.416 & 0.285 & 0.100 & 0.122 & 0.043 & 0.055 & 0.028 & 0.026 & 0.002 & 0.019 & 0.009 & 0.004 & 0.005 \\
\hline
0.1 & 1.000 & 1.000 & 1.000 & 0.999 & 0.999 & 0.959 & 0.744 & 0.637 & 0.247 & 0.319 & 0.169 & 0.055 & 0.095 & 0.073 & 0.014 & 0.024 & 0.009 & 0.005 & 0.010 & 0.004 \\
\hline
0.11 & 1.000 & 1.000 & 1.000 & 1.000 & 0.999 & 0.981 & 0.884 & 0.630 & 0.400 & 0.170 & 0.145 & 0.085 & 0.083 & 0.072 & 0.018 & 0.021 & 0.003 & 0.010 & 0.008 & 0.004 \\
\hline
0.12 & 1.000 & 1.000 & 1.000 & 1.000 & 0.996 & 0.975 & 0.826 & 0.680 & 0.420 & 0.285 & 0.262 & 0.113 & 0.131 & 0.048 & 0.051 & 0.022 & 0.013 & 0.015 & 0.018 & 0.015 \\
\hline
0.13 & 1.000 & 1.000 & 1.000 & 1.000 & 1.000 & 0.989 & 0.842 & 0.716 & 0.637 & 0.488 & 0.258 & 0.187 & 0.119 & 0.055 & 0.039 & 0.032 & 0.035 & 0.007 & 0.017 & 0.007 \\
\hline
0.14 & 1.000 & 1.000 & 0.999 & 1.000 & 0.999 & 0.963 & 0.939 & 0.818 & 0.676 & 0.513 & 0.321 & 0.221 & 0.141 & 0.089 & 0.081 & 0.045 & 0.037 & 0.032 & 0.004 & 0.018 \\
\hline
0.15 & 1.000 & 1.000 & 1.000 & 1.000 & 1.000 & 0.995 & 0.946 & 0.844 & 0.757 & 0.537 & 0.396 & 0.291 & 0.169 & 0.121 & 0.109 & 0.111 & 0.023 & 0.046 & 0.017 & 0.013 \\
\hline
0.16 & 1.000 & 1.000 & 0.999 & 1.000 & 1.000 & 0.998 & 0.947 & 0.785 & 0.694 & 0.541 & 0.453 & 0.266 & 0.145 & 0.133 & 0.100 & 0.054 & 0.057 & 0.058 & 0.024 & 0.022 \\
\hline
0.17 & 1.000 & 1.000 & 1.000 & 0.999 & 0.999 & 0.989 & 0.954 & 0.854 & 0.738 & 0.658 & 0.442 & 0.418 & 0.222 & 0.117 & 0.090 & 0.104 & 0.062 & 0.040 & 0.039 & 0.015 \\
\hline
0.18 & 1.000 & 1.000 & 1.000 & 1.000 & 1.000 & 0.984 & 0.994 & 0.892 & 0.777 & 0.616 & 0.474 & 0.348 & 0.222 & 0.150 & 0.146 & 0.083 & 0.032 & 0.073 & 0.033 & 0.032 \\
\hline
0.19 & 1.000 & 1.000 & 1.000 & 0.999 & 0.999 & 0.999 & 0.984 & 0.987 & 0.870 & 0.746 & 0.540 & 0.381 & 0.281 & 0.182 & 0.172 & 0.098 & 0.114 & 0.057 & 0.043 & 0.048 \\
\hline
0.2 & 1.000 & 1.000 & 1.000 & 1.000 & 1.000 & 0.998 & 0.994 & 0.952 & 0.710 & 0.795 & 0.527 & 0.426 & 0.311 & 0.279 & 0.166 & 0.152 & 0.083 & 0.065 & 0.053 & 0.041 \\
\hline
\end{tabular}
}
\end{table}

\ifCLASSOPTIONcaptionsoff
 \newpage
\fi



%
\ifCLASSOPTIONcompsoc
  \section*{Acknowledgments}
\else
  \section*{Acknowledgment}
\fi
This work was supported primarily by the National Natural Science
Foundation of China (NSFC) (Grant No. 61672552). This work was also partially supported by the following NSFC grants: Grant No. 61472452, 61772565, 61472453, 61472453, and the Science and Technology Program of Guangzhou City of China (No. 201707010194).

\vspace*{-40pt}
\begin{IEEEbiography}[{\includegraphics[width=1in,height=1.25in,clip,keepaspectratio]{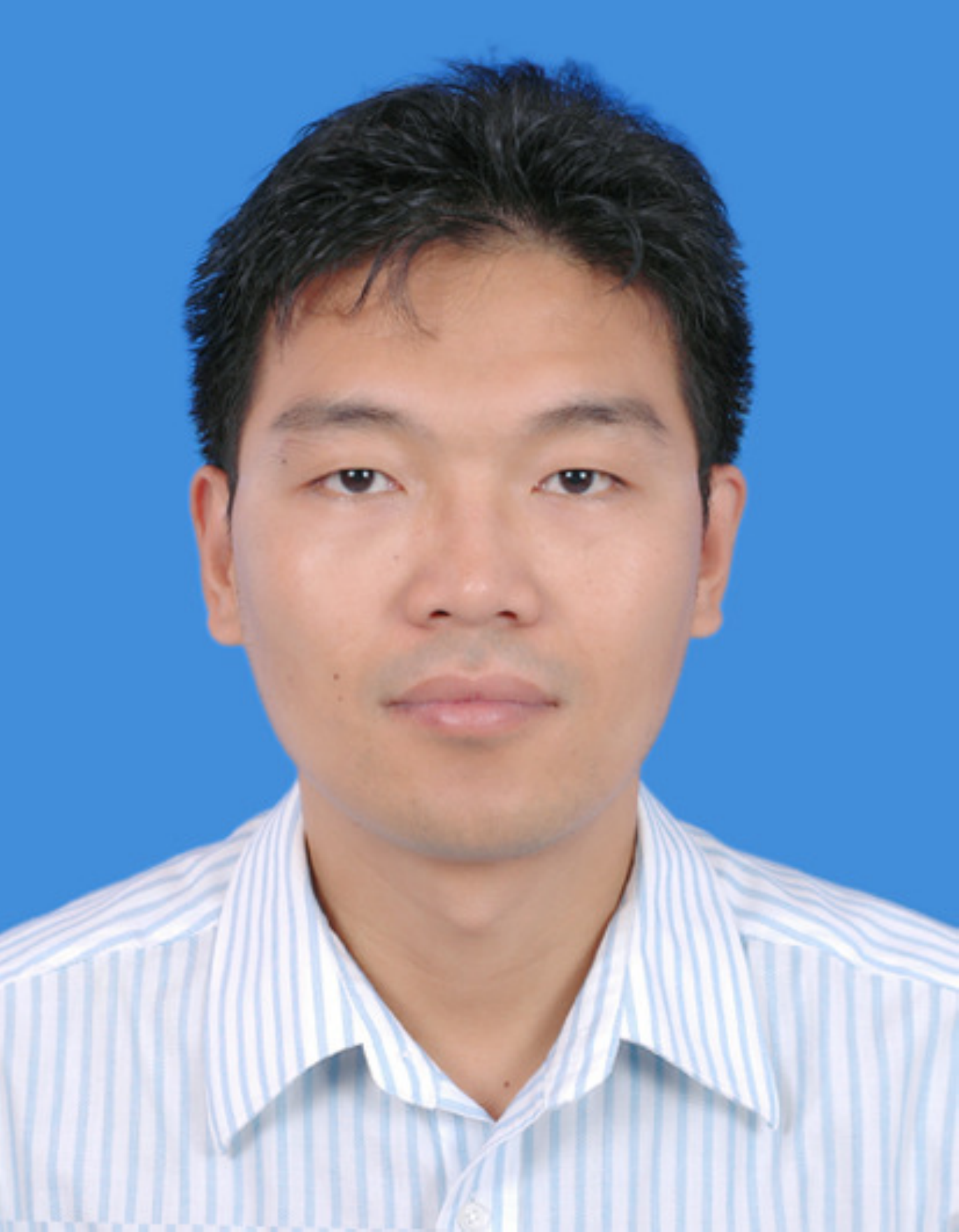}}]{Hejun Wu} received his Ph.D. degree in Computer Science and Engineering from Hong Kong University of Science and Technology in 2008. He is currently an Associate Professor with the School of Data and Computer Science, Sun Yat-sen University, Guangzhou, China. His main research interests include Wireless Sensor Networks and Distributed Computing. He is a member of the IEEE and ACM. 
\end{IEEEbiography}
\vspace*{-40pt}
\begin{IEEEbiography}[{\includegraphics[width=1in,height=1.25in,clip,keepaspectratio]{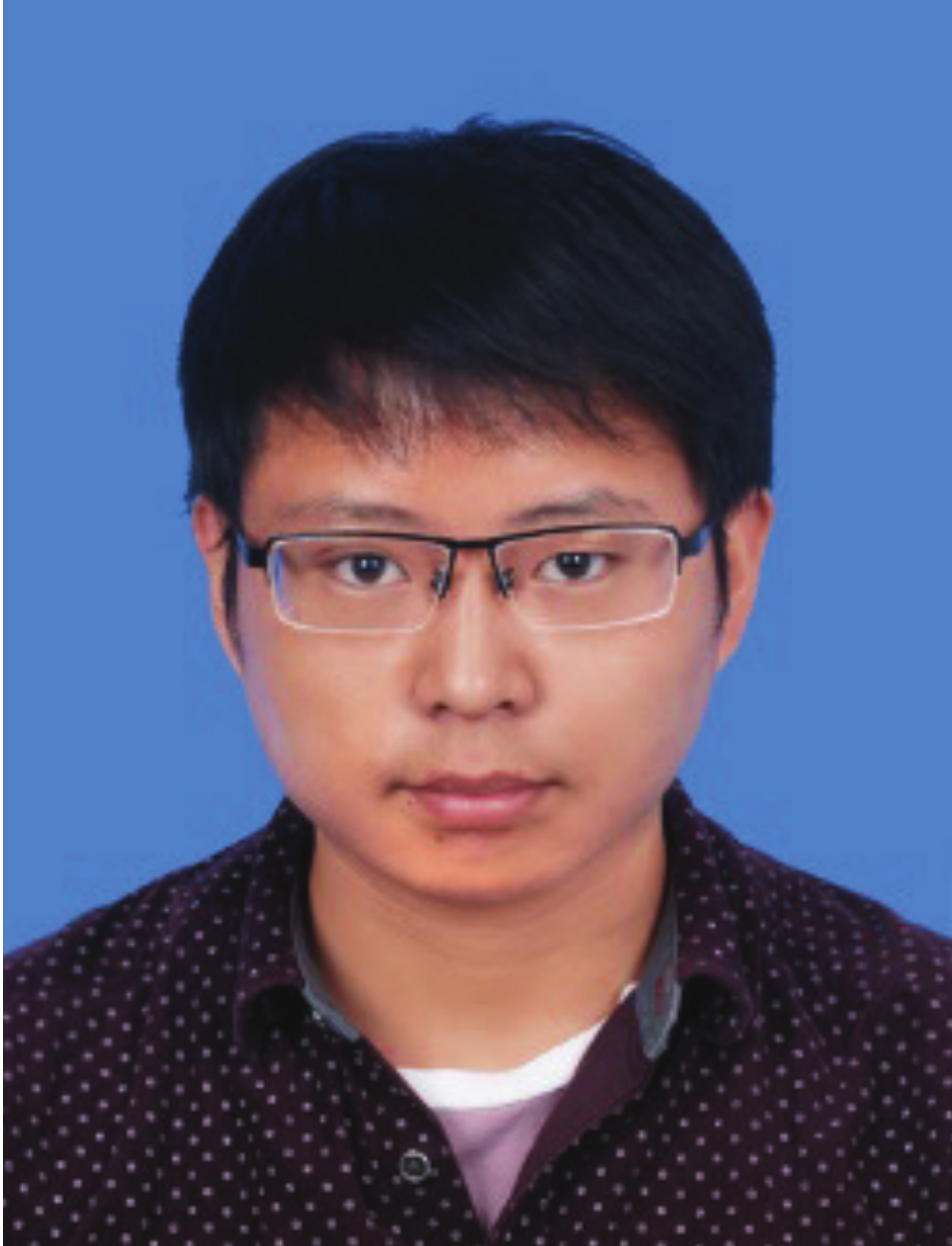}}]
{Ao Ding} received the B.S degree in computer science from Anhui University in 2014 and the M.Phil degree in computer science from Sun Yat-Sen University in 2017. His research interests include wireless sensor networks, distributed computing, and machine learning. 
\end{IEEEbiography}
\vspace*{-40pt}
\begin{IEEEbiography}[{\includegraphics[width=1in,height=1.25in,clip,keepaspectratio]{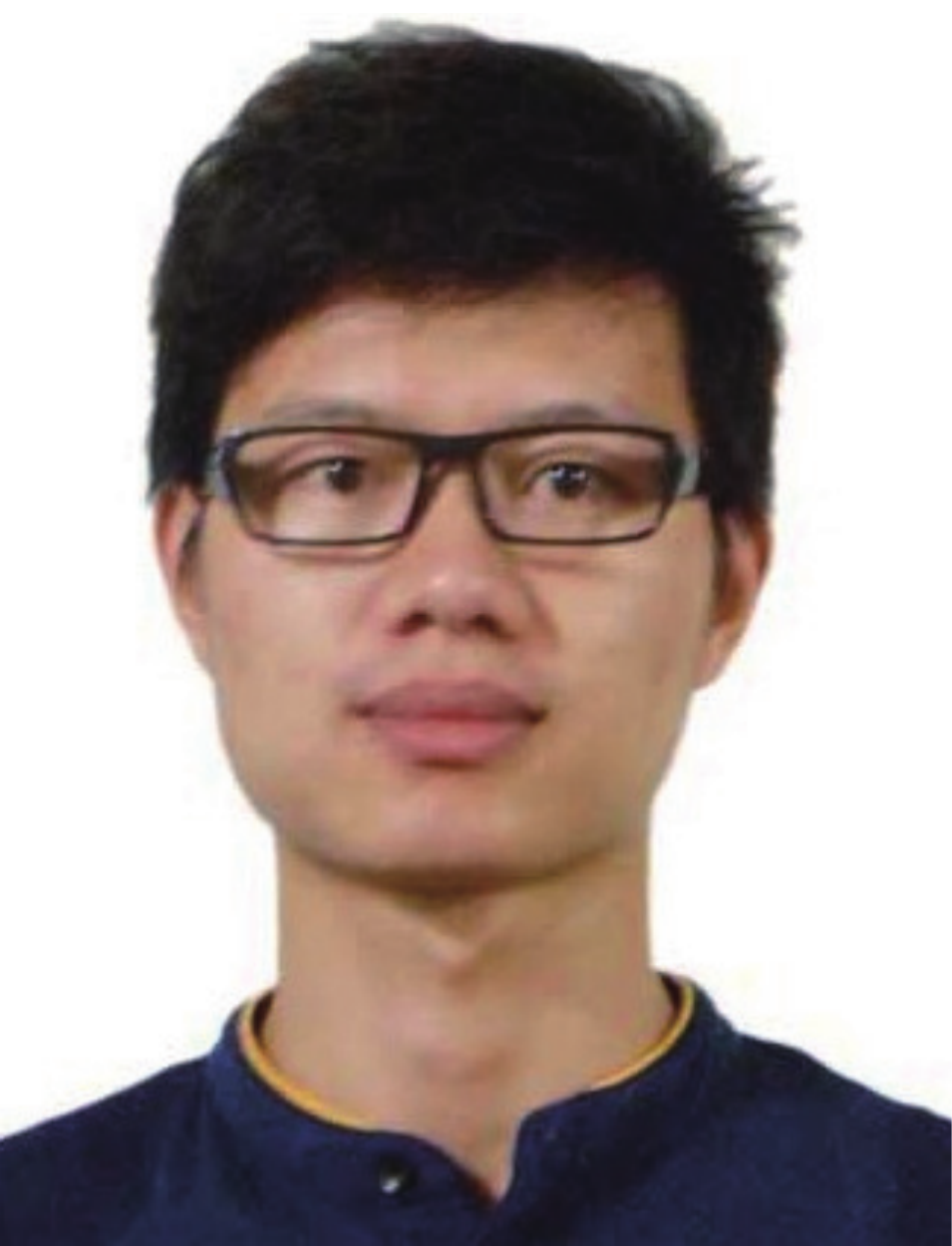}}]
{Weiwei Liu} received the B.S degree in computer science from Southeast University in 2011 and the M.Phil degree from Sun Yat-Sen University in 2014. He is currently with Horizon Robotics as a system engineer and major in automatic driving system development, including system design, HD map generating and vehicle trajectory planning.  His research interests includes nature language processing, trajectory planning and ad-hoc network localization.
 \end{IEEEbiography}
 \vspace*{-40pt}
\begin{IEEEbiography}[{\includegraphics[width=1in,height=1.25in,clip,keepaspectratio]{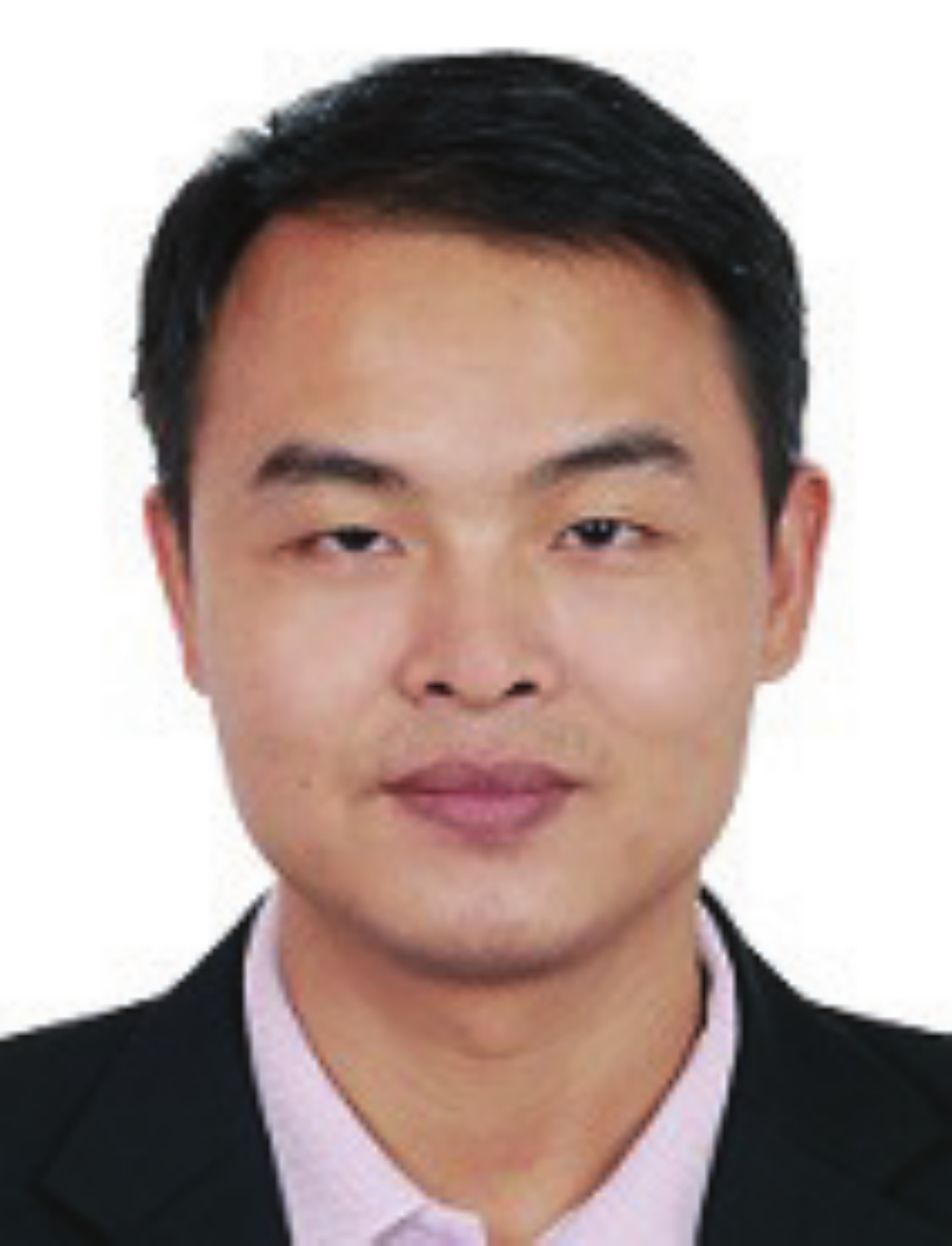}}] {Lvzhou Li} Lvzhou Li received the Ph.D. degree in computer science from Sun Yat-Sen University, Guangzhou, China, in 2009.\\
He is currently an Associate Professor with the School of Data and Computer Science, Sun Yat-sen University, Guangzhou, China. His current research interests include theoretical computer science and quantum computing.
\end{IEEEbiography}
\vspace*{-40pt}
\begin{IEEEbiography}[{\includegraphics[width=1in,height=1.25in,clip,keepaspectratio]{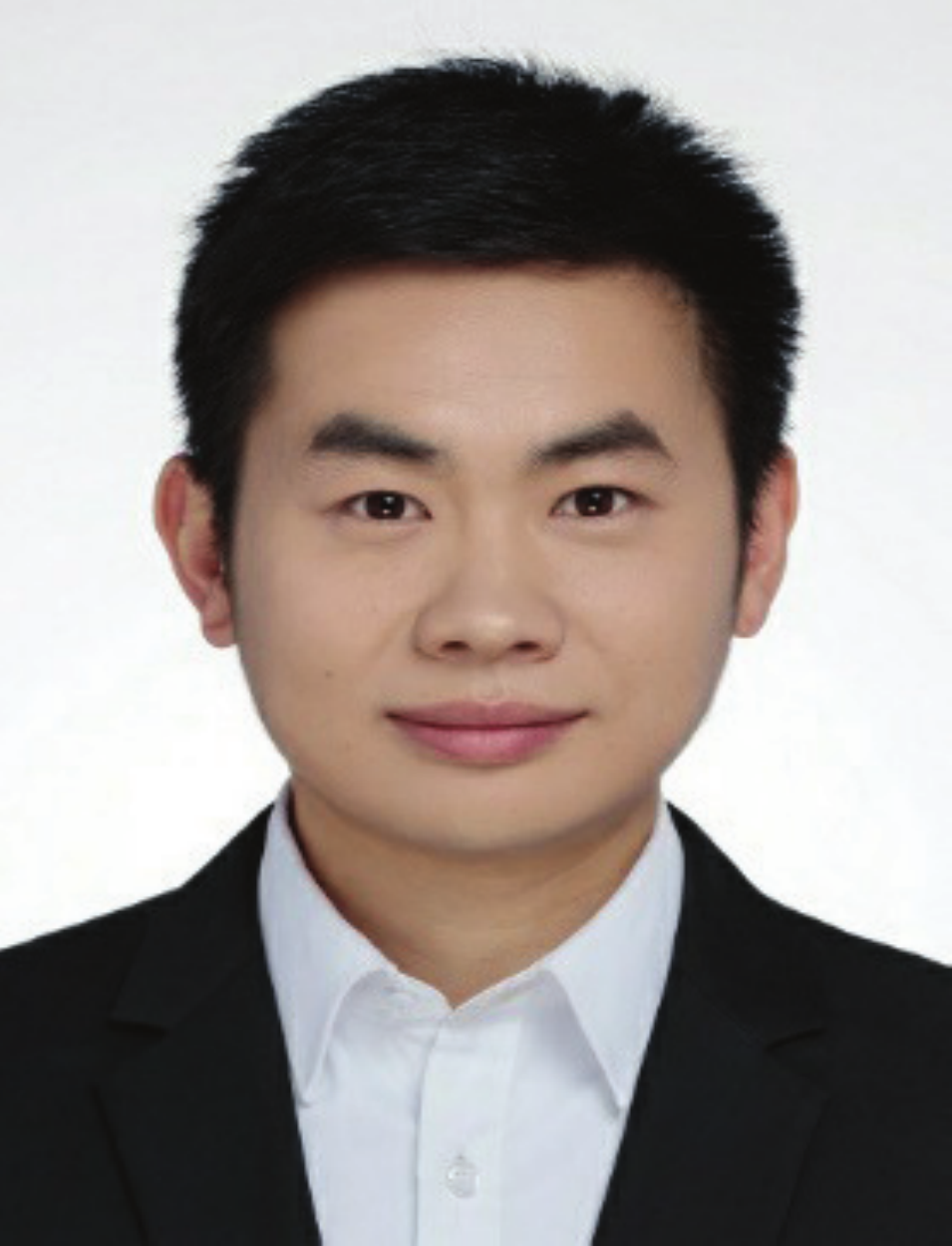}}] 
{Zheng Yang (M'11)} received the B.E. degree in computer science from Tsinghua University in 2006 and the Ph.D. degree in computer science from Hong Kong University of Science and Technology in 2010. He is currently an associate professor with Tsinghua University. His main research interests include wireless ad-hoc/sensor networks and mobile computing. He is a member of the IEEE and ACM.
\end{IEEEbiography}

\end{document}